\algrenewcommand\algorithmicindent{0.4em}
\declaretheorem[name=Theorem]{thm}
\declaretheorem[name=Lemma]{lemma}
\DeclareMathOperator*{\argmax}{argmax}
\title{Flexible variable selection in the presence of missing data}
\author[1,2]{Brian~D. Williamson}
\author[2,3]{Ying Huang}
\affil[1]{Biostatistics Division, Kaiser Permanente Washington Health Research Institute}
\affil[2]{Vaccine and Infectious Disease Division, Fred Hutchinson Cancer Center}
\affil[3]{Department of Biostatistics, University of Washington}
\begin{document}

\maketitle

\begin{abstract}

  In many applications, it is of interest to identify a parsimonious set of features, or panel, from multiple candidates that achieves a desired level of performance in predicting a response. This task is often complicated in practice by missing data arising from the sampling design or other random mechanisms. Most recent work on variable selection in missing data contexts relies in some part on a finite-dimensional statistical model, e.g., a generalized or penalized linear model. In cases where this model is misspecified, the selected variables may not all be truly scientifically relevant and can result in panels with suboptimal classification performance. To address this limitation, we propose a nonparametric variable selection algorithm combined with multiple imputation to develop flexible panels in the presence of missing-at-random data. We outline strategies based on the proposed algorithm that achieve control of commonly used error rates. Through simulations, we show that our proposal has good operating characteristics and results in panels with higher classification and variable selection performance compared to several existing penalized regression approaches in cases where a generalized linear model is misspecified. Finally, we use the proposed method to develop biomarker panels for separating pancreatic cysts with differing malignancy potential in a setting where complicated missingness in the biomarkers arose due to limited specimen volumes.

  \begin{center}{\small \textbf{Keywords:} variable selection; missing data; machine learning; nonparametric statistics; multiple imputation; variable importance.}\end{center}
\end{abstract}

\doublespacing

\section{Introduction}\label{sec:intro}

Missing data present a common challenge in many scientific domains. This challenge is compounded if a goal of the analysis is to identify a parsimonious set of features that are related to the response, a notion that has been referred to as variable selection. Many existing approaches to variable selection in missing-data contexts rely in some part on a finite-dimensional statistical model, including generalized linear models \citep[see, e.g.,][]{little1985,long2015,liu2019}. While variable selection based on generalized linear models has been shown to perform well in many cases, recovering the true set of important variables and selecting few unimportant variables, model misspecification or correlated features may impact the performance of these methods \citep{bang2005}. This motivates the consideration of approaches to variable selection with missing data that are more robust to model misspeficiation. These approaches should incorporate flexible algorithms, ensuring that complex relationships between the features and the outcome can be captured reliably.

Traditional approaches to variable selection with missing data can be broadly categorized into two groups. In the first, variable selection methods valid with fully-observed data are adapted to the missing-data paradigm using either likelihood-based methods \citep[see, e.g.,][]{little1985} or inverse probability weighting methods \citep[see, e.g.,][]{tsiatis2007,bang2005,johnson2008,wolfson2012}. These approaches, while useful in many contexts, often are tailored to a specific data-generating distribution or missing data process or can only be used with estimating functions for regression parameters. Additionally, inverse probability weighting is challenging in cases with non-monotone missing data \citep[see, e.g.,][]{sun2018}, limiting its more widespread adoption. The second group of approaches is based on multiple imputation \citep[MI;][]{rubin1987}, and is widely used \citep[see, e.g., ][]{long2015,liu2019}. Among the advantages of MI over other approaches are that imputation is easily done with existing software and the imputation process is disentangled from the variable selection procedure. The imputation process must be specified with care, because methods that rely too heavily on modelling assumptions may still be subject to bias in cases with misspecification. Multiple imputation by chained equations \citep[see, e.g.,][]{vanbuuren2018} allows flexible imputation models to be used in an effort to reduce the risk of imputation model misspecification. Once an imputation procedure has been specified, variable selection methods developed for fully-observed data can be used on the imputed datasets.

Methods for variable selection valid with fully-observed data include the lasso \citep{tibshirani1996} and smoothly clipped absolute deviation \citep[][]{fan2001} and extensions thereof \citep[see, e.g.,][]{meinshausen2010}. These approaches all rely on parametric modeling assumptions. The knockoff procedure \citep{barber2015} has seen recent focus, including towards making the procedure more robust to model misspecification \citep[see, e.g.,][]{candes2016}, but often some level of assumptions are necessary for valid error control or inference \citep[see, e.g.,][]{barber2020}. Other methods have been proposed that generate pseudo-variables for variable selection \citep[see, e.g.,][]{wu2007,boos2009}, similar to knockoffs. Stability selection \citep{meinshausen2010,shah2013} has also been shown to provide error control for lasso-based procedures. However, as noted above, in some contexts model misspecification may result in poor performance of these procedures \citep[see, e.g.,][]{leng2006}, motivating the consideration of alternatives that are not based upon generalized linear models. Additionally, when multiple imputation is used to address missing data, the results from these variable selection approaches must be combined after being applied to each imputed dataset separately \citep{peterson2021}. Often, variables that are selected in some proportion of the imputed datasets are designated in the final set \citep[see, e.g.][]{heymans2007, long2015}. This threshold can be difficult to choose, and applying a possibly misspecified regression model to several datasets can amplify the burden of any misspecification.

In this article, we propose an approach to more flexible variable selection in contexts with missing data. Our proposed approach to variable selection is based upon an algorithm-agnostic definition of intrinsic variable importance \citep{williamson2021}. Intrinsic importance quantifies the population-level prediction potential of features. Importantly, though recent theoretical developments have led to a procedure for doing inference on intrinsic importance \citep{williamson2020c,williamson2021}, making inference on this importance in general missing-data cases and using the importance as part of a variable selection procedure have not been studied. To allow for flexible modeling of the missing data process, we propose that missing data be imputed using multiple imputation by chained equations. Our proposed intrinsic approach to variable selection builds on the Shapley population variable importance measure \citep[][]{williamson2020c} and formally incorporates variability in the imputation process into the variable selection procedure using Rubin's rules \citep{rubin1987}, circumventing the need for post-hoc combination of multiple selected variable sets. This approach results in a single set of variables explicitly selected based on estimated population importance. We provide theoretical results guaranteeing control over several commonly-used error rates, including the generalized family-wise error rate and the false discovery rate \citep[see, e.g.,][]{lehmann2012}.

The remainder of this paper is organized as follows. In Section~\ref{sec:select}, we discuss the connection between intrinsic variable importance and selection and a procedure for selecting an initial set of variables in cases with fully-observed or missing data. In Section~\ref{sec:error-control}, we describe an approach to augmenting this initial set and provide theoretical results guaranteeing control over variable selection error rates. We provide numerical experiments illustrating the use of our proposed approach and detailing its operating characteristics in Section~\ref{sec:sims}. Finally, we select possible important biomarkers for pancreatic cancer early detection in Section~\ref{sec:data}, and provide concluding remarks in Section~\ref{sec:discussion}. All technical details and results from additional simulation studies can be found in the Supplementary Material.

\section{Intrinsic variable selection}\label{sec:select}

\subsection{Data structure and notation}\label{sec:notation}

Suppose that observations $Z_1,\ldots,Z_n$ are drawn independently from data-generating distribution $P_0$ known only to belong to a rich class of distributions $\mathcal{M}$. Suppose further that $Z_i := (Y_i, X_i)$, where $X_i := (X_{i1},\ldots,X_{ip}) \in \mathcal{X} \subseteq \mathbb{R}^p$ is a vector of covariates and $Y_i \in \mathbb{R}$ is the outcome of interest. We refer to the vector $Z := (Y, X)$ as the complete-data unit or the ideal-data unit in cases with no missing data and with missing data, respectively. Let $\Delta_i := (\Delta_{i0}, \ldots, \Delta_{ip}) \in \{0,1\}^{p+1}$ denote a pattern of missing data for the outcome and covariates, where $\Delta_0 = 1$ implies that the outcome is observed and $\Delta_j = 1$ implies that covariate $X_j$ is observed for $j = 1, \ldots, p$. We denote the observed data by $O_1,\ldots,O_n$, where $O_i := (\Delta_i, \Delta_{i0}Y_i, \Delta_{i1}X_{i1}, \ldots, \Delta_{ip}X_{ip})$, and denote the observed data unit by $O := (\Delta, \Delta_0Y, \Delta_1X_1, \ldots, \Delta_p X_p)$. We denote the observed-data distribution, which includes the missing-data mechanism, by $Q_0$.

For each index set $s \subseteq \{1, \dots, p\}$, we consider the class of functions $\mathcal{F}_s := \{f \in \mathcal{F}\,:\, f(u) = f(v) \text{ for all } u, v \in \mathcal{X} \text{ satisfying } u_s = v_s\}$, where $\mathcal{F}$ is a large class of functions. We also consider a scientifically meaningful predictiveness measure $V(f, P)$, where larger values of $V$ are assumed to be better; examples of $V$ include $R^2$ and classification accuracy \citep[see, e.g.,][]{williamson2021}. For each $s \subseteq \{1, \dots, p\}$, we define the predictiveness-maximizing function $f_{0,s} \in \argmax_{f \in \mathcal{F}_s} V(f, P_0)$.

\subsection{Estimating intrinsic variable importance in missing-data settings}\label{sec:intrinsic-select}

To circumvent the need to rely on potentially restrictive parametric modelling assumptions, we can consider an approach to variable selection that is based on intrinsic variable importance. We propose to perform intrinsic variable selection using the Shapley population variable importance measure \citep[SPVIM;][]{williamson2020c}, which we denote by $\psi_0 := \{\psi_{0,j}\}_{j=1}^p$. The ideal-data SPVIM for feature $X_j$ is
\begin{align*}
  \psi_{0,j} := \sum_{s \in \{1,\ldots,p\}\setminus \{j\}} \binom{p - 1}{\lvert s \rvert}^{-1}\frac{1}{p}\{V(f_{0,s\cup j}, P_0) - V(f_{0,s}, P_0)\},
\end{align*}
and quantifies the increase in population prediction potential, as measured by $V$, of including $X_j$ in each possible subset of the remaining features $\{1, \ldots, p\} \setminus \{j\}$. This definition provides a useful dichotomy: if $\psi_{0,j} > 0$, feature $X_j$ has some utility in predicting the outcome in combination with at least one subset of the remaining features; if $\psi_{0,j} = 0$, then feature $X_j$ does not improve population prediction potential if added to any subset of the remaining features. This key fact suggests that estimators of the SPVIM may be used to screen out variables with no intrinsic utility. More formally, for each $j \in \{1, \ldots, p\}$, we define the null hypothesis $H_{0,j}: \psi_{0,j} = 0$. We can then define the following sets of variables:
\begin{align}
  S_0 \equiv S_0(P_0) :=& \ \{j \in \{1, \ldots, p\}: \psi_{0,j} > 0\} \text{ and } \label{eq:active-set}\\
  S_0^c \equiv S_0^c(P_0) :=& \ \{j \in \{1, \ldots, p\}: \psi_{0,j} = 0\}. \label{eq:null-set}
\end{align}
We will refer to $S_0$ as the active set and $S_0^c$ as the null set. The goal of a variable selection procedure can be recast into identifying $S_0$ while ignoring $S_0^c$; these sets and the true ideal-data SPVIM values are all defined relative to the underlying population $P_0$.

Prior to considering missing-data settings, we provide a brief overview of the ideal-data estimation procedure detailed more fully in \citet{williamson2020c}. Since obtaining an estimator $f_{n,s}$ of $f_{0,s}$ for each $s \subseteq \{1, \ldots, p\}$ is generally computationally prohibitive, this estimation procedure is based on sampling a fraction $c$ of all possible subsets of $\{1, \ldots, p\}$. The authors describe the efficient influence function \citep[EIF; see, e.g.,][]{pfanzagl1982} of the ideal-data SPVIM and propose an estimator $\psi_{c,n} := \{\psi_{c,n,j}\}_{j=1}^p$ for each SPVIM based on $K$-fold cross-fitting that is asymptotically efficient in complete-data settings. Under regularity conditions, $n^{1/2}(\psi_{c,n} - \psi_0) \sim N_p(0, \Sigma_0)$, where $\Sigma_0 = E_0 [\phi_0(O)\phi_0(O)^\top]$ and $\phi_0(o)$ is the vector of EIF values for each $j$. We provide the exact conditions (A1)--(A7) in the Supplementary Materials (Section 6.1), but briefly describe them here. The conditions ensure that: estimation of $f_{0,s}$ only contributes to the higher-order behavior of $\psi_{c,n}$, and this contribution is asymptotically negligible; $\psi_{c,n}$ is a consistent estimator of $\psi_0$; and $\Sigma_0$ is based on the EIF. These conditions hold for many common choices of the predictiveness measure $V$ and estimators of $f_{0,s}$ \citep{williamson2021}. While individual algorithms (e.g., generalized linear models or random forests) could be used to obtain estimators of $f_{0,s}$ necessary for estimating $\psi_{0,j}$, we advocate instead for using a Super Learner ensemble of candidate estimators \citep{vanderlaan2007}. The Super Learner ensemble is an implementation of regression stacking that is guaranteed to have risk equal to the risk of the oracle estimator, asymptotically \citep{vanderlaan2007}. The risk of model misspecification can be reduced by specifying a large library of candidate algorithms.

In many cases, including our analysis in Section~\ref{sec:data}, we do not observe the ideal data unit $Z$ but instead observe $O$, where data on covariates, the outcome, or some subset of these are missing. In these cases, a strategy for properly handling these missing data is necessary to perform variable selection and establish control of error rates. Our goal remains to do variable selection based on the ideal-data intrinsic importance described above. 

One strategy involves defining an observed-data intrinsic variable importance measure based on $O$ that identifies the ideal-data intrinsic importance under assumptions on the missing-data process, such as the positivity assumption \citep{bang2005}. However, this strategy is inherently tied to the measure $V$ under consideration, and the assumptions must be carefully specified. For each combination of $V$ and missing-data process, a different EIF must be analytically derived. Additionally, in many cases with non-monotone patterns of missing data, the positivity assumption may not hold.

The strategy that we employ in this manuscript involves multiple imputation due to its potential for flexibility in modeling both monotone and non-monotone missing data patterns. Once an imputation model is determined, $M$ imputed datasets $\tilde{Z}_1$, \ldots, $\tilde{Z}_M$ are created. This imputation model must be sufficiently flexible to reduce the risk of model misspecification. 

We will use MI to do inference on the ideal-data intrinsic importance using Rubin's rules \citep{rubin1987}. Suppose that for each of the $M$ imputed datasets, we have computed SPVIM estimator $\psi_{m,c,n}$ of $\psi_0$ and its corresponding variance estimator $\sigma^2_{m,n}$. Define $\psi_{M,c,n} := M^{-1}\sum_{m=1}^M \psi_{m,c,n}$, $\sigma^2_{M,n} := M^{-1}\sum_{m=1}^M \sigma^2_{m,n}$, $\tau^2_{M,n} := (M - 1)^{-1}\sum_{m=1}^M (\psi_{m,c,n} - \psi_{M,c,n})^2$, and $\omega^2_{M,n} := \sigma^2_{M,n} + (M+1)M^{-1}\tau^2_{M,n}$. Before stating a formal result, we first introduce a regularity condition for the use of Rubin's rules. Below, all expectations are with respect to the full data.
\begin{itemize}
    \item[(A8)] \textit{(consistency of imputations)}
        \begin{itemize}
            \item[(A8a)] $\lim_{M \to \infty} E(\psi_{M,c,n} \mid Z) = \psi_{c,n}$;
            \item[(A8b)] $\lim_{M \to \infty} E(\sigma^2_{M,n} \mid Z) = \sigma^2_n$;
            \item[(A8c)] $\lim_{M \to \infty} E(\tau^2_{M,n} \mid Z) = \lim_{M \to \infty} var(\psi_{M,c,n} \mid Z)$.
        \end{itemize}
\end{itemize}
These conditions are commonly referred to as the essential conditions for proper MI \citep[see, e.g.,][]{rubin1996}, and in turn provide conditions for the approximate asymptotic normality of appropriately centered and scaled version of $\psi_{M,c,n}$. The missing data must be missing completely at random or missing at random \citep{rubin1987}. The following result describes the asymptotic distribution of the imputation-based estimator $\psi_{M,c,n}$.
\begin{lemma}\label{lem:mi-normality}
    Provided that conditions (A1)--(A8) hold and the data are missing at random, then $n^{1/2}(\psi_{M,c,n} - \psi_0)$ is approximately asymptotically normally distributed with consistent variance estimator $\omega^2_{M,n}$.
\end{lemma}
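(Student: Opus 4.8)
The plan is to reduce the statement to the complete-data asymptotic theory for the SPVIM together with the classical argument underlying Rubin's rules. Conditions (A1)--(A7) are exactly those under which, had the ideal data $Z_1,\dots,Z_n$ been observed, the cross-fitted estimator would satisfy $n^{1/2}(\psi_{c,n}-\psi_0)\rightsquigarrow N_p(0,\Sigma_0)$ with $\sigma^2_n$ consistent for the corresponding variances; this is the main result of \citet{williamson2020c}, which I would invoke directly. It then remains to show that the imputation-based point estimator $\psi_{M,c,n}$ and the Rubin variance estimator $\omega^2_{M,n}$ inherit these properties, up to the approximation intrinsic to multiple imputation, and that the missing-at-random assumption makes the relevant conditional limits in (A8) attainable.

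First I would use the exact decomposition
\begin{align*}
n^{1/2}(\psi_{M,c,n}-\psi_0) &= n^{1/2}\{\psi_{M,c,n}-E(\psi_{M,c,n}\mid Z)\}\\
&\quad + n^{1/2}\{E(\psi_{M,c,n}\mid Z)-\psi_{c,n}\} + n^{1/2}(\psi_{c,n}-\psi_0),
\end{align*}
where the conditional expectations are over the imputation draws given the ideal data $Z$. The last term is asymptotically $N_p(0,\Sigma_0)$ by the complete-data result; the middle term is negligible as $M\to\infty$ by (A8a); and for the first term I would use that the $M$ imputed datasets are (approximately) conditionally independent given $Z$, so that $\psi_{M,c,n}$ is an average of conditionally i.i.d.\ summands and a within-imputation central limit argument gives that $\psi_{M,c,n}-E(\psi_{M,c,n}\mid Z)$ is approximately Gaussian with conditional variance $M^{-1}\mathrm{var}(\psi_{1,c,n}\mid Z)$. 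Since, given $Z$, this imputation noise is independent of the ideal-data sampling fluctuation $\psi_{c,n}-\psi_0$, combining the two Gaussian pieces yields approximate asymptotic normality of $n^{1/2}(\psi_{M,c,n}-\psi_0)$.

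Second I would check that $\omega^2_{M,n}=\sigma^2_{M,n}+(M+1)M^{-1}\tau^2_{M,n}$ consistently estimates the sampling variance of $\psi_{M,c,n}$. The within-imputation average $\sigma^2_{M,n}$ targets the complete-data variance by (A8b) together with the complete-data consistency of $\sigma^2_n$; the between-imputation sample variance $\tau^2_{M,n}$ captures the additional variability due to missingness by (A8c); and the factor $1+M^{-1}$ is the usual finite-$M$ correction for the Monte Carlo error of averaging only $M$ imputations. A Slutsky argument then combines the decomposition with these variance limits to give the stated conclusion. The MAR assumption enters through (A8): under MAR a correctly specified imputation model reproduces the conditional law of the missing entries given the observed ones, which is what centers and scales the imputation-based quantities correctly around their ideal-data counterparts and hence makes (A8a)--(A8c) attainable.

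The main obstacle is that the argument involves the two limits $n\to\infty$ and $M\to\infty$ and that Rubin's variance estimator is in general only \emph{approximately} consistent for the true repeated-sampling variance unless an additional congeniality relationship holds between the imputation model and the SPVIM analysis \citep[cf.][]{rubin1996}; this is precisely why the conclusion is phrased as approximate asymptotic normality rather than a clean weak-convergence statement. Rather than attempt an exact double-limit theorem, I would follow the standard development of proper multiple imputation, treating (A8) as the operative hypothesis that supplies the needed conditional limits, and note that the cross-fitted SPVIM and its influence-function-based variance estimator have exactly the smooth, asymptotically linear structure for which this classical argument applies. A minor additional point is that each per-imputation estimator $\psi_{m,c,n}$ satisfies (A1)--(A7) on its own imputed dataset, which holds because each such dataset is again an i.i.d.\ sample from a law in $\mathcal{M}$.
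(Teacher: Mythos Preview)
Your proposal is correct and follows essentially the same route as the paper, which simply invokes the complete-data asymptotic normality of the SPVIM estimator under (A1)--(A7) from \citet{williamson2020c} together with the classical Rubin's-rules argument of Chapter~4 of \citet{rubin1987} under the proper-imputation conditions (A8). You have merely spelled out the decomposition and variance bookkeeping that the paper cites wholesale; the only quibble is your closing aside that each imputed dataset is ``again an i.i.d.\ sample from a law in $\mathcal{M}$,'' which is not literally true, but your core argument does not depend on it since (A8) directly supplies the needed conditional limits.
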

We can thus use the imputation-based estimator $\psi_{M,c,n}$ and its variance estimator $\omega^2_{M,n}$ to make inference on $\psi_0$. We adopt a two-stage strategy towards variable selection: first, select an initial set of variables using a procedure with possibly strict multiple-testing control; and second, augment this set of variables while maintaining control of generalized error rates. We describe these stages in the following sections.

\subsection{Selecting an initial set of variables}\label{sec:initial-set}

Suppose that conditions (A1)--(A8) hold. Let $\omega_{M,n,j}^2$ denote the $j$th component of the diagonal of the estimated covariance matrix based on the estimated EIF and imputation variance; if the data are fully observed, then there is only a single dataset and no imputation component to the variance. Based on the estimated variance and importance, we can define test statistics $T_{M,n,j} := \omega_{M,n,j}^{-1}(\psi_{M,c,n,j} - \psi_{0,j})$. The test statistics $T_{M,n} := (T_{M,n,1},\ldots,T_{M,n,p})$ follow a multivariate normal distribution under the joint null hypothesis, which we denote $\mathcal{P}_0$.

Armed with these test statistics, we select an initial set of variables. For a given $\alpha \in (0, 1)$ and possibly random cutoff functions $c_j(t, \mathcal{P}_0, \alpha)$, we define adjusted p-values $\tilde{p}_{M,n,j} := \inf \{\alpha \in [0,1]: T_{M,n,j} > c_j(T_{M,n}, \mathcal{P}_0, \alpha)\}$ \citep[see, e.g.,][]{dudoit2008}, resulting in
\begin{align}
    S_{M,n}(\alpha) = \{j \in \{1, \ldots, p\}: \tilde{p}_{M,n,j} \leq \alpha\}. \label{eq:intrinsic-initial-set}
\end{align}
The procedure for determining the adjusted p-values will determine how and whether any multiple-testing control is achieved in determining $S_{M,n}(\alpha)$. Below, we will provide an example of the adjusted p-values using a Holm procedure \citep{holm1979}. We define $R_{M,n}(\alpha) := \lvert S_{M,n}(\alpha) \rvert$ to be the number of rejected null hypotheses after this initial variable selection step. In settings with complete data, where multiple imputation is not necessary, we refer to these objects as $S_n(\alpha)$ and $R_n(\alpha)$, respectively.

An ideal selection procedure will result in $S_{M,n}(\alpha) \to_P S_0$ and $R_{M,n}(\alpha) \to_P \lvert S_0 \rvert$ as $n \to \infty$ and $M \to \infty$ while maintaining control of the number of falsely selected variables. In other words, we want to minimize the number of type I errors $Q_{M,n}(\alpha) := \lvert S_{M,n}(\alpha) \cap S_0^c \rvert$ while maximizing the number of selected truly important variables $\lvert S_{M,n}(\alpha) \cap S_0 \rvert$. However, many procedures, including the Holm procedure, provide control over the familywise error rate, which may be too strict in some settings \citep{lehmann2012}. In the next section, we describe a procedure for augmenting the set $S_{M,n}(\alpha)$, obtained using the estimated intrinsic importance values, to provide control over possibly less strict error rates.

\subsection{Augmenting the initial set to ensure error rate control and persistence}\label{sec:error-control}

Before detailing our full procedure and providing our main results, we introduce some additional notation. First, we define three commonly used error rates. For a given integer $k \geq 0$, the generalized family-wise error rate, of at least $k + 1$ type I errors, is defined as $gFWER(k) := Pr_{P_0}(Q_{M,n}(\alpha) \geq k + 1) = 1 - F_{Q_{M,n}(\alpha)}(k + 1),$ where $F_{Q_{M,n}}$ is the cdf of $Q_{M,n}$ and $gFWER(0)$ is the family-wise error rate. The proportion of false positives among the rejected variables at level $q \in (0,1)$ is defined as $PFP(q) := Pr_{P_0}(Q_{M,n}(\alpha) / R_{M,n}(\alpha) > q)$. Finally, we define the false discovery rate to be $FDR := E_{P_0}(Q_{M,n}(\alpha) / R_{M,n}(\alpha))$.

Next, we define the collection of sets of functions $\mathcal{C}_n := \bigcup_{s \subseteq \{1, \ldots, p\}: \lvert s \rvert = k_n} \mathcal{F}_s$ for $k_n \leq p$ and let $f_* \in \argmax_{f \in \mathcal{C}_n} V(f, P_0)$ denote the predictiveness-maximizing function over all function classes that make use of $k_n$ variables. We say that a variable selection procedure $S_n$ that selects $k_n$ variables is persistent \citep[see, e.g.,][]{greenshtein2004} if $V(f_{n, S_n}, P_0) - V(f_{*}, P_0) \to_P 0,$
where $f_{n, S_n}$ is an estimator of $f_{0, S_n}$, the predictiveness-maximizing function that uses the variables selected by $S_n$. In other words, a persistant procedure ensures that the true predictiveness of the empirical prediction function using the selected variables converges to the true predictiveness of the best possible prediction function making use of the same number of variables. Our definition of persistance can be seen as a nonparametric generalization of \citet{greenshtein2004}.

Based on a chosen multiple-testing control procedure, under conditions (A1)--(A8) we obtain $S_{M,n}(\alpha)$ as described in Equation~\eqref{eq:intrinsic-initial-set}. To provide control over the error rates defined above, we propose to augment $S_{M,n}(\alpha)$. For an integer $k \in \{0, \ldots, p - R_{M,n}(\alpha)\}$, we define the augmentation set
\begin{align}\label{eq:intrinsic-augment}
  A_{M,n}: (k, \alpha) \in \{0, \ldots, p - R_{M,n}(\alpha)\} \times (0,1) \mapsto \begin{cases} \emptyset & k = 0 \\
  \{s \subseteq S_{M,n}^c(\alpha): \tilde{p}_{M,n,\ell} \leq \tilde{p}_{M,n,(k)} \text{ for all } \ell \in s\} & k > 0,
  \end{cases}
\end{align}
where $a_{(j)}$ denotes the $j$th order statistic of a vector $a$. Two examples of augmentation sets are $A_{M,n}(\alpha, 0) = S_{M,n}(\alpha)$ and $A_{M,n}(\alpha, p - R_{M,n}(\alpha)) = S_{M,n}^c(\alpha)$ (i.e., the unselected variables). This results in an augmented set of selected variables $S_{M,n}^+(k, \alpha) = S_{M,n}(\alpha) \cup A_{M,n}(k, \alpha)$, augmented number of selected variables $R_{M,n}^+(k, \alpha) = \lvert S_{M,n}^+(k, \alpha) \rvert$, and augmented number of type I errors $Q_{M,n}^+(k, \alpha) = \lvert S_{M,n}^+(k, \alpha) \cap S_0^c \rvert$. Finally, we define the following set of conditions:
\begin{itemize}
  \item[(B1)] \textit{(finite-sample familywise error rate control)} $Pr_{P_0}(Q_{M,n}(\alpha) > 0) = \alpha_n$ for all $n$;
  \item[(B2)] \textit{(asymptotic familywise error rate control)} $\limsup_{n\to\infty} Pr_{P_0}(Q_{M,n}(\alpha) > 0) = \alpha^* \leq \alpha$;
  \item[(B3)] \textit{(perfect asymptotic power)} $\lim_{n\to\infty}Pr_{P_0}(S_0 \subseteq S_{M,n}(\alpha)) = 1$;
  \item[(B4)] \textit{(limited number of initial rejections)} $\lim_{n\to\infty}Pr_{P_0}(S_{M,n}(\alpha) \leq p - k) = 1$.
\end{itemize}
\begin{thm}\label{thm:error-control}
  If conditions (A1)--(A8) and (B1)--(B2) hold, then for any $k \geq 0$ and $q \in (0,1)$, $S_{M,n}^{+}(k, \alpha)$ provides finite-sample control of $gFWER(k)$ and $PFP(q)$ at level $\alpha_n$:
  \begin{align*}
    Pr_{P_0}(Q_{M,n}^+(k, \alpha) > k) = \alpha_n, \ & \ Pr_{P_0}(Q_{M,n}^+(k, \alpha)/R_{M,n}^+(k, \alpha) > q) = \alpha_n
  \end{align*}
  for all $n$. If additionally (B3)--(B4) hold, then $S_{M,n}^+(k, \alpha)$ provides asymptotic control of these quantities and the FDR, that is,
  \begin{align*}
     \limsup_{n\to\infty} Pr_{P_0}(Q_{M,n}^+(k, \alpha) > k) \leq \alpha, \ & \ \limsup_{n\to\infty} Pr_{P_0}(Q_{M,n}^+(k, \alpha)/R_{M,n}^+(k, \alpha) > q) \leq \alpha, \\
     \limsup_{n\to\infty} E_{P_0}(Q_{M,n}^+(k, \alpha)/R_{M,n}^+(k, \alpha)) \ &\leq \ q(1 - \alpha) + \alpha. 
  \end{align*}
  In complete-data settings, these results hold without reliance on condition (A8).
\end{thm}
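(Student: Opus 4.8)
The plan is to reduce every assertion to one deterministic bookkeeping inequality and then transfer the family-wise error control already assumed for the initial procedure through a handful of event inclusions. First I would record the relations that hold for every realization. By construction, the augmentation set $A_{M,n}(k,\alpha)$ contains at most $k$ indices (exactly $k$ when (B4) holds and there are no ties among the adjusted p-values), so $R_{M,n}^+(k,\alpha)=R_{M,n}(\alpha)+\lvert A_{M,n}(k,\alpha)\rvert$; and since at most $k$ of the appended indices can lie in $S_0^c$,
\[
  Q_{M,n}^+(k,\alpha)=Q_{M,n}(\alpha)+\lvert A_{M,n}(k,\alpha)\cap S_0^c\rvert\le Q_{M,n}(\alpha)+k .
\]
Hence $\{Q_{M,n}^+(k,\alpha)>k\}\subseteq\{Q_{M,n}(\alpha)>0\}$. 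Next, using the constraint built into the augmentation that ties the number of appended variables to $q$ and $R_{M,n}(\alpha)$---namely $\lvert A_{M,n}(k,\alpha)\rvert\le qR_{M,n}(\alpha)/(1-q)$, so that appending variables to a clean initial set cannot by itself push the false-positive proportion past $q$---I would show that on $\{Q_{M,n}(\alpha)=0\}$ one has $Q_{M,n}^+(k,\alpha)/R_{M,n}^+(k,\alpha)\le q$ (with the convention $0/0:=0$); therefore $\{Q_{M,n}^+(k,\alpha)/R_{M,n}^+(k,\alpha)>q\}\subseteq\{Q_{M,n}(\alpha)>0\}$. I would also note the trivial bound $Q_{M,n}^+(k,\alpha)/R_{M,n}^+(k,\alpha)\le 1$.

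Given these inclusions the finite-sample statements are immediate: by (B1), $Pr_{P_0}(Q_{M,n}^+(k,\alpha)>k)\le Pr_{P_0}(Q_{M,n}(\alpha)>0)=\alpha_n$, and likewise for the proportion of false positives, reproducing the stated level-$\alpha_n$ control (with equality at the endpoint $k=0$, where $S_{M,n}^+(0,\alpha)=S_{M,n}(\alpha)$). For the asymptotic $gFWER(k)$ and $PFP(q)$ bounds I would apply the same two inclusions and take $\limsup_n$, invoking (B2) to obtain $\limsup_n Pr_{P_0}(\cdot)\le\alpha^*\le\alpha$; here (B4) guarantees $k\le p-R_{M,n}(\alpha)$ with probability tending to $1$, so the augmentation by $k$ variables is well defined, and (B3) guarantees $R_{M,n}(\alpha)$ is eventually at least $\lvert S_0\rvert$, so the count constraint above continues to hold in the limit.

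For the FDR I would keep a single per-$n$ inequality rather than splitting the $\limsup$ prematurely: conditioning on whether the initial procedure makes any false rejection and using the pointwise bounds $\le q$ on $\{Q_{M,n}(\alpha)=0\}$ and $\le 1$ on its complement,
\[
  E_{P_0}\!\left(\frac{Q_{M,n}^+(k,\alpha)}{R_{M,n}^+(k,\alpha)}\right)\le q\,Pr_{P_0}(Q_{M,n}(\alpha)=0)+Pr_{P_0}(Q_{M,n}(\alpha)>0)=q+(1-q)\,Pr_{P_0}(Q_{M,n}(\alpha)>0).
\]
Since $1-q>0$, taking $\limsup_n$ and applying (B2) yields $\limsup_n E_{P_0}(Q_{M,n}^+(k,\alpha)/R_{M,n}^+(k,\alpha))\le q+(1-q)\alpha^*\le q+(1-q)\alpha=q(1-\alpha)+\alpha$. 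Finally, in the complete-data case there is a single dataset, so $\psi_{M,c,n}=\psi_{c,n}$ and $\omega^2_{M,n}=\sigma^2_n$; the asymptotic normality of $T_{M,n}$ then follows from the ideal-data theory of \citet{williamson2020c} rather than from Lemma~\ref{lem:mi-normality}, condition (A8) is vacuous, and because the argument above never uses (A8) it goes through verbatim.

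The step I expect to be the main obstacle is the $PFP$ and $FDR$ bookkeeping, specifically verifying $Q_{M,n}^+(k,\alpha)/R_{M,n}^+(k,\alpha)\le q$ on the event that the initial procedure commits no false rejection. This rests on how the number of appended variables interacts with the realized $R_{M,n}(\alpha)$ and on carefully handling the boundary cases $R_{M,n}^+(k,\alpha)=0$ and ties among the $\tilde{p}_{M,n,\ell}$. Once those are pinned down, transferring the control through the event inclusions is routine; conditions (A1)--(A8) enter only to guarantee that the test statistics $T_{M,n}$ and the adjusted p-values are well defined and, via Lemma~\ref{lem:mi-normality}, that (B1)--(B2) are attainable.
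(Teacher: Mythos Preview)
Your argument is correct and is essentially the content of the theorems the paper invokes, but the routes differ in presentation. The paper's proof is almost entirely by citation: it first records a lemma that (A1)--(A7) deliver the asymptotic normality of $T_n$ (so that Holm or step-down maxT/minP achieve FWER control), and then applies Theorems~6.3, 6.5 and 6.6 of \citet{dudoit2008} to the augmented procedure to obtain, respectively, the finite-sample gFWER/PFP bounds, their asymptotic versions, and the FDR bound; the missing-data case is handled by adding (A8). You instead unpack those Dudoit--van~der~Laan results directly via the event inclusions $\{Q^+>k\}\subseteq\{Q>0\}$ and $\{Q^+/R^+>q\}\subseteq\{Q>0\}$, together with the per-$n$ FDR splitting $E[Q^+/R^+]\le q+(1-q)\Pr(Q>0)$. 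This buys self-containment and makes clear exactly where (B1)--(B4) enter, at the cost of having to verify the PFP bookkeeping (your $k\le qR_{M,n}(\alpha)/(1-q)$ constraint) by hand; the paper simply inherits that from the cited theorem. You are also right that the inclusions give $\le\alpha_n$, with equality only at $k=0$; the paper's displayed ``$=$'' should be read as the level of control guaranteed by (B1), not as an identity for all $k$. Finally, your reading of the theorem---that for the PFP and FDR claims $k$ is the specific $k=k(q,R_{M,n}(\alpha))$ described just below the statement rather than an arbitrary integer---is the intended one and is necessary for the argument to go through.
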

This result implies that the user can specify a tolerable threshold for the tail probability of a number of false discoveries, which can result in an augmented set of variables $S_{M,n}^+(\alpha)$ with increased power over the potentially strict initial procedure $S_{M,n}(\alpha)$ while still providing error control. This holds in finite samples and asymptotically, so long as the initial procedure $S_{M,n}(\alpha)$ has high asymptotic power. 

Conditions (B1)--(B4) describe the initial variable selection procedure $S_{M,n}(\alpha)$. While a number of procedures satisfy these conditions under (A1)--(A8), we consider here a Holm-based procedure for simplicity. Based on the p-values $\{p_{M,n,j}\}_{j=1}^p$ from the individual, unadjusted null hypothesis tests, we can construct Holm-adjusted p-values
\begin{align}
   \tilde{p}_{M,n,(j)} := \max_{\ell \in \{1, \ldots, j\}}\{\min\{p_{M,n,(\ell)}(p - \ell + 1), 1\}\} \label{eq:holm-adj-p}.
\end{align}
For $\alpha \in (0, 1)$, we set $S_{M,n}(\alpha) = \{j \in \{1, \ldots, p\}: \tilde{p}_{M,n,j} < \alpha\}$, which guarantees control of the familywise error rate. Next, to control the gFWER, select $k \in \{0, \ldots, p - R_{M,n}(\alpha)\}$; to control the PFP among the selected variables, select $q \in (0,1)$ and set $k = \max\{j \in \{0, \ldots, p - R_{M,n}(\alpha)\}: j\{j + R_{M,n}(\alpha)\}^{-1} \leq q\}$. Define $A_{M,n}(k, \alpha)$ as in Equation~\eqref{eq:intrinsic-augment}, and augment the initial set to obtain $S_{M,n}^+(k, \alpha)$. Other procedures may satisfy (B1)--(B4) and could result in increased power \citep[see, e.g.,][]{dudoit2008}. The general procedure based on any familywise error rate-controlling initial selection step is summarized in Algorithm~\ref{alg:intrinsic-select}. 

The choice of $k$ depends on context. For example, it can be chosen so that no more than a pre-specified number of variables are selected; this may be important in applications where only a limited number of variables can be measured in future experiments. This type of constraint occurs in some cancer early detection studies, which we describe further in Section~\ref{sec:data}. One could instead tune $k$ using cross-validation, which is likely to result in more robust variable selection performance in cases where such a pre-specified threshold is unavailable.

\begin{algorithm}
\caption{Intrinsic variable selection with error rate control}
\label{alg:intrinsic-select}
\begin{algorithmic}[1]
  \State Obtain estimator $\psi_{M,c,n}$ of $\psi_0$ using multiple imputation in settings with missing data, and its corresponding variance estimator $\omega_{M,n}^2$;
  \State For a given $\alpha \in (0,1)$, compute unadjusted p-values $p_{M,n,j}$ for each hypothesis test $H_{0,j}$;
  \State Compute adjusted p-values $\tilde{p}_{M,n,j}$ according to the desired familywise error rate-controlling procedure, e.g., Holm adjusted p-values as in Equation~\eqref{eq:holm-adj-p};
  \State Set $S_{M,n}(\alpha) = \{j \in \{1, \ldots, p\}: \tilde{p}_{M,n,j} < \alpha\}$ as in Equation~\eqref{eq:intrinsic-initial-set};
  \State For a given $k \in \{0, \ldots, p - R_{M,n}(\alpha)\}$, obtain $A_{M,n}(k, \alpha)$ as in Equation~\eqref{eq:intrinsic-augment};
  \State Set $S_{M,n}^+(k, \alpha) = S_{M,n}(\alpha) \cup A_{M,n}(k, \alpha)$.
\end{algorithmic}
\end{algorithm}

The next result describes that under a subset of the conditions of the previous theorem and in complete-data settings, the algorithm described in Algorithm~\ref{alg:intrinsic-select} is persistent.
\begin{lemma}\label{lem:persistence}
  If conditions (A1), (A2), (A5) and (A6) hold for all $s \subseteq \{1, \ldots, p\}$ and conditions (A7) and (B3) hold, then the procedure described in Algorithm~\ref{alg:intrinsic-select} is persistent:
  \begin{align*}
    V(f_{n, S_{n}^+(k, \alpha)}, P_0) - V(f_{*}, P_0) \to_P 0.
  \end{align*}
\end{lemma}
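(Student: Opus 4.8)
The plan is to write the excess predictiveness as the sum of an estimation-error term and a selection-error term. Put $k_n := R_n^+(k,\alpha) = |S_n^+(k,\alpha)|$, and let $f_{n,s}$ denote the estimator of $f_{0,s}$ used by the algorithm (e.g., the Super Learner fit on $(Y,X_s)$, possibly refit on the full data), so that $f_{n,S_n^+(k,\alpha)}$ is this estimator at the selected set. Then
\[
V(f_{n,S_n^+(k,\alpha)}, P_0) - V(f_*, P_0) = \underbrace{\big\{ V(f_{n,S_n^+(k,\alpha)}, P_0) - V(f_{0,S_n^+(k,\alpha)}, P_0) \big\}}_{(\mathrm{I})} + \underbrace{\big\{ V(f_{0,S_n^+(k,\alpha)}, P_0) - V(f_*, P_0) \big\}}_{(\mathrm{II})} .
\]
I will show $(\mathrm{I}) = o_P(1)$ from the regularity conditions and $(\mathrm{II}) = 0$ with probability tending to one using the characterization of the null set together with (B3); since this is a complete-data setting, (A8) plays no role, consistent with the hypotheses of the lemma.

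For $(\mathrm{I})$: conditions (A1), (A2), (A5), (A6) --- imposed here for \emph{every} $s \subseteq \{1,\ldots,p\}$ --- together with (A7) are exactly the conditions under which, as in \citet{williamson2021}, the excess predictiveness of the fitted prediction function is asymptotically negligible, i.e., $V(f_{n,s}, P_0) - V(f_{0,s}, P_0) = o_P(1)$ for each fixed $s$ (with $K$-fold cross-fitting one applies this fold-wise and combines). Since $p$ is fixed, there are only finitely many subsets, so $\max_{s \subseteq \{1,\ldots,p\}} | V(f_{n,s}, P_0) - V(f_{0,s}, P_0) | = o_P(1)$; as $S_n^+(k,\alpha)$ is, on every sample point, one of these subsets, this maximum dominates $|(\mathrm{I})|$, giving $(\mathrm{I}) = o_P(1)$. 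This is precisely the step that forces the regularity conditions to hold for all $s$ rather than only the subsets sampled when computing the SPVIM.

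For $(\mathrm{II})$: first, $(\mathrm{II}) \leq 0$ always, because $S_n^+(k,\alpha)$ uses $k_n$ variables while $f_*$ is predictiveness-maximizing over \emph{all} function classes built from $k_n$ variables; the work is the matching lower bound. The augmentation step only enlarges the selected set, $S_n(\alpha) \subseteq S_n^+(k,\alpha)$, so by (B3), $Pr_{P_0}(S_0 \subseteq S_n^+(k,\alpha)) \geq Pr_{P_0}(S_0 \subseteq S_n(\alpha)) \to 1$; work on the event $\mathcal{E}_n := \{ S_0 \subseteq S_n^+(k,\alpha) \}$. As noted in Section~\ref{sec:intrinsic-select}, $\psi_{0,j} = 0$ forces $V(f_{0,s\cup\{j\}}, P_0) = V(f_{0,s}, P_0)$ for every $s \subseteq \{1,\ldots,p\}\setminus\{j\}$: every summand of the Shapley representation of $\psi_{0,j}$ is nonnegative, since $\mathcal{F}_s \subseteq \mathcal{F}_{s\cup\{j\}}$ yields $V(f_{0,s}, P_0) \leq V(f_{0,s\cup\{j\}}, P_0)$, and each carries a strictly positive weight, so a zero Shapley value makes every summand vanish. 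Deleting the null coordinates of any set $T$ with $S_0 \subseteq T \subseteq \{1,\ldots,p\}$ one at a time then gives $V(f_{0,T}, P_0) = V(f_{0,S_0}, P_0) = V(f_{0,\{1,\ldots,p\}}, P_0) =: V(f_0, P_0)$, the largest predictiveness attainable over $\mathcal{F}$. On $\mathcal{E}_n$ this applies with $T = S_n^+(k,\alpha)$, so $V(f_{0,S_n^+(k,\alpha)}, P_0) = V(f_0, P_0)$; moreover, since $|S_0| \leq k_n \leq p$ on $\mathcal{E}_n$, there is a $T^\star$ with $S_0 \subseteq T^\star \subseteq \{1,\ldots,p\}$ and $|T^\star| = k_n$, whence $V(f_*, P_0) = \max_{s : |s| = k_n} V(f_{0,s}, P_0) \geq V(f_{0,T^\star}, P_0) = V(f_0, P_0)$, and since $V(f_{0,s}, P_0) \leq V(f_0, P_0)$ for every $s$, in fact $V(f_*, P_0) = V(f_0, P_0)$. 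Hence $(\mathrm{II}) = 0$ on $\mathcal{E}_n$, so $(\mathrm{II}) = o_P(1)$, and combining with $(\mathrm{I})$ yields $V(f_{n,S_n^+(k,\alpha)}, P_0) - V(f_*, P_0) \to_P 0$.

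The main obstacle is term $(\mathrm{I})$: one must control the true predictiveness of machine-learned prediction functions uniformly over the data-dependent selected set $S_n^+(k,\alpha)$, which is what makes it necessary both to strengthen (A1), (A2), (A5) and (A6) to all subsets and to use the finiteness of the subset collection for fixed $p$. Term $(\mathrm{II})$, by contrast, is essentially bookkeeping once (B3) is available: augmentation can only pad the truly active set with members of $S_0^c$, which by the defining property of the null set leaves population predictiveness at its maximum, and $f_*$ already attains that maximum because it is allowed to use $k_n = |S_n^+(k,\alpha)|$ variables.
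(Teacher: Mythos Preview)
Your proof is correct and follows the same decomposition as the paper: an estimation-error term handled via (A1) and (A5), and a selection-error term handled by (B3) together with the fact that variables in $S_0^c$ never change population predictiveness. The only notable difference is that for term $(\mathrm{I})$ you bound uniformly over all $2^p$ subsets to deal with the data-dependence of $S_n^+(k,\alpha)$, which is a bit more careful than the paper's direct application of (A1) to the random set, and for $(\mathrm{II})$ you explicitly identify $V(f_*,P_0)$ with the full-model predictiveness on the high-probability event, whereas the paper concludes somewhat loosely that $S_n^+(k_n,\alpha)\to_P S_0$; the underlying ideas are the same.
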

This result implies that $S_{n}^+(k, \alpha)$, the result of Algorithm~\ref{alg:intrinsic-select} in complete-data settings, returns a set of features that has predictiveness converging to the best possible predictiveness among all procedures that select $R_{n}^+(k, \alpha)$ variables. In missing-data settings, if condition (A8) is satisfied, then this result holds when averaged across the imputed datasets and as $M \to \infty$.

\section{Numerical experiments}\label{sec:sims}

\subsection{Experimental setup}

We provide several experiments that are designed to describe the operating characteristics of our proposed intrinsic importance-based variable selection procedure, and compare these procedures with other well-established algorithms. In all cases, our simulated dataset consisted of independent replicates of $(X, Y)$, where $X = (X_1, \ldots, X_p)$ and $Y$ followed a Bernoulli distribution with success probability $\Phi\{\beta_0 + f(\beta, x)\}$ conditional on $X = x$, where $\Phi$ denotes the cumulative distribution function of the standard normal distribution. Under this specification, $Y$ followed a probit model. 

In Scenario 1, we vary $p \in \{30, 500\}$, set $f(\beta, x) = x\beta$, and specify $\beta_0 = 0.5$ and $\beta = (-1, 1, -0.5, 0.5, 1/3, -1/3, \mathbf{0}_{p - 6})^\top$, where $\mathbf{0}_k$ denotes a zero-vector of dimension $k$. We consider $X \sim MVN(0, I_p)$, where $I_p$ is the $p \times p$ identity matrix. In this scenario, procedures that are based on a generalized linear model are correctly specified.

In Scenario 2, we set $p = 6$, add correlation between variables, and specify
\begin{align*}
  f(\beta, x) =& \ 2 [\beta_1 x_2x_3 - \beta_2 \tanh{(x_6)}],
\end{align*}
where $\tanh{}$ denotes the hyperbolic tangent. In this scenario, $\beta_0 = 0.5$, $\beta = (1, 1)^\top$, and $X \sim MVN(0, \Sigma)$, where $\Sigma_{i,j} = 1$ if $i = j$; $\Sigma_{i,j} = \rho_1^{\lvert i - j \rvert}$ for $i, j$ not in the active set; and $\Sigma_{i,j} = \rho_2$ for $i, j$ in the active set. We set $\rho_1 = 0.3$ and $\rho_2 = 0.95$. In this scenario, procedures that are based on a generalized linear model are misspecified.

We first generate complete $(X, Y)$ and then generate missing data using amputation \citep{vanbuuren2018}. The outcome and certain features always have complete data, i.e., $\Delta_{i,j} = 1$ for $j \in \{0, 1, 3, 5\}$ and all $i$. The missing data are missing at random. We specify a monotone missing pattern for $(X_2, X_4, X_6)$, where observing $X_2$ implies that both $X_4$ and $X_6$ are observed. When $p = 500$, 40 noise features have missing data; when $p = 30$, 3 noise features have missing data; the remaining noise features are fully observed. In all scenarios, we consider fully observed data and a maximum of 20\% or 40\% missing data within each column.

For each sample size $n \in \{200, 500, 1500, 3000\}$, we generated 1000 replicates from each combination of data-generating mechanism, number of features, and proportion of missing data. We additionally generated an independent test dataset following the same distribution but with no missing data and with sample size 10,000. We used MI with $M=10$ and predictive mean matching to impute any missing feature information. 

In cases with missing data, we considered three procedures for performing variable selection: the stability-selection based algorithms considered in \citet{long2015} with 100 bootstrap replicates, which we refer to as lasso + SS (LJ) and lasso + SS (BI-BL), denoting stability selection within bootstrap imputation and bootstrap imputation with bolasso \citep{bach2008}, respectively; and intrinsic selection (i.e., Algorithm~\ref{alg:intrinsic-select}) with gFWER, PFP, and FDR control, using AUC to define intrinsic importance, which we refer to as SPVIM + gFWER, PFP, and FDR, respectively. In the latter case, we used a Super Learner to estimate intrinsic importance. In cases with complete data, we used the lasso, lasso with stability selection, lasso with knockoffs, and intrinsic selection (i.e., Algorithm~\ref{alg:intrinsic-select}) to perform variable selection. We attempted to use error-rate control tuning parameters that would provide similar theoretical control over the various error rates across algorithms. The values of the specific algorithms used in the Super Learner, the tuning parameters used in each procedure for error rate control, and the specific R implementations of each algorithm are provided in the Supplementary Material (Sections 7.2 and 7.3). 

After performing variable selection, we estimated the prediction performance of the selected variables by fitting a regression of the outcome on these variables. In cases with missing data, we fit this regression on each of the imputed datasets. To maintain coherence with the assumptions of a given procedure, we used a probit regression in the case of variables selected by the lasso-based methods and used the Super Learner in all other cases. This results in strategies that are based on parametric assumptions for both variable selection and prediction performance estimation, and a strategy that is free of these assumptions in both stages. We then computed the test-set AUC based on the independent sample; in missing-data settings, we averaged the performance on this test set across the prediction functions trained on each imputed dataset. We additionally computed the sensitivity and specificity of the selected set of variables. In the context of variable selection, sensitivity is the proportion of truly important variables that were selected, while specificity is the proportion of truly unimportant variables that were not selected. Finally, we evaluated the average test-set AUC based of the selected variables and the average sensitivity and specificity of each procedure over the 1000 samples.

\subsection{Primary empirical results}

We only show results for the case with 40\% missing data; the results for 20\% missing data and no missing data are similar and are presented in the Supplementary Material (Section 7.4 and 7.6). In Figure~\ref{fig:scenario-1-select}, we display the results of the experiment conducted under Scenario 1; the features are multivariate normal and the outcome-feature relationship follows a linear model. In this scenario, the lasso-based estimators are correctly specified. We observe in Figure~\ref{fig:scenario-1-select} panel A that for both feature-space dimensions $p \in \{30, 500\}$, all estimators have estimated test-set AUC increasing with sample size. In this experiment, the \citet{long2015} lasso and intrinsic variable selection with gFWER control tend to have the highest test-set AUC. The PFP and FDR-controlling intrinsic selection procedures tend to have lower AUC, particularly at smaller sample sizes, reflecting the fact that these procedures provide stricter control of specificity at the cost of sensitivity in these scenarios. A different choice of tuning parameters might lead to a more favorable tradeoff between these two error rates. Additionally, if no variables are selected using the initial procedure (here, using Holm-adjusted p-values less than 0.05), then the PFP and FDR augmentation set is defined as the empty set, suggesting that relaxing FWER control for the initial set of variables could increase sensitivity in this setting.
In Figure~\ref{fig:scenario-1-select} panel B, we observe that empirical sensitivity increases with $n$ towards one for all algorithms regardless of the feature-space dimension, though the PFP- and FDR-controlling intrinsic selection approaches have low sensitivity in the $p = 500$ case. Worryingly, the specificity of the BI-BL lasso is near zero for all cases (Figure~\ref{fig:scenario-1-select} panel C). 

\begin{figure}
  \centering
  \includegraphics[width=1\textwidth]{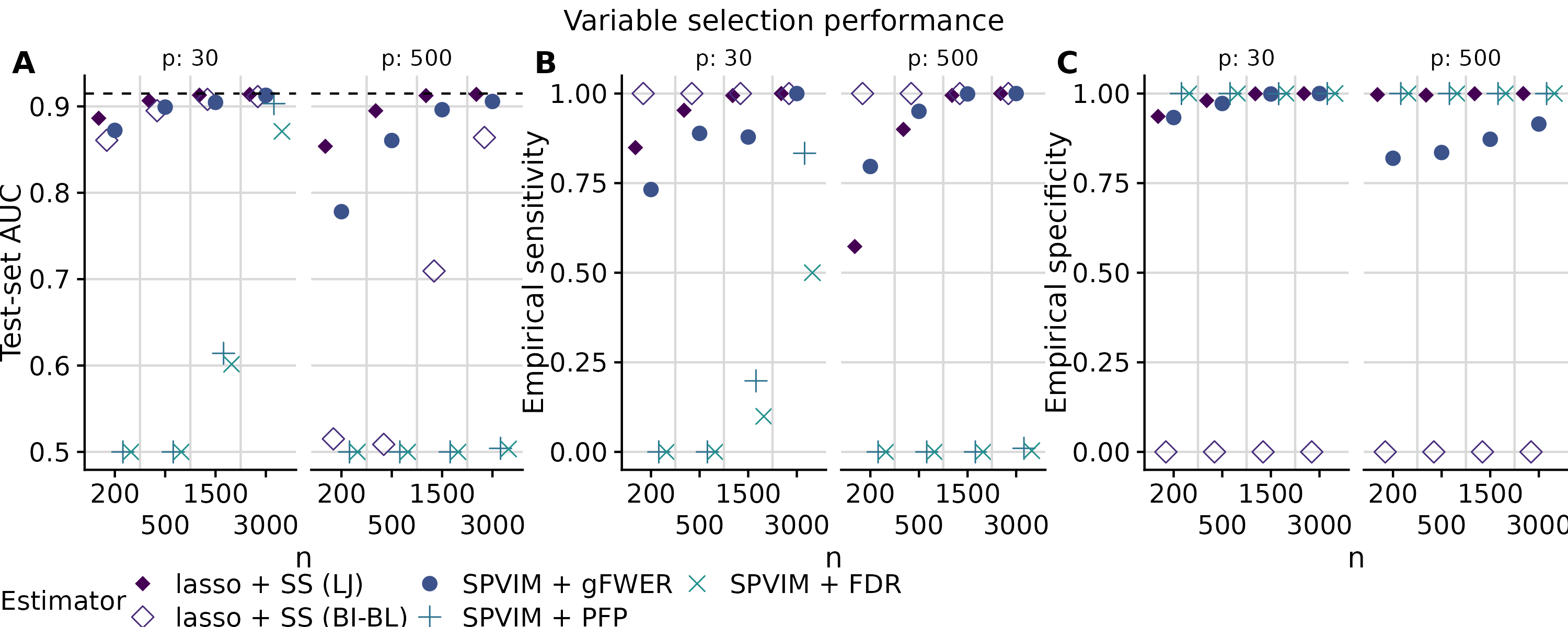}
  \caption{Test-set area under the receiver operating characteristic curve (AUC) (panel A) and empirical variable selection sensitivity (panel B) and specificity (panel C) vs $n$ for each estimator in the case with 40\% missing data}, in Scenario 1 (a linear model for the outcome and multivariate normal features). The dotted line in panel A shows the true (optimal) test-set AUC. The methods compared are: lasso + SS (LJ), the stability-selection within bootstrap imputation algorithm of \citep{long2015}; lasso + SS (BI-BL), the bootstrap imputation with bolasso algorithm of \citep{long2015}; SPVIM + gFWER, intrinsic selection to control the generalized familywise error rate; SPVIM + PFP, intrinsic selection to control the proportion of false positives among the rejected variables; and SPVIM + FDR, intrinsic selection to control the false discovery rate.
  \label{fig:scenario-1-select}
\end{figure}

In Figure \ref{fig:scenario-2-select}, we display the results of the experiment conducted under Scenario 2; the features are correlated multivariate normal and the outcome-feature relationship is nonlinear. We observe test-set AUC near the optimal value for the gFWER-controlling intrinsic selection procedure, while test-set AUC is much lower for the lasso-based procedures. We again observe lower test-set AUC for the PFP and FDR-controlling intrinsic procedures. We observe poor empirical sensitivity for the stability-selection within bootstrap imputation procedure, while we observe high sensitivity for the gFWER-controlling intrinsic procedure. Empirical specificity also tends to be high for this intrinsic procedure; among lasso-based estimators, the stability-selection within bootstrap imputation procedure has the highest empirical specificity, which tends to be lower than specificity for the gFWER-controlling intrinsic procedure.

\begin{figure}
  \centering
  \includegraphics[width=1\textwidth]{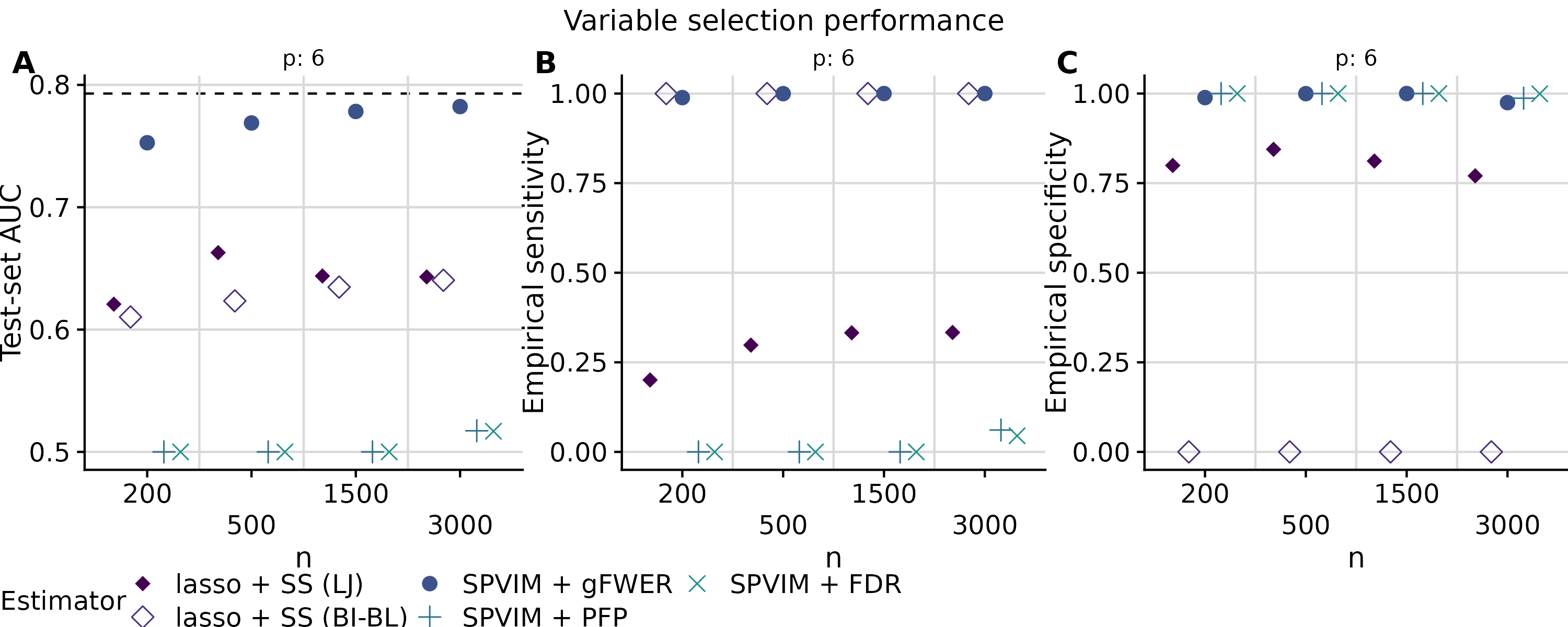}
  \caption{Test-set area under the receiver operating characteristic curve (AUC) (panel A) and empirical variable selection sensitivity (panel B) and specificity (panel C) vs $n$ for each estimator, in Scenario 2 (a nonlinear model for the outcome and correlated multivariate normal features). The dotted line in panel A shows the true (optimal) test-set AUC. The methods compared are: lasso + SS (LJ), the stability-selection within bootstrap imputation algorithm of \citep{long2015}; lasso + SS (BI-BL), the bootstrap imputation with bolasso algorithm of \citep{long2015}; SPVIM + gFWER, intrinsic selection to control the generalized familywise error rate; SPVIM + PFP, intrinsic selection to control the proportion of false positives among the rejected variables; and SPVIM + FDR, intrinsic selection to control the false discovery rate.}
  \label{fig:scenario-2-select}
\end{figure}

This simulation study suggests that the intrinsic variable selection procedures proposed here have good practical performance, as suggested by theory. As is the case with other procedures, we observed a tradeoff between sensitivity and specificity for our proposed procedures. In Scenario 2, where procedures based on a generalized linear model were misspecified, we observed poor variable selection and prediction performance when using lasso-based estimators, whereas our proposed methods protected against this model misspecification.

\subsection{Additional empirical results}

In the Supplementary Material, we present the 20\% missing data case for Scenarios 1 and 2, observing similar results to those presented in Figures~\ref{fig:scenario-1-select} and \ref{fig:scenario-2-select} (Section 7.4). We also consider the complete-data case, observing that our proposed intrinsic selection procedure has similar performance to the lasso with stability selection and knockoffs under Scenario 1, and improved performance under Scenario 2 (Section 7.6). The two scenarios that we described in the previous section are special cases of a more general setup described in Section 7.1 of the Supplementary Material. We consider six additional scenarios scrutinizing the effect of intermediate departures from the linear outcome regression and independent normal feature distribution (Sections 7.5 and 7.6). While a nonlinear outcome regression resulted in decreased test-set prediction performance and decreased probability of selecting some important variables for the lasso-based procedures, a nonnormal feature distribution had a minimal effect on the performance of these procedures. When the variables were equally weakly important, we observed poor performance of lasso-based estimators in cases with correlated predictors. Our intrinsic selection procedure maintained good overall performance in all scenarios, reflecting its robustness to model misspecification.

\section{Developing a biomarker panel for pancreatic cancer early detection}\label{sec:data}

Pancreatic ductal adenocarcinoma is the fourth-leading cause of cancer death in the United States. There is increasing focus on identifying pancreatic cancer at an early stage when treatment should be most effective. Mucinous cysts are one potential precursor lesion to pancreatic ductal adenocarcinoma and might be identified using routine imaging. However, imaging can be prohibitively expensive and current radiographic tests have limited ability to differentiate between benign and pre-malignant cystic neoplasms \citep{brugge2004}. This has spurred development of fluid biomarkers that can be assayed using pancreatic cyst fluid, which is routinely collected during clinical care.

We consider specimens from 321 participants with confirmed surgical pathology diagnosis from the Pancreatic Cyst Biomarker Validation Study \citep{liu2020}, designed to evaluate multiple cystic fluid biomarkers at several research institutes across the United States. The 21 candidate biomarkers are described further in the Supplementary Material (Table S5, Section 8). A main objective of the study is to develop biomarkers or biomarker panels that can be used to separate pancreatic cysts with differential malignant potentials. A major complication in achieving this objective is limited available cystic fluid volume from each study participant. The study statistical team randomly assigned available specimens to validation sites, such that each biomarker was only measured in a subset of the total study participants. This results in a highly non-monotone pattern of missingness in the biomarker data. Here the missing at random assumption holds since the probability of measuring a biomarker from an individual depends on that individual's specimen volume based on the specimen allocation scheme. Our goal here is to develop biomarker panels to separate mucinous cysts from non-mucinous cysts. In the Supplementary Material (Section 9), we present an analysis focused on malignancy potential.

We use the same procedures that we evaluated in the previous section. We assessed the prediction performance of each procedure through repeating an imputation-within-cross-validation procedure 100 times. We used MI with $M = 10$ in all cases, and used an outer layer of five-fold cross-validation to assess prediction performance. We obtained a final set of biomarkers selected by each procedure using Algorithm~\ref{alg:intrinsic-select} on the full imputed datasets. We chose tuning parameters based on similar settings considered in the simulations, leading us to set $k = 5$ and $q = 0.8$. More details on the approaches to estimating prediction performance and obtaining the final panel are provided in the Supplementary Material (Section 8).

We present the results of our analysis in Figure~\ref{fig:mucinous}. The PFP- and FDR-controlling intrinsic selection procedures did not select any variables on average, suggesting that the tuning parameters we selected were too conservative. The gFWER-controlling intrinsic selection procedure had high predictiveness, as measured by cross-validated AUC (CV-AUC), and was the top-performing algorithm with an average estimated CV-AUC of 0.946 and 95\% confidence interval of [0.89, 1]. Performance was worse for the lasso-based estimators, with an average estimated CV-AUC of 0.541 [0.385, 0.697] and 0.539 [0.383, 0.695] for the bootstrap imputation with bolasso and stability selection within bootstrap imputation lasso, respectively. In the Supplementary Material (Table S7, Section 8), we display the final set of biomarkers selected by each procedure. Since $k = 5$, we can interpret the final selected panel using the SPVIM + gFWER approach, which contained 10 variables, as having a 5\% probability of containing greater than five truly unimportant variables (i.e., variables with $\psi_{0,j} = 0$), where importance is defined with respect to increasing the AUC of a prediction algorithm based on all possible combinations of the measured biomarkers. Among the three procedures that selected variables, several biomarkers were selected by all procedures. These include biomarkers related to amphiregulin, glucose, fluorescent protease activity, and protein expression. Amphiregulin has been found to be elevated in adenocarcinoma cells \citep{tun2012}.

\begin{figure}
    \centering
    \includegraphics[width=1\textwidth]{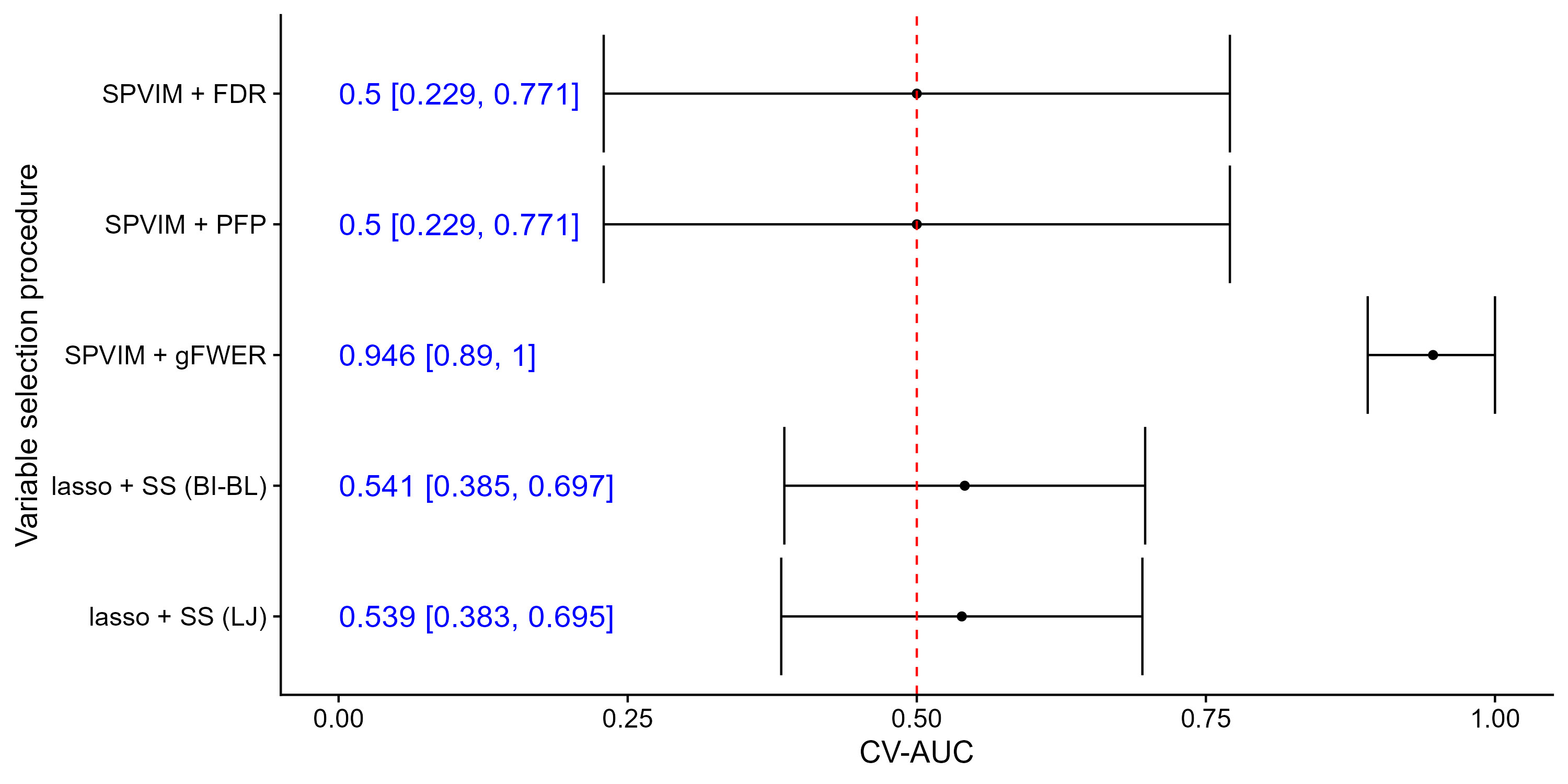}
    \caption{Cross-validated area under the receiver operating characteristic curve (CV-AUC) for predicting whether a cyst is mucinous averaged over 100 replicates of the imputation-within-cross-validated procedure for each variable selection algorithm. Prediction performance for lasso-based methods is based on logistic regression on the selected variables, while performance for Super Learner-based methods is based on a Super Learner. Error bars denote 95\% confidence intervals based on the average variance over the 100 replications. The methods compared are: lasso + SS (LJ), the stability-selection within bootstrap imputation algorithm of \citep{long2015}; lasso + SS (BI-BL), the bootstrap imputation with bolasso algorithm of \citep{long2015}; SPVIM + gFWER, intrinsic selection to control the generalized familywise error rate; SPVIM + PFP, intrinsic selection to control the proportion of false positives among the rejected variables; and SPVIM + FDR, intrinsic selection to control the false discovery rate.}
    \label{fig:mucinous}
\end{figure}

\section{Discussion}\label{sec:discussion}

We have proposed a variable selection procedure that is robust to model misspecification and is valid in settings with missing data, providing an alternative to existing, model-based approaches. We proved that our intrinsic selection procedure is persistant in complete-data settings and that error rate control can be achieved through the use of a tuning parameter, and identified conditions under which Rubin's rules can be used with intrinsic selection to formally incorporate imputation variance in settings with missing data.
We found in simulated examples that our proposal had high sensitivity and specificity and good overall prediction performance, though performance depends greatly on how control over the false discoveries is carried out.
We observed poor performance of the BI-BL lasso in our simulations. There, we followed the advice of \citet{long2015} and \citet{meinshausen2010} and set the threshold tuning parameter equal to 0.9. This poor performance suggests that the results of this procedure are more dependent on the choice of threshold in some settings than previously reported.
Importantly, in settings with missing data where a simple linear outcome regression model is correctly specified, our proposals have similar operating characteristics to the lasso-based procedures proposed in \citet{long2015}. In these settings with complete data, our proposals have similar operating characteristics to the lasso, lasso with stability selection, and lasso with knockoffs, all of which are commonly used. In settings with a nonlinear relationship where the linear outcome regression model was misspecified, weakly important features, and correlated features, we observed that our proposals maintained high sensitivity and specificity, while the performance of the lasso-based procedures suffered, as suggested by theory \citep[see, e.g.,][]{leng2006}.

In settings with missing data, many variable selection procedures require post-hoc harmonization of many selected sets resulting from multiply imputed datasets. A benefit of our proposed intrinsic selection procedure is that Rubin's rules can be used to obtain a single set of point and variance estimates accounting for the across-imputation variance, resulting in a single set of selected variables. In cases where the imputation mechanism is misspecified and incongenial with the analytic approach, it may be necessary to update the variance estimator \citep{robins2000}; however, the form of this estimator is complex. This idea is being pursued in ongoing research.

\section*{Software and supplementary materials}\label{sec:supp}
The proposed methods are implemented in the R package \texttt{flevr}, freely available on \href{https://github.com/bdwilliamson/flevr}{GitHub}. Supplementary Materials, including all technical proofs and code to reproduce all numerical experiments and data analyses, are available on GitHub at \url{https://github.com/bdwilliamson/flevr_supplementary}.

\section*{Acknowledgements}\label{sec:acknowledgements}
This work was supported by the National Institutes of Health (NIH) grants R37AI054165, R01GM106177, U24CA086368 and S10OD028685. The opinions expressed in this article are those of the authors and do not necessarily represent the official views of the NIH.

\vspace{0.1in}

{\small
\bibliographystyle{chicago}
\bibliography{brian-papers}
}

\newpage

\section*{SUPPLEMENTARY MATERIAL}
\renewcommand{\thefigure}{S\arabic{figure}}
\renewcommand{\theequation}{S\arabic{equation}}
\renewcommand{\thetable}{S\arabic{table}}
\renewcommand{\thethm}{S\arabic{thm}}
\renewcommand{\thelemma}{S\arabic{lemma}}
\setcounter{figure}{0}

\doublespacing

\section{Proofs of theorems}\label{sec:proofs}

\subsection{Regularity conditions}

This section is a review of the formal regularity conditions required to specify the distribution of the SPVIM values \citep{williamson2020c}. We define the linear space $\mathcal{R} := \{c(P_1 - P_2) \ : \ c \in \mathbb{R}, P_1, P_2 \in \mathcal{M}\}$ of finite signed measures generated by $\mathcal{M}$. For any $R \in \mathcal{R}$, we consider the supremum norm $\lVert R \rVert_\infty := \lvert c \rvert \sup_z \lvert F_1(z) - F_2(z)\rvert$, where $F_1$ and $F_2$ are the distribution functions corresponding to $P_1$ and $P_2$, respectively, and we have used the representation $R = c(P_1 - P_2)$. For distribution $P_{0,\epsilon} := P_0 + \epsilon h$ with $\epsilon \in \mathbb{R}$ and $h \in \mathcal{R}$, we define $f_{0,\epsilon,s} = f_{P_0,\epsilon,s}$ to be the oracle prediction function with respect to each subset $s \in \{1, \ldots, p\}$. Let $\dot{V}(f, P_0; h)$ denote the G\^ateaux derivative of $P \mapsto V(f, P)$ at $P_0$ in the direction $h \in \mathcal{R}$. The G\^ateaux derivatives for several common choices of $V$ are provided in \citet{williamson2021}. Next, we define the random function $g_{n,s}: z \mapsto \dot{V}(f_{n,s}, P_0 ; \delta_z - P_0) - \dot{V}(f_{0,s}, P_0; \delta_z - P_0)$, where $\delta_z$ is the degenerate distribution on $\{z\}$. For each $s \subseteq \{1,\ldots,p\}$, we require the following conditions to hold:
\begin{itemize}
   \item[(A1)] \textit{(optimality)} there is some $C > 0$ such that for each sequence $f_1, f_2, \cdots \in \mathcal{F}_s$ with $\lVert f_j - f_{0,s}\rVert_{\mathcal{F}_s} \to 0$, there is a $J$ such that for all $j > J$, $\lvert V(f_j, P_0) - V(f_{0,s}, P_0)\rvert \leq C \lVert f_j - f_{0,s}\rVert_{\mathcal{F}_s}^2$;
   \item[(A2)] there is some $\delta > 0$ such that for each sequence $\epsilon_1, \epsilon_2, \ldots \in \mathbb{R}$ and $h, h_1, h_2, \ldots \in \mathcal{R}$ satisfying that $\epsilon_j \to 0$ and $\lVert h_j - h \rVert_\infty \to 0$, it holds that
   \begin{align*}
     \sup_{f \in \mathcal{F}_s: \lVert f - f_{0,s} \rVert_{\mathcal{F}_s} < \delta} \big\lvert \frac{V(f, P_0 + \epsilon_j h_j) - V(f, P_0)}{\epsilon_j} - \dot{V}(f, P_0; h_j)\big\rvert \to 0;
   \end{align*}
   \item[(A3)] $\lVert f_{0,\epsilon,s} - f_{0,s} \rVert_{\mathcal{F}_s} = o(\epsilon)$ for each $h \in \mathcal{R}$;
   \item[(A4)] $f \mapsto \dot{V}(f, P_0; h)$ is continuous at $f_{0,s}$ relative to $\mathcal{F}_s$ for each $h \in \mathcal{R}$;
   \item[(A5)] $\lVert f_{n,s} - f_{0,s} \rVert_{\mathcal{F}_s} = o_P(n^{-1/4})$;
   \item[(A6)] $E_{P_0}[\int \{g_{n,s}(z)\}^2 dP_0(z)] = o_P(1)$;
   \item[(A7)] for $\gamma > 0$ and sequence $\gamma_1, \gamma_2, \ldots \in \mathbb{R}^+$ satisfying that $\lvert \gamma_j - \gamma \rvert \to 0$, $c = \gamma_n n$.
\end{itemize}

In settings with missing data, a modified version of (A5) and (A6) must hold for on average over the imputed datasets:
\begin{itemize}
  \item[(A5)] \textit{(in missing data settings)} $M^{-1}\sum_{m=1}^M\lVert f_{m,n,s} - f_{0,s} \rVert_{\mathcal{F}_s} = o_P(n^{-1/4})$;
  \item[(A6)] \textit{(in missing data settings)} $M^{-1}\sum_{m=1}^ME_{P_0}[\int \{g_{m,n,s}(z)\}^2 dP_0(z)] = o_P(1)$,
\end{itemize}
where $f_{m,n,s}$ is a prediction function estimated using the $m$th imputed dataset, and $g_{m,n,s}$ is defined as above but replacing all instances of $f_{n,s}$ with $f_{m,n,s}$, and replacing the ideal-data unit $z$ with the observed-data unit $o$.

\subsection{Proof of Lemma~\ref{lem:mi-normality}}

The result follows under conditions (A1)--(A8) and an application of results in Chapter 4 of \citet{rubin1987}. Using this result, we can write that
\begin{align*}
    \sqrt{n}(\psi_{M,c,n} - \psi_0) \to_d W \sim N(0, \sigma^2),
\end{align*}
where a consistent estimator of $\sigma^2$ is given by $\sigma^2_{M,n} + \frac{m+1}{m}\tau^2_{M,n}$. Recall that (A8) requires consistency of the imputation-based estimators as $M \to \infty$.

\subsection{Proof of Theorem~\ref{thm:error-control}}

Before proving the theorem, we state and prove a lemma that will be useful.

\begin{lemma}\label{lem:fwer_control}
  For any $\alpha \in (0, 1)$, $k \in \{0, \ldots, p - R_n(\alpha)\}$ and $q \in (0, 1)$, if conditions (A1)--(A6) hold for each $s \subseteq \{1,\ldots,p\}$ and (A7) holds, then the procedure $S_n(\alpha)$ satisfies the following: (a) when based on Holm-adjusted p-values, $FWER \leq \alpha$ both in finite samples and asymptotically; and (b) when based on a step-down maxT or minP procedure, $FWER \leq \alpha$ asymptotically.
\end{lemma}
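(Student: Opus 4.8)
The plan is to reduce both claims of the lemma to the asymptotic joint normality of the standardized SPVIM estimators and then to quote classical properties of the Holm and step-down maxT/minP procedures; because the statement concerns the complete-data procedure $S_n(\alpha)$, condition (A8) plays no role. Under (A1)--(A6) for each $s \subseteq \{1, \ldots, p\}$ and (A7), the results of \citet{williamson2020c} recalled in Section~\ref{sec:intrinsic-select} give that $n^{1/2}(\psi_{c,n} - \psi_0) \to_d N_p(0, \Sigma_0)$ with $\Sigma_0 = E_0[\phi_0(O)\phi_0(O)^\top]$, that $\psi_{c,n}$ is consistent for $\psi_0$, and that the diagonal variance estimators are consistent, so that the vector of standardized statistics $T_n = (T_{n,1}, \ldots, T_{n,p})$ converges in distribution to a centered Gaussian with the correlation matrix induced by $\Sigma_0$. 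Consequently, the one-sided p-value $p_{n,j}$ for $H_{0,j}: \psi_{0,j} = 0$ (computed from $\omega_{n,j}^{-1}\psi_{c,n,j}$) is asymptotically valid for $j \in S_0^c$, i.e.\ $\limsup_{n \to \infty} Pr_{P_0}(p_{n,j} \le u) \le u$ for all $u \in (0,1)$, while for $j \in S_0$ consistency together with $\psi_{0,j} > 0$ forces $\omega_{n,j}^{-1}\psi_{c,n,j} \to_P +\infty$ and hence $p_{n,j} \to_P 0$.

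For part (a), I would invoke the classical fact that Holm's step-down procedure controls the $FWER$ at level $\alpha$ for any collection of marginally valid p-values, irrespective of their joint dependence \citep{holm1979}. On the event that at least one true null is rejected, let $j^\star$ be the true null with the smallest p-value; since every other true null has a larger p-value, at most $\lvert S_0 \rvert$ hypotheses precede $j^\star$ in the ordered list, so the form of the Holm-adjusted p-values in Equation~\eqref{eq:holm-adj-p} forces $p_{n,j^\star} \le \alpha / \lvert S_0^c \rvert$ for that null to be rejected. Hence $FWER \le Pr_{P_0}(\min_{j \in S_0^c} p_{n,j} \le \alpha / \lvert S_0^c \rvert) \le \sum_{j \in S_0^c} Pr_{P_0}(p_{n,j} \le \alpha / \lvert S_0^c \rvert) \le \alpha$, the last step by validity of the marginal p-values. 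This bound holds at each $n$ for which the marginal p-values are valid and, by the asymptotic validity just established, holds in the limit superior in general.

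For part (b), I would appeal to the asymptotic $FWER$-control theory for step-down maxT and minP procedures \citep{dudoit2008}. These construct critical values from an estimate of the joint null distribution of the test statistics, here the Gaussian $N(0, \hat{\Sigma})$ obtained by plugging the consistent covariance estimator into the limiting law from the first step. The conditions to be verified are (i) joint convergence of $T_n$ to a limit with continuous marginals --- immediate; (ii) consistency of the estimated null reference distribution together with asymptotic subset pivotality --- here the joint limiting law of $\{T_{n,j} : j \in S_0^c\}$ is the centered Gaussian whose covariance is the submatrix of the limiting correlation matrix indexed by $S_0^c$, which coincides with the reference law restricted to those coordinates and does not depend on $\{\psi_{0,\ell} : \ell \notin S_0^c\}$; and (iii) a consistency condition on the statistics, namely divergence of the false-null statistics, already established. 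Granting (i)--(iii), the cited results yield $\limsup_{n \to \infty} FWER \le \alpha$.

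The main obstacle is part (b): assembling the asymptotic subset-pivotality argument and, in particular, checking that passing the eventually-rejected false nulls through the early steps of the step-down procedure does not inflate the realized $FWER$ --- this is exactly where condition (iii) and the nesting/monotonicity structure of the step-down critical values enter, and it must be stitched together from the cited results rather than read off directly. A secondary subtlety is that the finite-sample assertion in part (a) rests on exact validity of the marginal p-values; since $\omega_{n,j}^{-1}\psi_{c,n,j}$ is only asymptotically normal, the finite-sample statement should be read as holding at sample sizes where the normal reference is valid, with the unconditional guarantee being the asymptotic one.
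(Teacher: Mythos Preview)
Your proposal is correct and follows essentially the same approach as the paper: establish the joint asymptotic normality of $T_n$ via the SPVIM central limit theorem of \citet{williamson2020c} under (A1)--(A7), then invoke classical FWER-control results --- the paper cites Proposition~3.8 and Theorems~5.2 and~5.7 of \citet{dudoit2008} directly, whereas you spell out the Holm argument from \citet{holm1979} and enumerate the subset-pivotality and consistency conditions underlying the step-down results. Your caveat about the finite-sample claim in part (a) is well taken and is a point the paper's proof glosses over.
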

\begin{proof}
  Under the collection of conditions (A1)--(A7), $\sqrt{n}(\psi_{c,n} - \psi_0) \to_d Z \sim N(0, \Sigma_0)$ by Theorem 1 in \citet{williamson2020c}, where $\Sigma_0 = E_0\{\phi_0(O)\phi_0(O)^\top\}$ and $\phi_0$ is the vector of efficient influence function values provided in \citet{williamson2020c} for each $j$. Therefore, the centered and scaled test statistics $T_n$ follow a multivariate Gaussian distribution.

  Thus, by Proposition 3.8 in \citet{dudoit2008}, when $S_n(\alpha)$ is based on Holm-adjusted p-values the procedure has finite-sample and asymptotic control of the FWER. When $S_n(\alpha)$ is based on a step-down maxT or minP procedure, the procedure has asymptotic control of the FWER as a result of Theorems 5.2 and 5.7 in \citet{dudoit2008}, respectively.
\end{proof}

Under conditions (A1)--(A7) and (B1)--(B2), an application of Lemma~\ref{lem:fwer_control} and Theorem 6.3 in \citet{dudoit2008} to the procedure $S_n^+(k, \alpha)$ yields that
\begin{align*}
  Pr_{P_0}(V_n^+(k, \alpha) > k) = \alpha_n \text{ and } Pr_{P_0}(V_n^+(k, \alpha) / R_n^+(k, \alpha) > q) = \alpha_n \text{ for all } n,
\end{align*}
i.e., the gFWER$(k)$ and PFP$(q)$ are controlled in finite samples at level $\alpha_n$.

If additionally conditions (B3)--(B4) hold, then an application of Lemma~\ref{lem:fwer_control} and Theorem 6.5 in \citet{dudoit2008} to the procedure $S_n^+(k, \alpha)$ yields that
\begin{align*}
  \limsup_{n\to\infty}Pr_{P_0}(V_n^+(k, \alpha) > k) \leq \alpha \text{ and } \limsup_{n\to\infty}Pr_{P_0}(V_n^+(k, \alpha) / R_n^+(k, \alpha) > q) \leq \alpha,
\end{align*}
i.e., the gFWER$(k)$ and PFP$(q)$ are controlled asymptotically at level $\alpha$.

Finally, under the above conditions, an application of Lemma~\ref{lem:fwer_control} and Theorem 6.6 in \citet{dudoit2008} to the procedure $S_n^+(k, \alpha)$ yields that the FDR is controlled asymptotically.

In missing-data settings, we simply require that condition (A8) additionally hold, and modify the above displays to use $S_{M,n}^+(\alpha)$, $V_{M,n}^+(\alpha)$, and $R_{M,n}^+(\alpha)$ in place of $S_n^+(\alpha)$, $V_n^+(\alpha)$, and $R_n^+(\alpha)$.

\subsection{Proof of Lemma~\ref{lem:persistence}}

Suppose that we are in a complete-data setting. Without loss of generality, suppose that we use Holm-adjusted p-values to construct the initial set of selected variables and that the augmented set is chosen so as to control the gFWER$(k)$. For a fixed sample size $n$ and constant $k_n$, this results in selected set $S_n := S_n^+(k_n, \alpha)$, where $\lvert S_n \rvert = k_n$. The claim of persistence is equivalent to showing that
\begin{align*}
  V(f_{n, S_n}, P_0) - V(f_*, P_0) \to_P 0.
\end{align*}
We can decompose the left-hand side of the above expression into two terms:
\begin{align}\label{eq:persistence_decomposition}
  V(f_{n, S_n}, P_0) - V(f_*, P_0) = \{V(f_{n, S_n}, P_0) - V(f_{0, S_n}, P_0)\} - \{V(f_{0, S_n}, P_0) - V(f_*, P_0)\}.
\end{align}
The first term in \eqref{eq:persistence_decomposition} is the contribution to the limiting behavior of $V(f_{n, S_n}, P_0) - V(f_*, P_0)$ from estimating $f_0$ for a fixed $S_n$; by condition (A1),
\begin{align*}
  \lvert V(f_{n, S_n}, P_0) - V(f_{0, S_n}, P_0) \rvert \leq C \lVert f_{n, S_n} - f_{0, S_n} \rVert_{\mathcal{F}_{S_n}}^2 \to_P 0.
\end{align*}

The second term in \eqref{eq:persistence_decomposition} is the contribution to the limiting behavior of $V(f_{n, S_n}, P_0) - V(f_*, P_0)$ from selecting $S_n$ compared to the population-optimal set. To study this term, recall that for a fixed $p$, we have under conditions (A1), (A2), (A5), (A6), and (A7) that $\psi_{c,n,j} \to_P \psi_{0,j}$ for each $j \in \{1, \ldots, p\}$. Thus, for each $j \in S_0$, the p-value $p_{n,j}$ associated with testing the null hypothesis $H_{0,j}: \psi_{0,j} = 0$ converges to 0. This implies that as $n \to \infty$, $S_n(\alpha) \to_P S_0$. Moreover, by condition (B3), $S_0 \subseteq S_n^+(k_n, \alpha)$ as $n \to \infty$. By definition, $\psi_{0,j} > 0$ if and only if $V(f_{0, s \bigcup \{j\}}, P_0) - V(f_{0,s}, P_0) > 0$ for some $s \subseteq \{1, \ldots, p\}$. This implies that for $j \in S_0^c$, $V(f_{0, s \bigcup \{j\}}, P_0) - V(f_{0,s}, P_0) = 0$ for all $s \subseteq \{1, \ldots, p\}$. In particular, for $j \in S_0^c$,
\begin{align*}
  V(f_{0, S_0 \bigcup \{j\}}, P_0) - V(f_{0, S_0}, P_0) = 0.
\end{align*}
This implies that $S_n^+(k_n, \alpha) \to_P S_0$, which further implies that $ \{V(f_{0, S_n}, P_0) - V(f_*, P_0)\} \to_P 0$, proving the claim with
\begin{align*}
  V(f_{n, S_n}, P_0) - V(f_*, P_0) = o_P(n^{-1/2}).
\end{align*}

In a setting with missing data, we consider the imputation-based analogue of the above result. Suppose that we have a selected set $S_{M,n} := S_{M,n}^+(\alpha)$. Then
\begin{align*}
  \frac{1}{M}\sum_{m=1}^M V(f_{m,n,S_{M,n}}, P_0) - V(f_*, P_0) = \frac{1}{M}\sum_{m=1}^M \{V(f_{m,n,S_{M,n}}, P_0) - V(f_{0, S_{M,n}})\} - \{V(f_{0,S_{M,n}} - V(f_*, P_0))\}.
\end{align*}
Under conditions (A1), (A2), and (A5)--(A8), the same logic applies to the second term in the above display as applied to the second term in Equation~\eqref{eq:persistence_decomposition}, so $ \{V(f_{0, S_{M,n}}, P_0) - V(f_*, P_0)\} \to_P 0$. For the first term in the display, an application of (A1) to each of the $m$ terms in the average yields the desired convergence in probability.

\section{Additional numerical experiments}\label{sec:more_sims}

\subsection{Replicating all numerical experiments}

All numerical experiments presented here and in the main manuscript can be replicated using code available on GitHub.

In all cases, our simulated dataset consisted of independent replicates of $(X, Y)$, where $X = (X_1, \ldots, X_p)$ and $Y$ followed a Bernoulli distribution with success probability $\Phi\{\beta_{00} + f(\beta_0, x)\}$ conditional on $X = x$, where $\Phi$ denotes the cumulative distribution function of the standard normal distribution. Under this specification, $Y$ followed a probit model. A summary of the eight scenarios is provided in Table~\ref{tab:scenarios}.

\begin{table}
    \centering
    \begin{tabular}{c|cccc}
        Scenario & Outcome regression & Feature distribution & Importance & $p$ \\
        \hline
        1 & Linear & Independent normal & Mix & $\{30, 500\}$ \\
        2 & Nonlinear & Correlated normal & Weak & $6$ \\
        3 & Linear & Independent nonnormal & Mix & $\{30, 500\}$ \\
        4 & Nonlinear & Independent normal & Mix & $\{30, 500\}$ \\
        5 & Nonlinear & Independent nonnormal & Mix & $\{30, 500\}$ \\
        6 & Linear & Independent normal & Weak & $6$ \\
        7 & Linear & Correlated normal & Weak & $6$ \\
        8 & Nonlinear & Independent normal & Weak & $6$
    \end{tabular}
    \caption{Summary of the eight data-generating scenarios considered in the numerical experiments.}
    \label{tab:scenarios}
\end{table}

In Scenarios 3--5, we investigate the effect of departures from a multivariate normal feature distribution and a linear outcome regression model under a similar setup to Scenario 1. We set $\beta_{00} = 0.5$ and $\beta_0 = (-1, 1, -0.5, 0.5, 1/3, -1/3, \mathbf{0}_{p - 6})^\top$, where $\mathbf{0}_k$ denotes a zero-vector of dimension $k$. We vary $p \in \{30, 500\}$. In Scenario 3, we set $f(\beta_0, x) = x\beta_0$, but in contrast to Scenario 1, $X$ follows a nonnormal feature distribution specified by
\begin{align}
    X_1 \sim N(0.5, 1); \ X_2 \sim Binomial(0.5);& \ X_3 \sim Weibull(1.75, 1.9); \ X_4 \sim Lognormal(0.5, 0.5); \notag \\
    X_5 \sim Binomial(0.5);& \ X_6 \sim N(0.25, 1); \ (X_7,\ldots,X_p) \sim MVN(0, I_{p-6}). \label{eq:nonnormal-dist}
\end{align}
In Scenarios 4 and 5, the outcome regression follows the same nonlinear specification as in Scenario 2. Specifically, using a centering and scaling function $c_j$ for each variable,
\begin{align}
  f(\beta_0, x) =& \ 2[\beta_{0,1} f_1\{c_1(x_1)\} + \beta_{0,2} f_2\{c_2(x_2), c_3(x_3)\} + \beta_{0,3} f_3\{c_3(x_3)\} \notag \\
  &\ \ + \beta_{0,4} f_4\{c_4(x_4)\} + \beta_{0,5} f_2\{c_5(x_5), c_1(x_1)\} + \beta_{0,6} f_5\{c_6(x_6)\}], \label{eq:nonlinear-outcome-supp}\\
  f_1(x) =& \ \sin\left(\frac{\pi}{4}x \right), f_2(x, y) = xy, f_3(x) = \tanh{(x)}, \notag \\
  f_4(x) = & \ \cos\left(\frac{\pi}{4}x\right), f_5(x) = -\tanh{(x)}, \notag
\end{align}
where $\tanh{}$ denotes the hyperbolic tangent. In Scenario 4, $X \sim MVN(0, I_p)$, while in Scenario 5, $X$ follows the distribution specified in Equation~\eqref{eq:nonnormal-dist}.
In these scenarios, only the first six features truly influence the outcome; some of the features are strongly important, while others are more weakly important.

In the final scenarios, we investigate the effect of correlated features and departures from a linear outcome regression model in a setting where the features are equally, and weakly, important; these settings are similar to Scenario 2. In these cases, we set $p = 6$, $\beta_{00} = 0.5$, $\beta_0 = (0, 1, 0, 0, 0, 1)^\top$, and $X \sim MVN(0, \Sigma)$, where $\Sigma_{i,j} = \rho_1^{\lvert i - j \rvert}$ for $i, j$ not in the active set, and $\Sigma_{i,j} = I_p + \rho_2(J_p - I_p)$ for $i, j$ in the active set, where $J_p$ is a $p\times p$ matrix of ones. In Scenarios 6 and 7 we set $f(\beta_0, x) = x\beta_0$, while in Scenario 8 $f$ is specified as in Equation~\eqref{eq:nonlinear-outcome-supp}. In Scenarios 6 and 8 we set $\rho_1 = \rho_2 = 0$, while in Scenario 7 we set $\rho_1 = 0.3$ and $\rho_2 = 0.95$.

\subsection{Tuning parameters for variable selection}

The tuning parameters that specify each variable selection procedure are as follows. For the intrinsic selection algorithm, we determined $k$ and $q$ for error control using a target specificity at $n = 3000$ of 75\% for $p = 6$, 85\% for $p = 30$, and 95\% for $p = 500$. For target specificity denoted by $s_p$ and $s_0 = \sum_{j=1}^p I(\beta_{0j} > 0)$, we set $k = \lceil(1 - s_p)(p - s_0) \rceil$, where $\lceil \cdot \rceil$ denotes the ceiling; and set $q = k\{p^{-1}(p-s_0)(n/200)^{1/2} + k\}^{-1}$. The exact values of $k$ (for $gFWER(k)$ control) and $q$ (for $PFP(q)$ control) are provided in Table~\ref{tab:gfwer_pfp_values}. For stability selection, we specified stability selection threshold equal to 0.9 and target per-comparison type I error rate of 0.04. For the lasso with knockoffs, we set target FDR equal to 0.2. 

For cases with missing data, the methods compared are: stability selection within bootstrap imputation, lasso + SS (LJ); bootstrap imputation with bolasso, lasso + SS (BI-BL); SPVIM + gFWER, intrinsic selection to control the generalized familywise error rate; SPVIM + PFP, intrinsic selection to control the proportion of false positives among the rejected variables; and SPVIM + FDR, intrinsic selection to control the false discovery rate.

For cases with complete data the methods compared are: lasso; lasso + SS, lasso with stability selection; lasso + KF, lasso with knockoffs; SPVIM + gFWER, intrinsic selection to control the generalized familywise error rate; SPVIM + PFP, intrinsic selection to control the proportion of false positives among the rejected variables; and SPVIM + FDR, intrinsic selection to control the false discovery rate.

\begin{table}
    \centering
    \begin{tabular}{cc|cccc}
        $n$ & $p$ & SS$_q$ & Target specificity & $k$ & $q$ \\
        \hline
        200 & 30 & 23 & 0.762 & 6 & 0.882 \\
        500 & 30 & 23 & 0.774 & 6 & 0.826 \\
        1500 & 30 & 23 & 0.809 & 5 & 0.695 \\
        3000 & 30 & 23 & 0.854 & 4 & 0.564 \\
        200 & 500 & 91 & 0.812 & 94 & 0.990 \\
        500 & 500 & 91 & 0.824 & 88 & 0.983 \\
        1500 & 500 & 91 & 0.861 & 69 & 0.962 \\
        3000 & 500 & 91 & 0.904 & 48 & 0.926 \\
    \end{tabular}
    \caption{Values of: the number of variables selected in each bootstrap run of stability selection (SS$_q$), target specificity for $gFWER(k)$ and $PFP(q)$ control, and $k$ and $q$ used for $gFWER$ and $PFP$ control, respectively, in the numerical experiments.}
    \label{tab:gfwer_pfp_values}
\end{table}

\subsection{Super Learner specification}

The specific candidate learners and their corresponding tuning parameters for our Super Learner library are provided in Tables~\ref{tab:sl-algs} (Scenarios 1, 3--5) and \ref{tab:sl-algs-2} (Scenarios 2, 6--8). In both cases, we used a wide variety of algorithms, each with several tuning parameter values, in an effort to be robust to model misspecification. It is possible that with a different library of learners, different results could be obtained. 

For the internal library in our intrinsic selection procedure in Scenarios 1 and 3--5, we first pre-screened variables based on their univariate rank correlation with the outcome, and then fit boosted trees with maximum depth equal to three and shrinkage equal to 0.1. In Scenarios 2 and 6--8, we again first pre-screened variables based on their univariate rank correlation with the outcome, and then fit a logistic regression or boosted trees with maximum depth equal to four, shrinkage equal to 0.1, and number of rounds equal to 100. Recall that within the intrinsic selection procedure, we estimate the optimal prediction function for each subset $s$ of the $p$ features. The univariate rank correlation screen operated as follows: if $\lvert s\rvert \leq 2$, we did no screening; if $2 < \lvert s \rvert < 100$, we picked the top two variables ranked by univariate correlation with the outcome; and if $\lvert s \rvert \geq 100$, we picked the top ten variables ranked by univariate correlation with the outcome. This screening substantially reduced the computation time for the intrinsic selection procedure, and reflects the type of aggressive screen that is used in some cases \citep{neidich2019}. Also, the univariate comparisons of each feature to the null model (with no features) are given high weight in the intrinsic importance measure, so screening should not have much impact on the final intrinsic importance estimate.

\begin{table}
    \centering
    \begin{tabular}{c|ccc}
       Candidate Learner & R & Tuning Parameter & Tuning parameter  \\
       & Implementation & and possible values & description\\ \hline
        Random forests & \texttt{ranger} & \texttt{mtry} $\in \{1/2, 1, 2\}\sqrt{p}$ ${}^{\dagger}$ & Number of variables \\
        & \citep{rangerpkg} & & to possibly split \\
        & & & at in each node \\ \hline
        Gradient boosted & \texttt{xgboost} & \texttt{max.depth} $ \in \{1, 3\}$ &  Maximum tree depth\\
        trees & \citep{xgboostpkg} & & \\ \hline
        Support vector & \texttt{ksvm} &  & \\
        machines & \citep{kernlabpkg} & & \\ \hline
        Lasso & \texttt{glmnet} & $\lambda$ & $\ell_1$ regularization  \\
        & \citep{glmnetpkg} & chosen via 10-fold CV & parameter \\ \hline
    \end{tabular}
    \caption{Candidate learners in the Super Learner ensemble for Scenarios 1 and 3--5 along with their R implementation, tuning parameter values, and description of the tuning parameters. All tuning parameters besides those listed here are set to their default values. In particular, the random forests are grown with 500 trees, a minimum node size of 5 for continuous outcomes and 1 for binary outcomes, and a subsampling fraction of 1; the boosted trees are grown with a maximum of 1000 trees, shrinkage rate of 0.1, and a minimum of 10 observations per node; and the SVMs are fit with radial basis kernel, cost of constraints violation equal to 1, upper bound on training error (\texttt{nu}) equal to 0.2, \texttt{epsilon} equal to 0.1, and three-fold cross-validation with a sigmoid for calculating class probabilities. \\
    ${}^{\dagger}$: $p$ denotes the total number of predictors. }
    \label{tab:sl-algs}
\end{table}

\begin{table}
    \centering
    \begin{tabular}{c|ccc}
       Candidate Learner & R & Tuning Parameter & Tuning parameter  \\
       & Implementation & and possible values & description\\ \hline
        Random forests & \texttt{ranger} & \texttt{min.node.size} $\in $ & Minimum  \\
        &  & $\{1, 20, 50, 100, 250, 500\}$ & node size\\ \hline
        Gradient boosted & \texttt{xgboost} & \texttt{shrinkage} $ \in \{1\times 10^{-2}, 1\times 10^{-1}\}$ &  Shrinkage\\
        trees &  & \texttt{ntrees} $\in \{100, 1000\}$ & Number of trees\\ \hline
        Support vector & \texttt{ksvm} &  & \\
        machines &  & & \\ \hline
        Lasso & \texttt{glmnet} & $\lambda$ & $\ell_1$ regularization  \\
        &  & chosen via 10-fold CV & parameter \\ \hline
    \end{tabular}
    \caption{Candidate learners in the Super Learner ensemble for Scenarios 2 and 6--8 along with their R implementation, tuning parameter values, and description of the tuning parameters. All tuning parameters besides those listed here are set to their default values. In particular, the random forests are grown with 500 trees and a subsampling fraction of 1; the boosted trees are grown with a minimum of 10 observations per node; and the SVMs are fit with radial basis kernel, cost of constraints violation equal to 1, upper bound on training error (\texttt{nu}) equal to 0.2, \texttt{epsilon} equal to 0.1, and three-fold cross-validation with a sigmoid for calculating class probabilities.}
    \label{tab:sl-algs-2}
\end{table}

\subsection{Additional results from Scenarios 1 and 2 with missing data}

In the main manuscript, we presented results with a maximum of 40\% missing data in some variables in Scenarios 1 and 2. In Figure~\ref{fig:scenario-1-select-supp} we present results in an intermediate setting with a maximum of 20\% missing data in some variables; the results in this setting tend to be similar to the results with maximum 40\% missing data.

\begin{figure}
  \centering
  \includegraphics[width=1\textwidth]{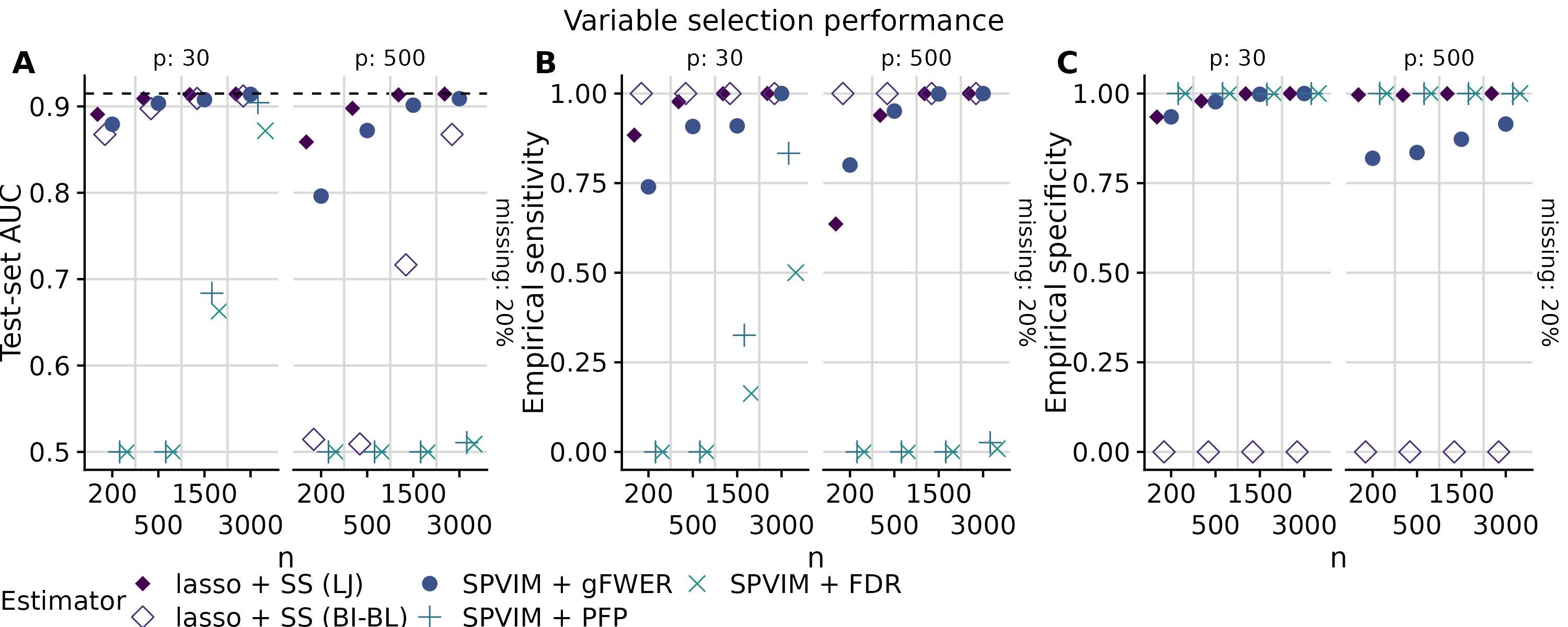}
  \caption{Test-set AUC (panel A) and empirical variable selection sensitivity (panel B) and specificity (panel C) vs $n$ for each estimator and missing data proportion equal to 0.2, in Scenario 1 (a linear model for the outcome and multivariate normal features). The dotted line in panel A shows the true (optimal) test-set AUC.}
  \label{fig:scenario-1-select-supp}
\end{figure}

In In Figure~\ref{fig:scenario-2-select-supp} we present results in an intermediate setting with a maximum of 20\% missing data in some variables, which again tend to be similar to the results with maximum 40\% missing data.

\begin{figure}
  \centering
  \includegraphics[width=1\textwidth]{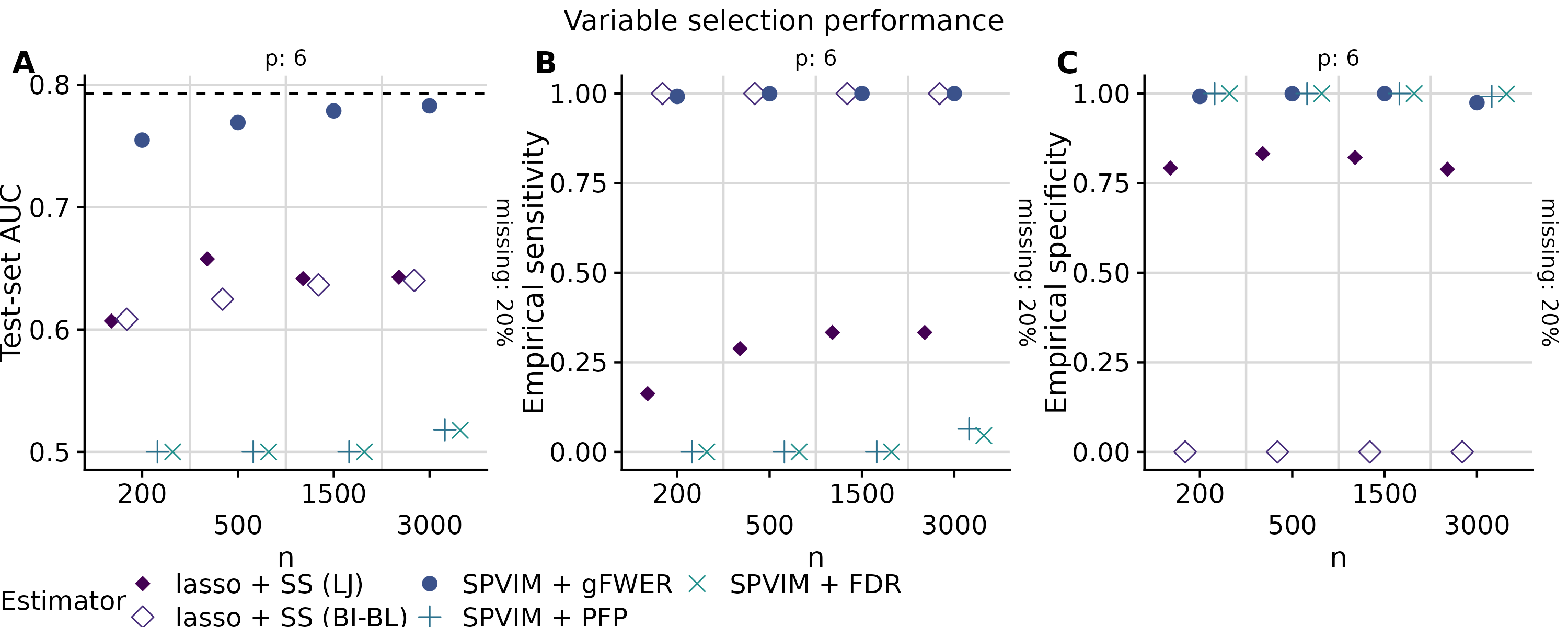}
  \caption{Test-set AUC (panel A) and empirical variable selection sensitivity (panel B) and specificity (panel C) vs $n$ for each estimator and missing data proportion equal to 0.2, in Scenario 2 (a nonlinear model for the outcome and correlated multivariate normal features). The dotted line in panel A shows the true (optimal) test-set AUC.}
  \label{fig:scenario-2-select-supp}
\end{figure}

In Figures~\ref{fig:scenario-1-probs-20}--\ref{fig:scenario-1-probs-40}, we display the empirical selection probability for each active-set variable under each selection algorithm in Scenario 1. All active-set variables are selected with high probability by all procedures, with the exception of SPVIM + FDR and SPVIM + PFP. In small samples, all estimators besides lasso + SS (BI-BL) sometimes fail to select variables 5 and 6, the variables with smallest intrinsic importance; these variables are selected with low probability by SPVIM + PFP and SPVIM + FDR at all sample sizes considered here. In the higher dimensional case, SPVIM + gFWER selects these variables in cases where lasso + SS (LJ) does not. This reflects the low true importance of these variables combined with tuning parameters that provide strict PFP and FDR control. As the proportion of missing data increases, the selection probabilities tend to decrease slightly.

\begin{figure}
  \centering
  \includegraphics[width=1\textwidth]{{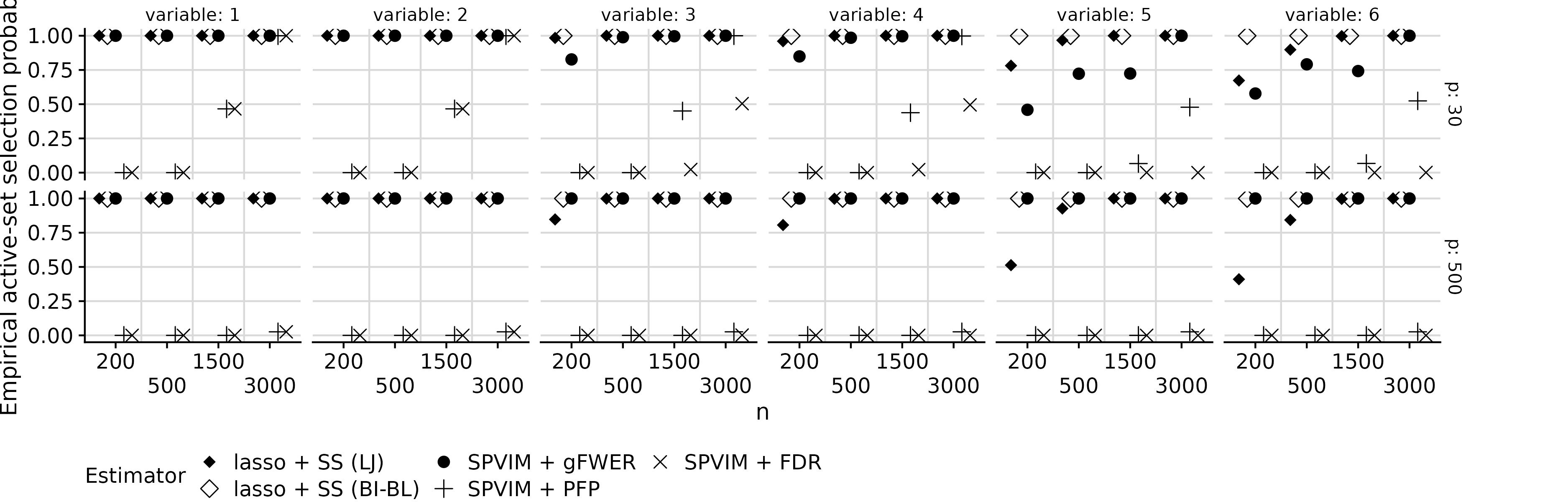}}
  \caption{Empirical selection probability for each active-set variable vs $n$ for each estimator and dimension with missing data proportion equal to 0.2, in Scenario 1 (a linear model for the outcome and multivariate normal features).}
  \label{fig:scenario-1-probs-20}
\end{figure}

\begin{figure}
  \centering
  \includegraphics[width=1\textwidth]{{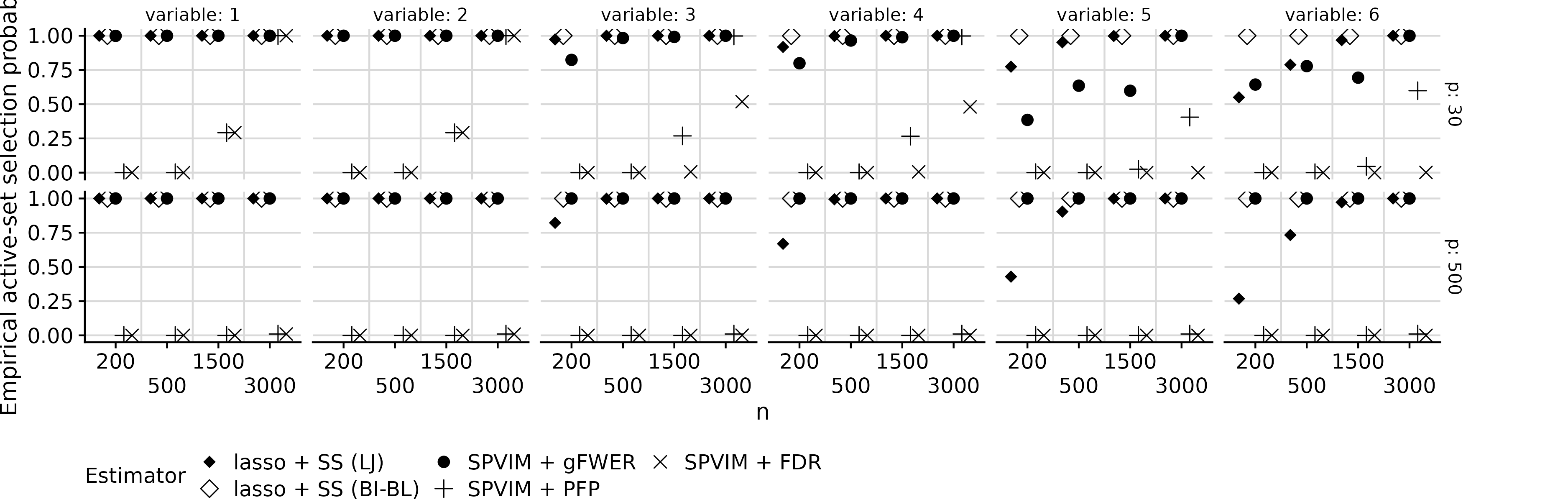}}
  \caption{Empirical selection probability for each active-set variable vs $n$ for each estimator and dimension with missing data proportion equal to 0.4, in Scenario 1 (a linear model for the outcome and multivariate normal features).}
  \label{fig:scenario-1-probs-40}
\end{figure}

In Figures~\ref{fig:scenario-2-probs-20}--\ref{fig:scenario-2-probs-40}, we display the empirical selection probability for each active-set variable under each selection algorithm in Scenario 2. In this scenario, as expected, the selection probability is low for lasso + SS (LJ) and high for SPVIM + gFWER (as reflected in the empirical sensitivity presented in the main manuscript). Variables 2 and 3, which are highly correlated and include an interaction term not modelled by the lasso, have the lowest selection probability for lasso + SS (LJ), as expected (though lasso + SS (BI-BL) has perfect sensitivity, it also has zero specificity).

\begin{figure}
  \centering
  \includegraphics[width=1\textwidth]{{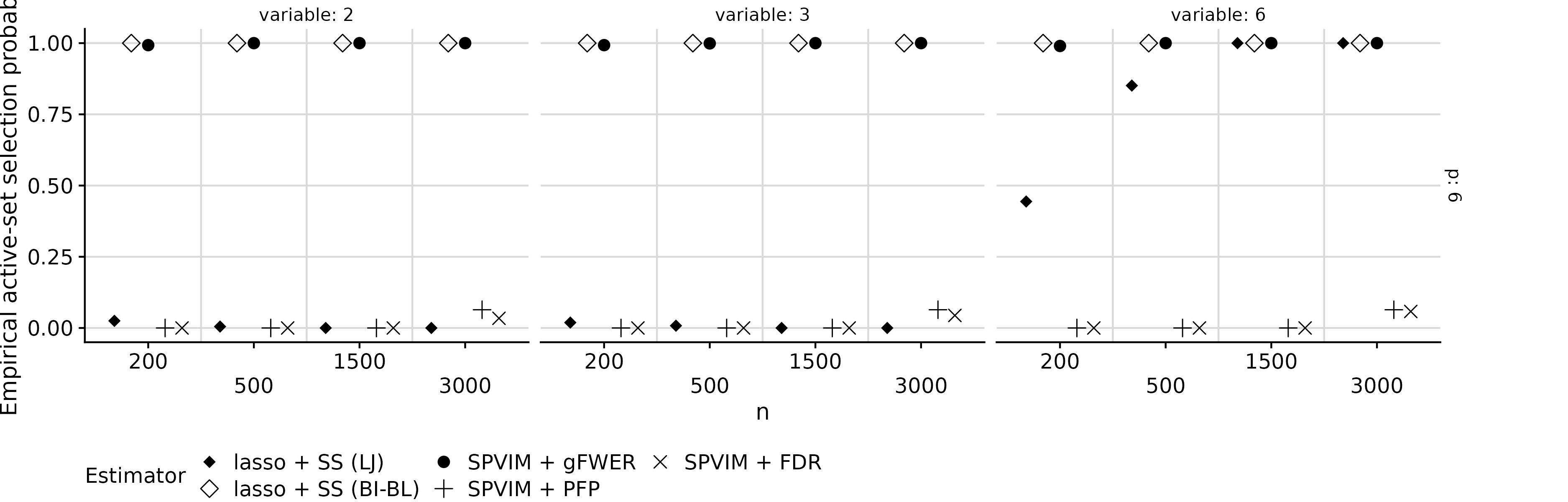}}
  \caption{Empirical selection probability for each active-set variable vs $n$ for each estimator, in Scenario 2 (a nonlinear model for the outcome and correlated multivariate normal features).}
  \label{fig:scenario-2-probs-20}
\end{figure}

\begin{figure}
  \centering
  \includegraphics[width=1\textwidth]{{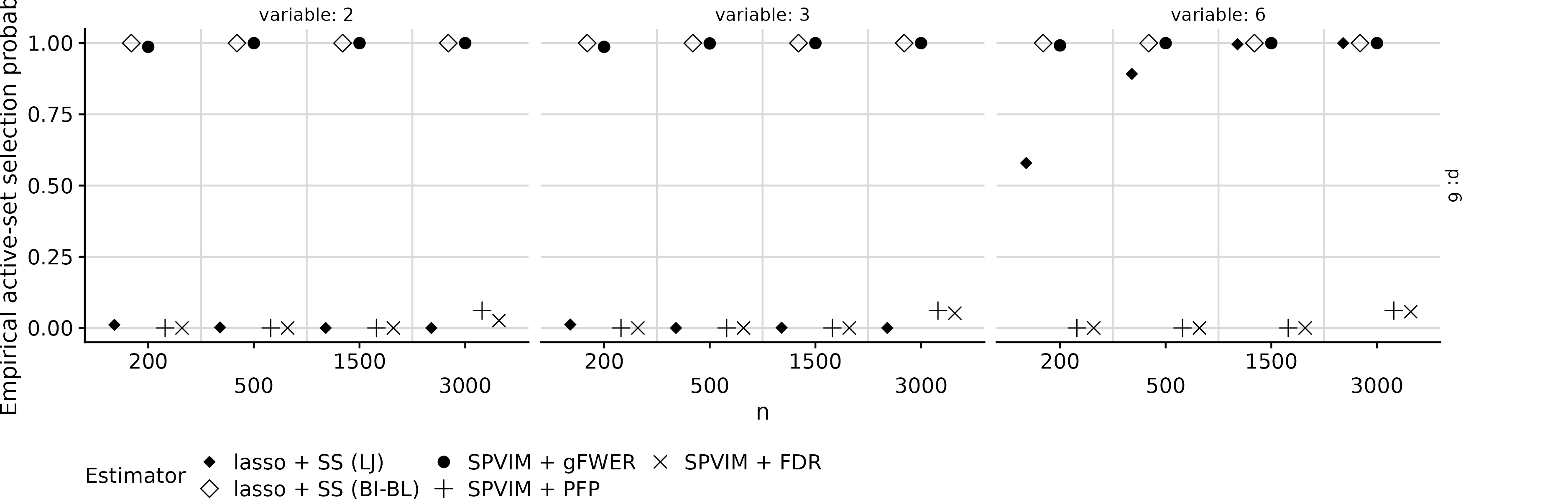}}
  \caption{Empirical selection probability for each active-set variable vs $n$ for each estimator, in Scenario 2 (a nonlinear model for the outcome and correlated multivariate normal features).}
  \label{fig:scenario-2-probs-40}
\end{figure}

\subsection{Results from Scenarios 3--8 with missing data}

In Scenario 3, we generate features from a nonnormal joint distribution and the outcome is a linear combination of these features. We display the results of this experiment in Figure \ref{fig:scenario-3-select}. We observe similar performance in this scenario to the performance we observed in Scenario 1: test-set AUC increases towards the optimal value with increasing sample size for all estimators, though slowest for SPVIM + FDR and SPVIM + PFP; empirical sensitivity and specificity tend to both increase, with the exception of the lasso + SS (BI-BL) algorithm, which has near-zero specificity at all sample sizes considered here.

\begin{figure}
  \centering
  \includegraphics[width=1\textwidth]{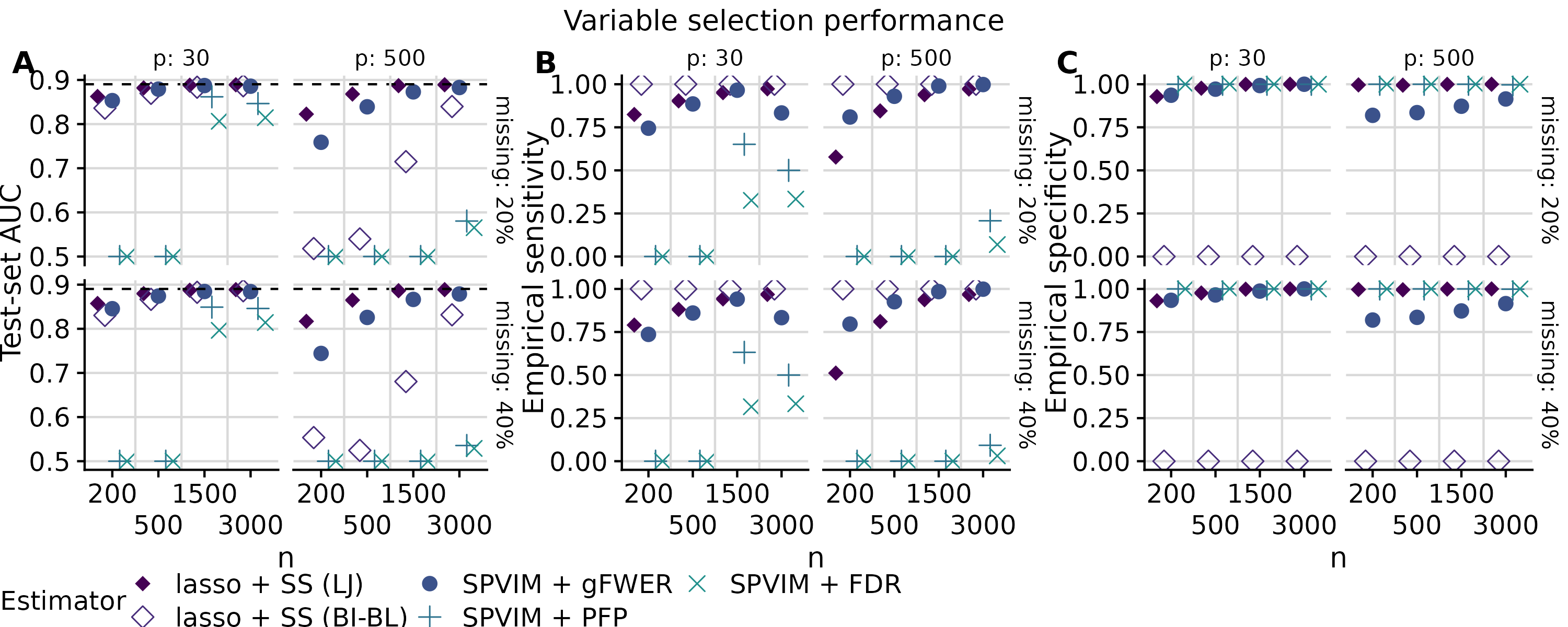}
  \caption{Test-set AUC (panel A) and empirical variable selection sensitivity (panel B) and specificity (panel C) vs $n$ for each estimator and missing data proportion, in Scenario 3 (a linear model for the outcome and nonnormal features). The dotted line in panel A shows the true (optimal) test-set AUC.}
  \label{fig:scenario-3-select}
\end{figure}

In Scenario 4, we generate features from a multivariate normal distribution and the outcome is a nonlinear combination of these features. In this case, lasso-based methods follow a misspecified mean model. We display the results of this experiment in Figure~\ref{fig:scenario-4-select}. We observe that test-set AUC tends to increase quickly towards the optimal AUC with increasing sample size for the SPVIM + gFWER procedure, but increases more slowly for lasso-based procedures; empirical sensitivity and specificity tend to both increase, with the exception of the lasso + SS (BI-BL) algorithm, which again has near-zero specificity at all sample sizes considered here. In this case, among the algorithms with non-zero specificity, SPVIM + gFWER has the highest sensitivity at all sample sizes considered here.

\begin{figure}
  \centering
  \includegraphics[width=1\textwidth]{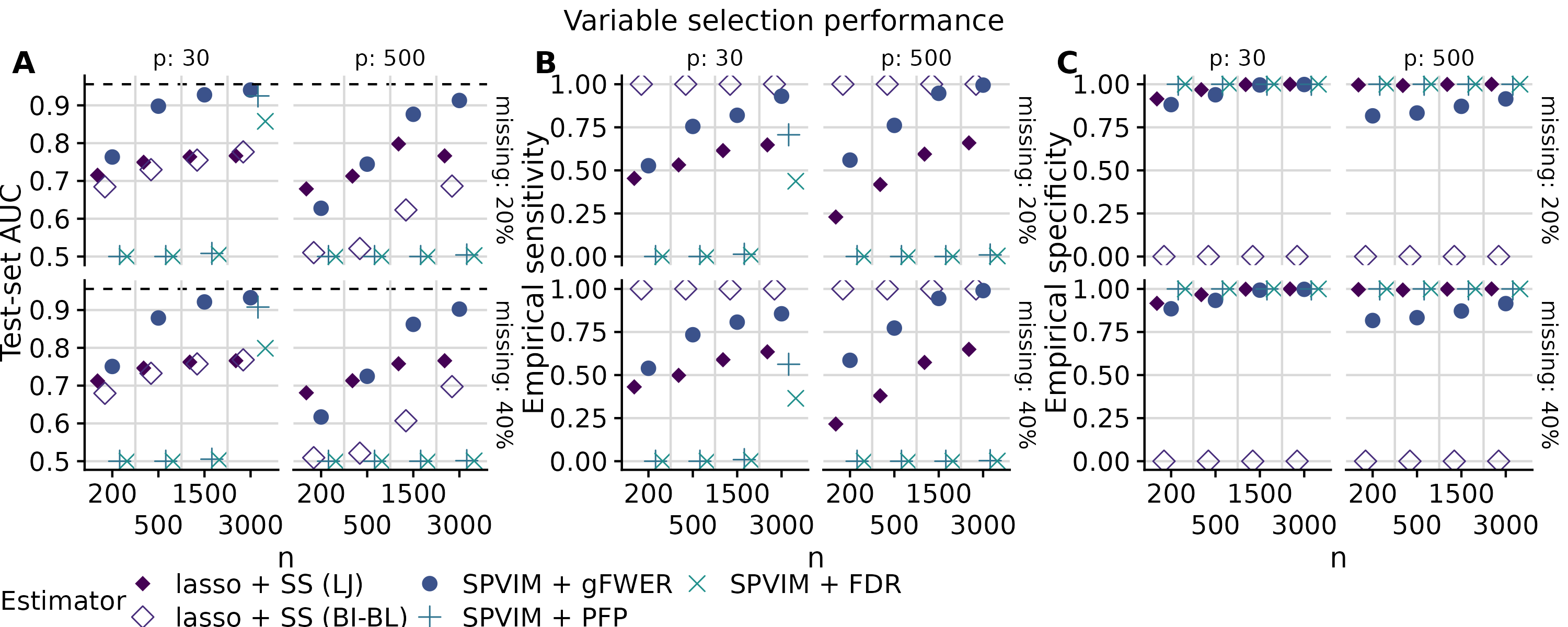}
  \caption{Test-set AUC (panel A) and empirical variable selection sensitivity (panel B) and specificity (panel C) vs $n$ for each estimator and missing data proportion, in Scenario 4 (a nonlinear model for the outcome and normal features). The dotted line in panel A shows the true (optimal) test-set AUC.}
  \label{fig:scenario-4-select}
\end{figure}

In Figure~\ref{fig:scenario-5-select}, we display the results of the experiment conducted under Scenario 5, in which the features are nonnormal and the outcome-feature relationship is nonlinear. In this case, the lasso-based methods are misspecified. In panel A, we observe that lasso-based methods have test-set AUC increasing slowly with $n$, while SPVIM + gFWER has test-set AUC approaching the optimal value more quickly. In panels B and C, we see that sensitivity tends to be lower than in Scenario 1 for all procedures, though still increasing towards one; and that specificity trends are similar to those in Scenario 1. In all cases considered here, SPVIM + gFWER has higher empirical sensitivity than lasso + SS (LJ), and often has comparable specificity, particularly in the lower-dimensional setting.

\begin{figure}
  \centering
  \includegraphics[width=1\textwidth]{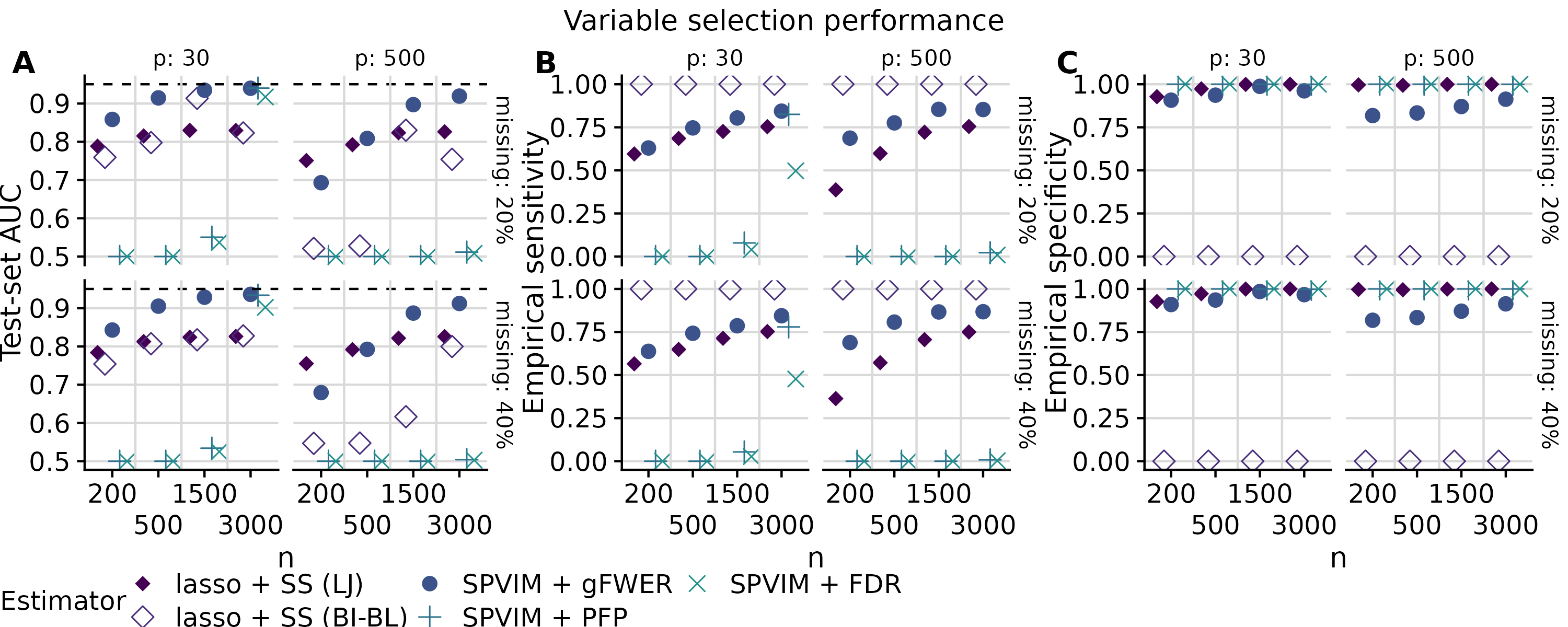}
  \caption{Test-set AUC (panel A) and empirical variable selection sensitivity (panel B) and specificity (panel C) vs $n$ for each estimator and missing data proportion, in Scenario 5 (a nonlinear model for the outcome and nonnormal features). The dotted line in panel A shows the true (optimal) test-set AUC.}
  \label{fig:scenario-5-select}
\end{figure}

In Scenarios 6--8, the features are more weakly important. We present the results of the experiments under these scenarios in Figures~\ref{fig:scenario-6-select}--\ref{fig:scenario-8-select}. In Scenario 7, we observe reduced variable selection performance for the lasso-based procedures compared to Scenario 6. In Scenario 8, we observe similar trends to Scenario 2, though performance for the lasso-based methods tends to be better than the performance we observed in Scenario 2, reflecting that this scenario does not involve correlation among the features. These experiments suggest that correlation makes variable selection more difficult, particularly in combination with a misspecified outcome regression model.

\begin{figure}
  \centering
  \includegraphics[width=1\textwidth]{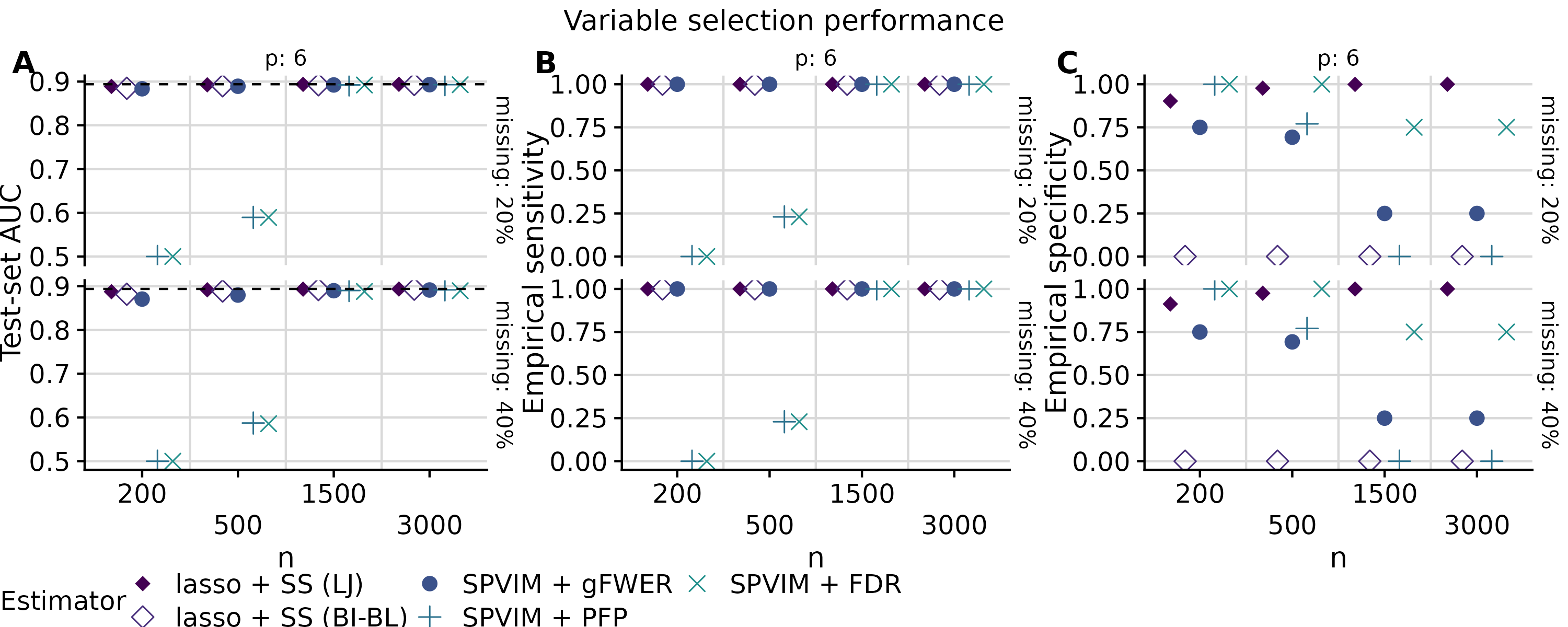}
  \caption{Test-set AUC (panel A) and empirical variable selection sensitivity (panel B) and specificity (panel C) vs $n$ for each estimator and missing data proportion, in Scenario 6 (a weak linear model for the outcome and normal features). The dotted line in panel A shows the true (optimal) test-set AUC.}
  \label{fig:scenario-6-select}
\end{figure}

\begin{figure}
  \centering
  \includegraphics[width=1\textwidth]{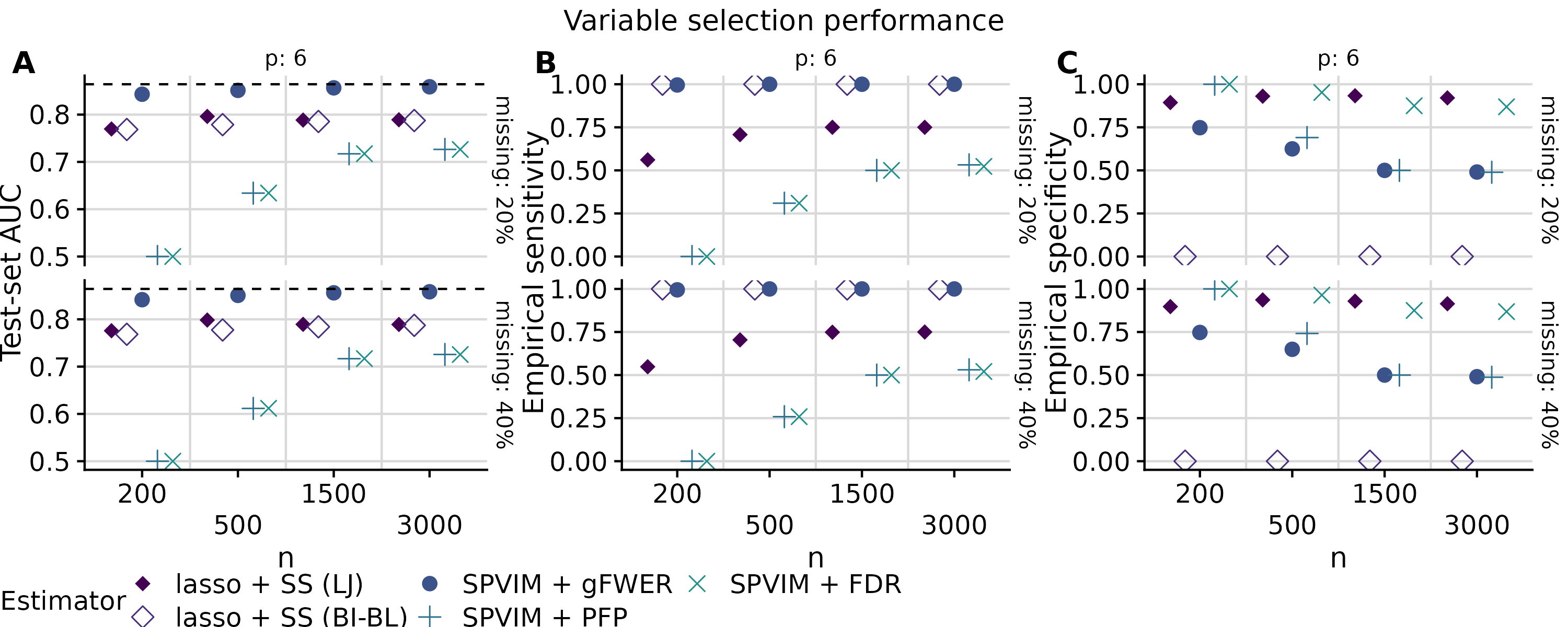}
  \caption{Test-set AUC (panel A) and empirical variable selection sensitivity (panel B) and specificity (panel C) vs $n$ for each estimator and missing data proportion, in Scenario 7 (a weak nonlinear model for the outcome and correlated normal features). The dotted line in panel A shows the true (optimal) test-set AUC.}
  \label{fig:scenario-7-select}
\end{figure}

\begin{figure}
  \centering
  \includegraphics[width=1\textwidth]{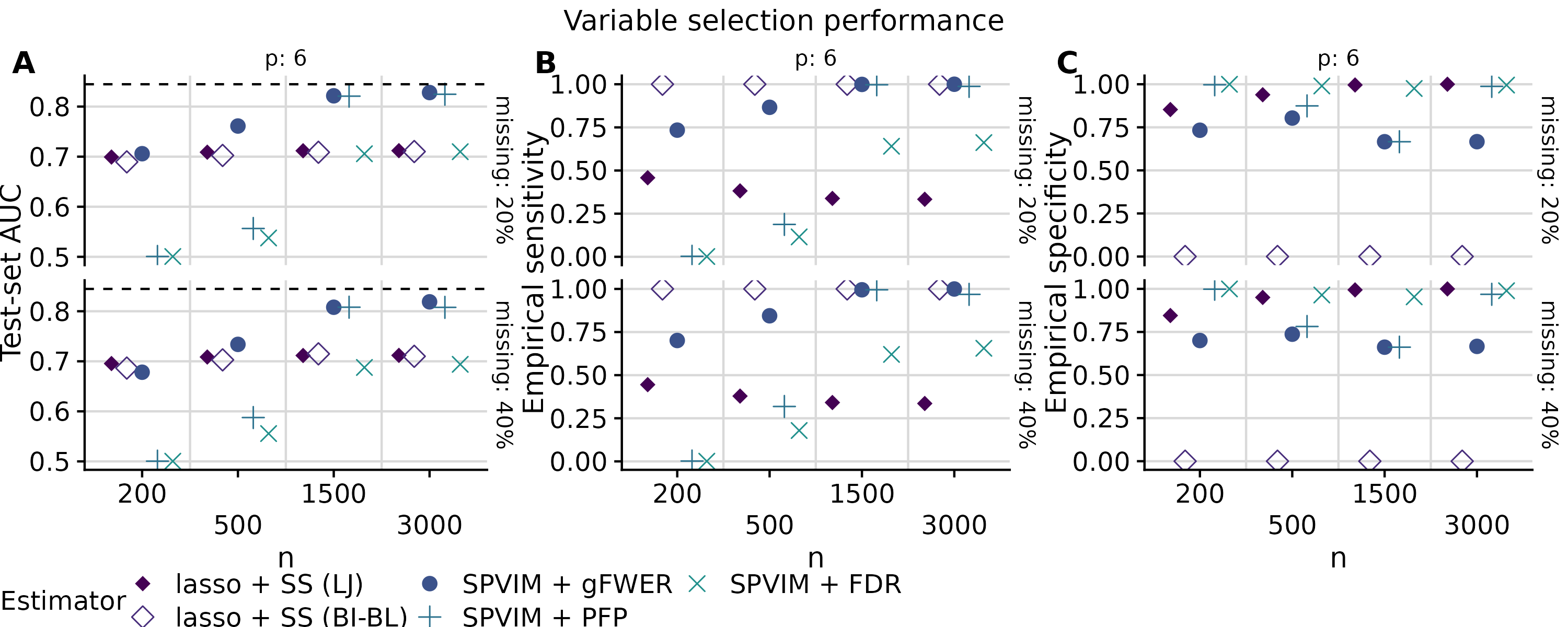}
  \caption{Test-set AUC (panel A) and empirical variable selection sensitivity (panel B) and specificity (panel C) vs $n$ for each estimator and missing data proportion, in Scenario 8 (a weak nonlinear model for the outcome and normal features). The dotted line in panel A shows the true (optimal) test-set AUC.}
  \label{fig:scenario-8-select}
\end{figure}

In Figures~\ref{fig:scenario-3-probs-20}--\ref{fig:scenario-8-probs-40}, we display the empirical selection probability for each active-set variable under each selection algorithm in Scenarios 3--8. We observe similar performance in Scenario 3 as in Scenario 1. In Scenarios 3 and 4, we observe that most procedures select variables 1, 2, 3, 4, and 6 with high probability as sample size increases. However, in the higher-dimensional case lasso-based procedures select variable 5 with lower probability than our proposed intrinsic selection procedure. Variable 5 is moderately important (its coefficient is 1, compared to a maximum coefficient of 2), but the function relating this variable to the outcome is highly nonlinear over its support. In Scenario 6--8, we observe similar patterns to Scenario 5: variables 2 and 3 tend to be selected infrequently by the lasso-based procedures, but with high frequency by the intrinsic selection procedure.

\begin{figure}
  \centering
  \includegraphics[width=1\textwidth]{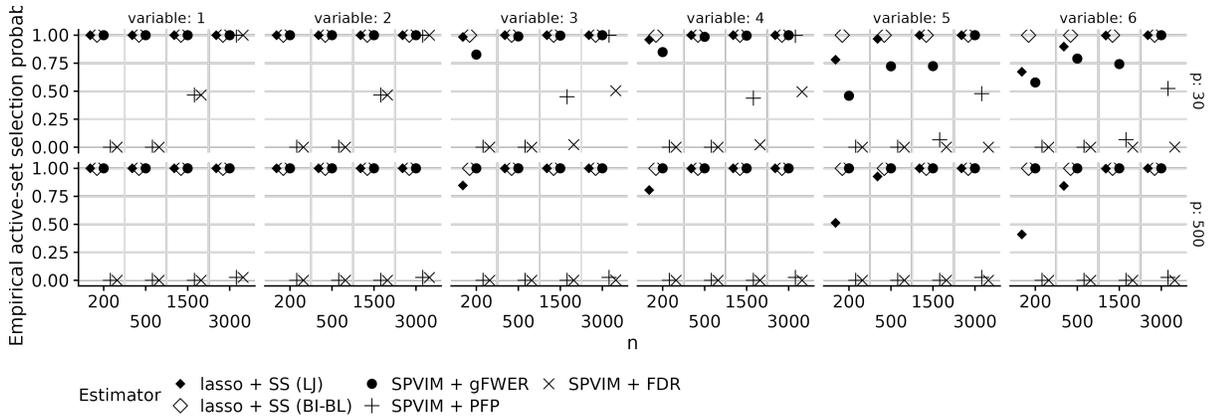}
  \caption{Empirical selection probability for each active-set variable vs $n$ for each estimator and dimension with missing data proportion equal to 0.2, in Scenario 3 (a linear model for the outcome and nonnormal features).}
  \label{fig:scenario-3-probs-20}
\end{figure}

\begin{figure}
  \centering
  \includegraphics[width=1\textwidth]{{plots/binomial-probit-linear-normal-nested_select-props_40.png}}
  \caption{Empirical selection probability for each active-set variable vs $n$ for each estimator and dimension with missing data proportion equal to 0.4, in Scenario 3 (a linear model for the outcome and nonnormal features).}
  \label{fig:scenario-3-probs-40}
\end{figure}

\begin{figure}
  \centering
  \includegraphics[width=1\textwidth]{{plots/binomial-probit-linear-normal-nested_select-props_20.png}}
  \caption{Empirical selection probability for each active-set variable vs $n$ for each estimator and dimension with missing data proportion equal to 0.2, in Scenario 4 (a nonlinear model for the outcome and multivariate normal features).}
  \label{fig:scenario-4-probs-20}
\end{figure}

\begin{figure}
  \centering
  \includegraphics[width=1\textwidth]{{plots/binomial-probit-linear-normal-nested_select-props_40.png}}
  \caption{Empirical selection probability for each active-set variable vs $n$ for each estimator and dimension with missing data proportion equal to 0.4, in Scenario 4 (a nonlinear model for the outcome and multivariate normal features).}
  \label{fig:scenario-4-probs-40}
\end{figure}

\begin{figure}
  \centering
  \includegraphics[width=1\textwidth]{{plots/binomial-probit-linear-normal-nested_select-props_20.png}}
  \caption{Empirical selection probability for each active-set variable vs $n$ for each estimator and dimension with missing data proportion equal to 0.2, in Scenario 5 (a nonlinear model for the outcome and nonnormal features).}
  \label{fig:scenario-5-probs-20}
\end{figure}

\begin{figure}
  \centering
  \includegraphics[width=1\textwidth]{{plots/binomial-probit-linear-normal-nested_select-props_40.png}}
  \caption{Empirical selection probability for each active-set variable vs $n$ for each estimator and dimension with missing data proportion equal to 0.4, in Scenario 5 (a nonlinear model for the outcome and nonnormal features).}
  \label{fig:scenario-5-probs-40}
\end{figure}

\begin{figure}
  \centering
  \includegraphics[width=1\textwidth]{{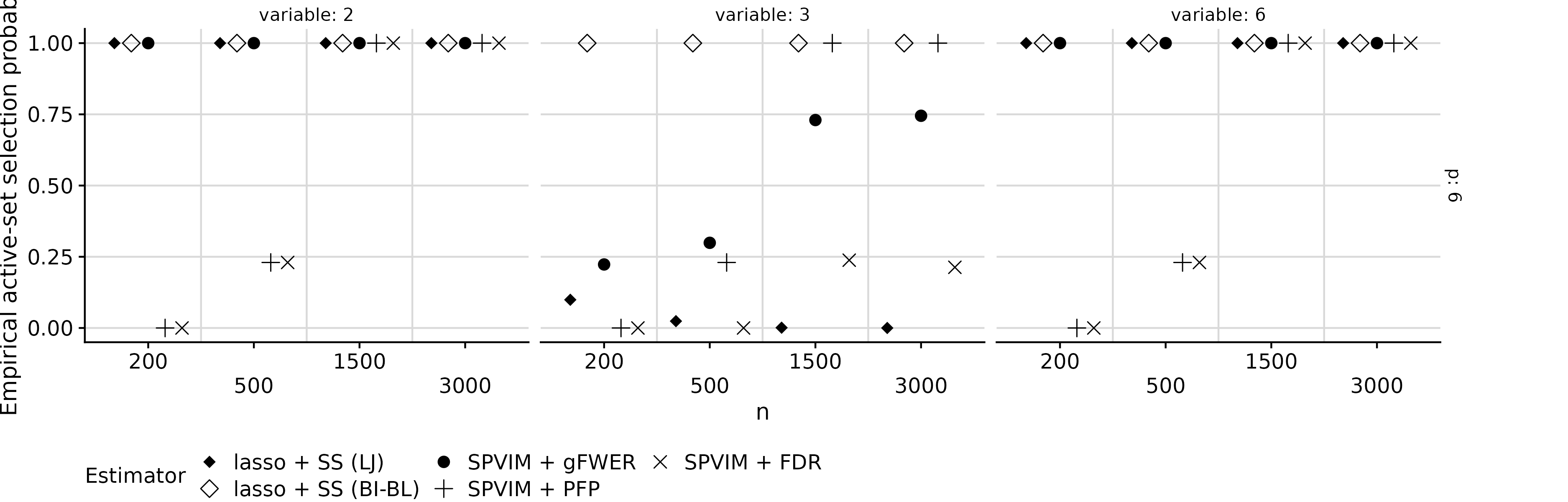}}
  \caption{Empirical selection probability for each active-set variable vs $n$ for each estimator, in Scenario 6 (a weak linear model for the outcome and normal features).}
  \label{fig:scenario-6-probs-20}
\end{figure}

\begin{figure}
  \centering
  \includegraphics[width=1\textwidth]{{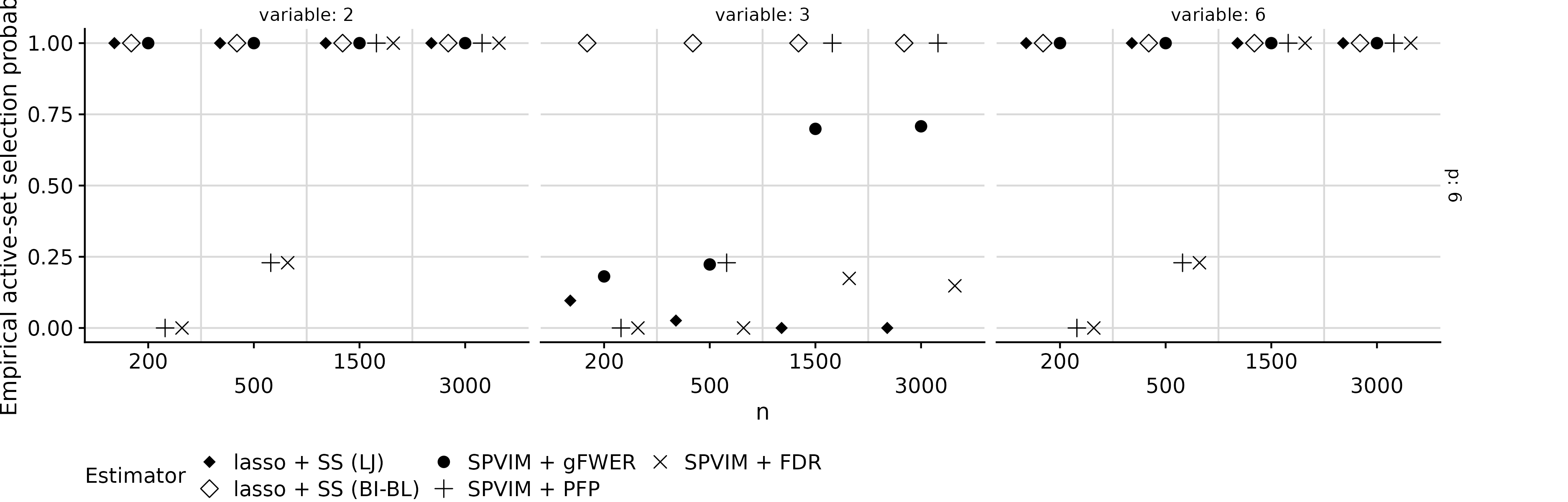}}
  \caption{Empirical selection probability for each active-set variable vs $n$ for each estimator, in Scenario 6 (a weak linear model for the outcome and normal features).}
  \label{fig:scenario-6-probs-40}
\end{figure}

\begin{figure}
  \centering
  \includegraphics[width=1\textwidth]{{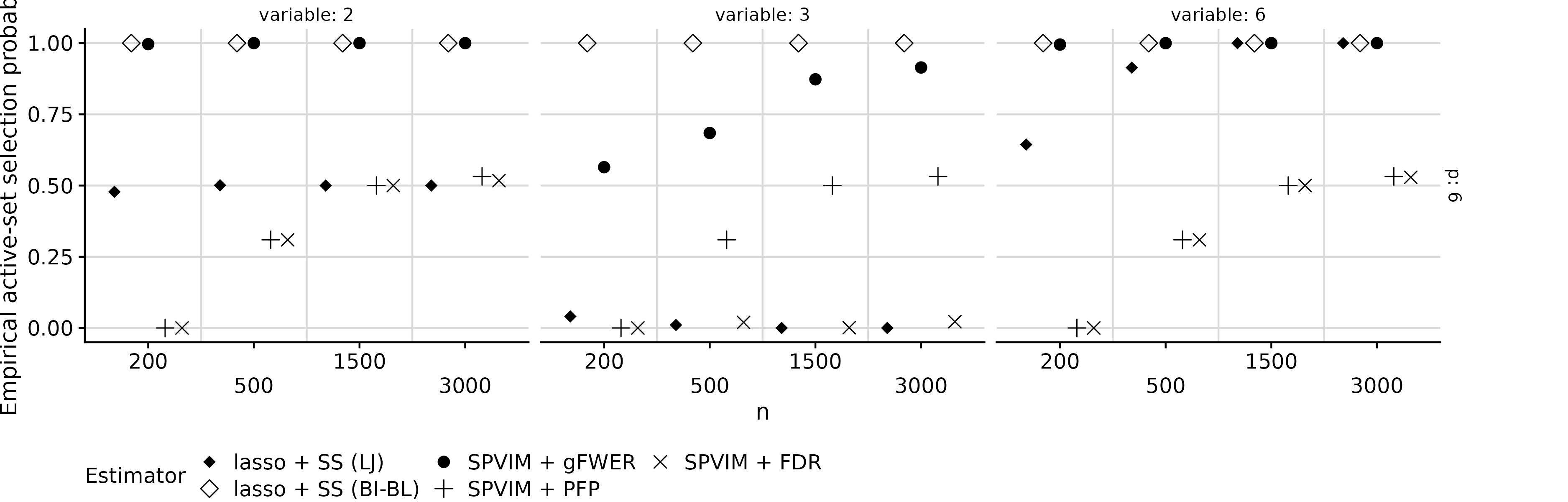}}
  \caption{Empirical selection probability for each active-set variable vs $n$ for each estimator, in Scenario 7 (a weak linear model for the outcome and correlated normal features).}
  \label{fig:scenario-7-probs-20}
\end{figure}

\begin{figure}
  \centering
  \includegraphics[width=1\textwidth]{{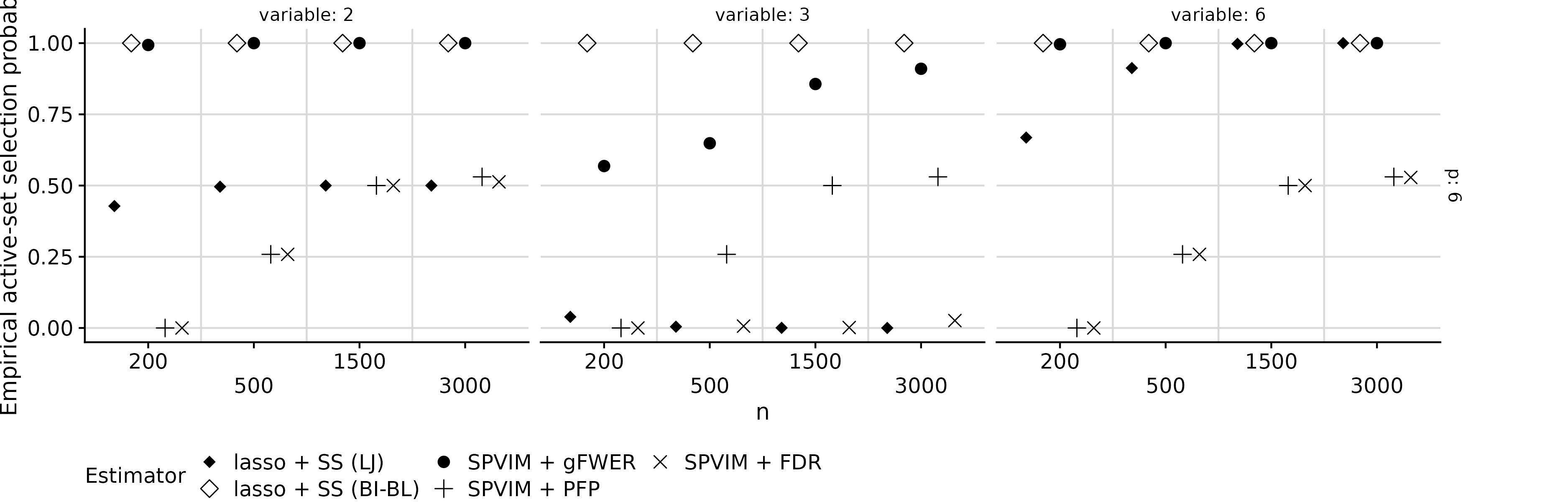}}
  \caption{Empirical selection probability for each active-set variable vs $n$ for each estimator, in Scenario 7 (a weak linear model for the outcome and correlated normal features).}
  \label{fig:scenario-7-probs-40}
\end{figure}

\begin{figure}
  \centering
  \includegraphics[width=1\textwidth]{{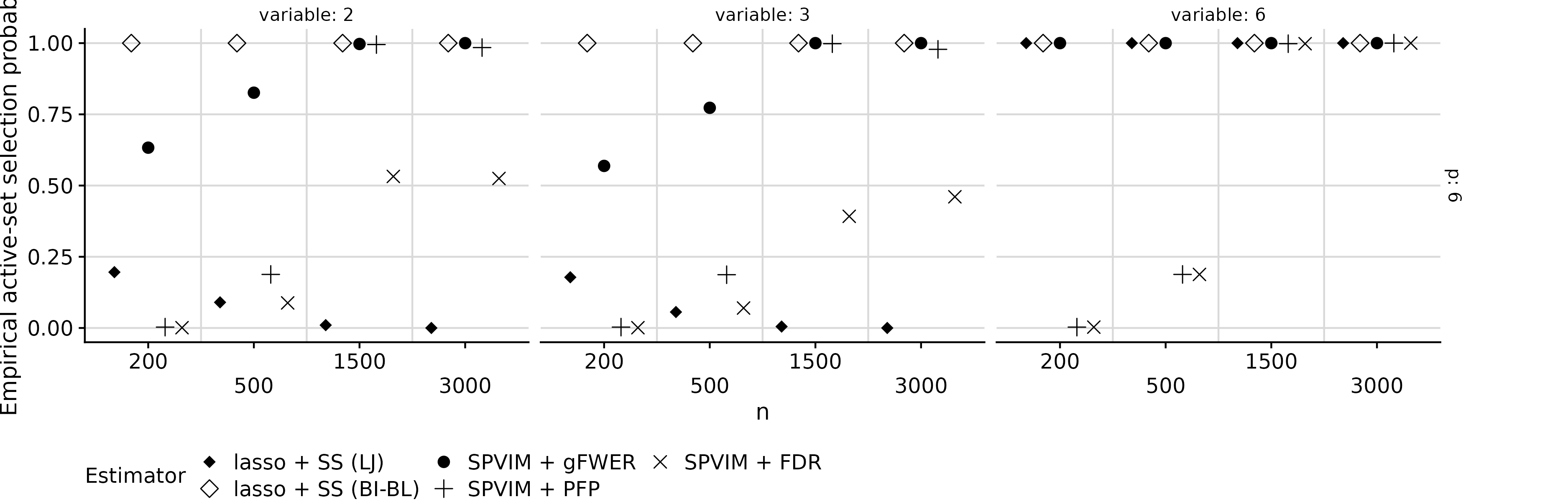}}
  \caption{Empirical selection probability for each active-set variable vs $n$ for each estimator, in Scenario 8 (a weak nonlinear model for the outcome and normal features).}
  \label{fig:scenario-8-probs-20}
\end{figure}

\begin{figure}
  \centering
  \includegraphics[width=1\textwidth]{{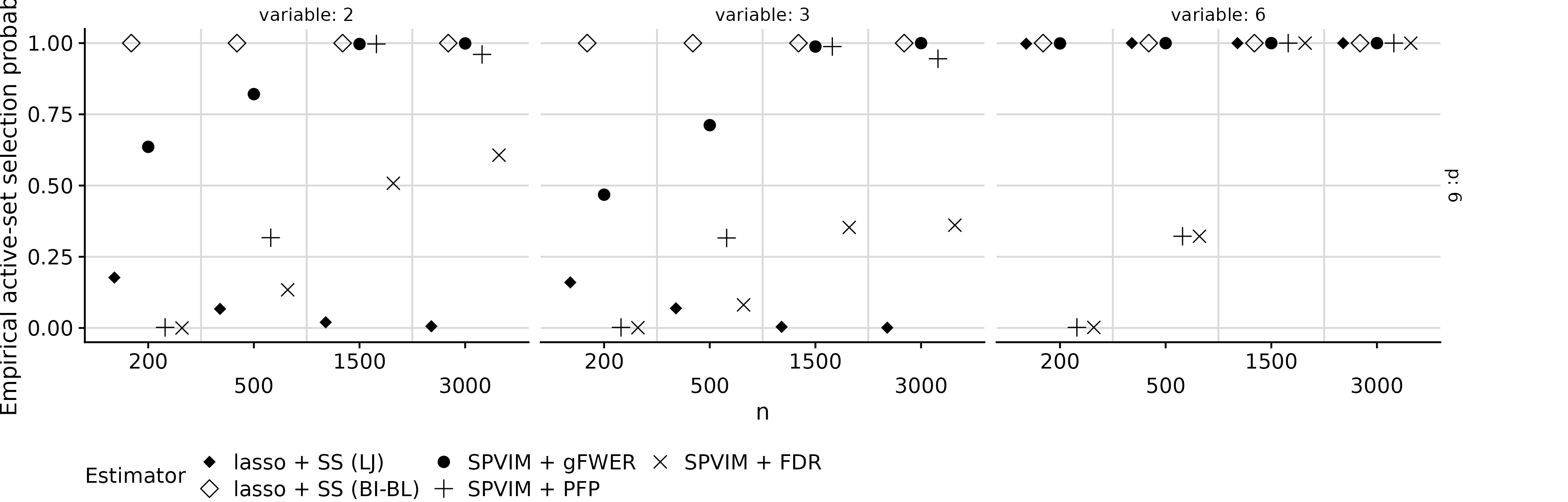}}
  \caption{Empirical selection probability for each active-set variable vs $n$ for each estimator, in Scenario 8 (a weak nonlinear model for the outcome and normal features).}
  \label{fig:scenario-8-probs-40}
\end{figure}

\subsection{Results with completely-observed data}

Here, we consider Scenarios 1--8 with completely-observed data. We compare our intrinsic selection algorithm to the lasso, the lasso with stability selection, and the lasso with knockoffs; these latter three algorithms are often used in variable selection analyses with fully-observed data. In Figures~\ref{fig:scenario-1-select-supp-cc}--\ref{fig:scenario-8-select-supp-cc}, we present the results of these experiments. The results tend to be similar to the results with missing data: when a linear outcome regression model is correctly specified, our intrinsic procedure tends to perform as well as the lasso-based procedures; when the linear outcome regression model is misspecified, our gFWER-controlling procedure tends to perform better than the lasso-based procedures. In settings with more weakly important variables, our intrinsic procedures continue to perform well. We present the proportion of replications where each variable was selected in Figures ~\ref{fig:scenario-1-probs-0}--\ref{fig:scenario-8-probs-0}, again observing similar trends to the missing-data cases.

\begin{figure}
  \centering
  \includegraphics[width=1\textwidth]{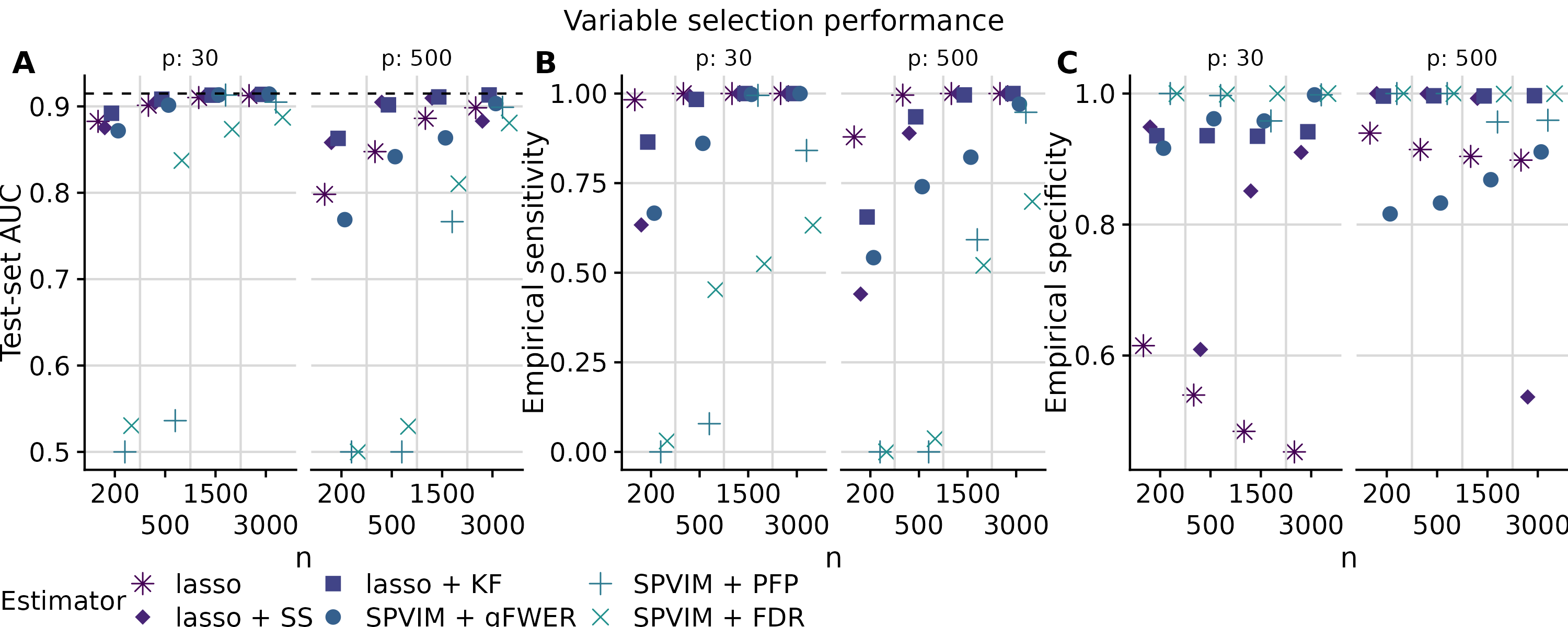}
  \caption{Test-set AUC (panel A) and empirical variable selection sensitivity (panel B) and specificity (panel C) vs $n$ for each estimator and missing data proportion equal to 0, in Scenario 1 (a linear model for the outcome and multivariate normal features). The dotted line in panel A shows the true (optimal) test-set AUC.}
  \label{fig:scenario-1-select-supp-cc}
\end{figure}

\begin{figure}
  \centering
  \includegraphics[width=1\textwidth]{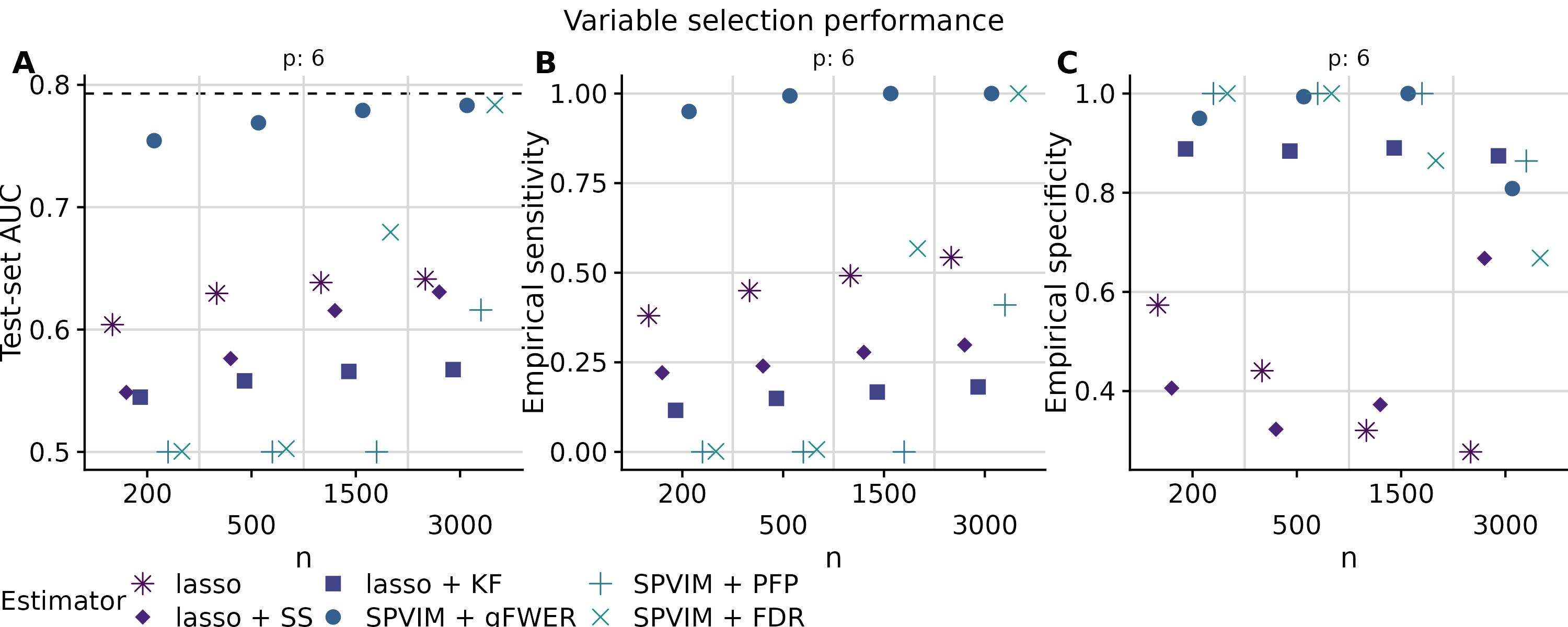}
  \caption{Test-set AUC (panel A) and empirical variable selection sensitivity (panel B) and specificity (panel C) vs $n$ for each estimator and missing data proportion equal to 0, in Scenario 2 (a nonlinear model for the outcome and correlated multivariate normal features), when the data are completely observed. The dotted line in panel A shows the true (optimal) test-set AUC.}
  \label{fig:scenario-2-select-supp-cc}
\end{figure}

\begin{figure}
  \centering
  \includegraphics[width=1\textwidth]{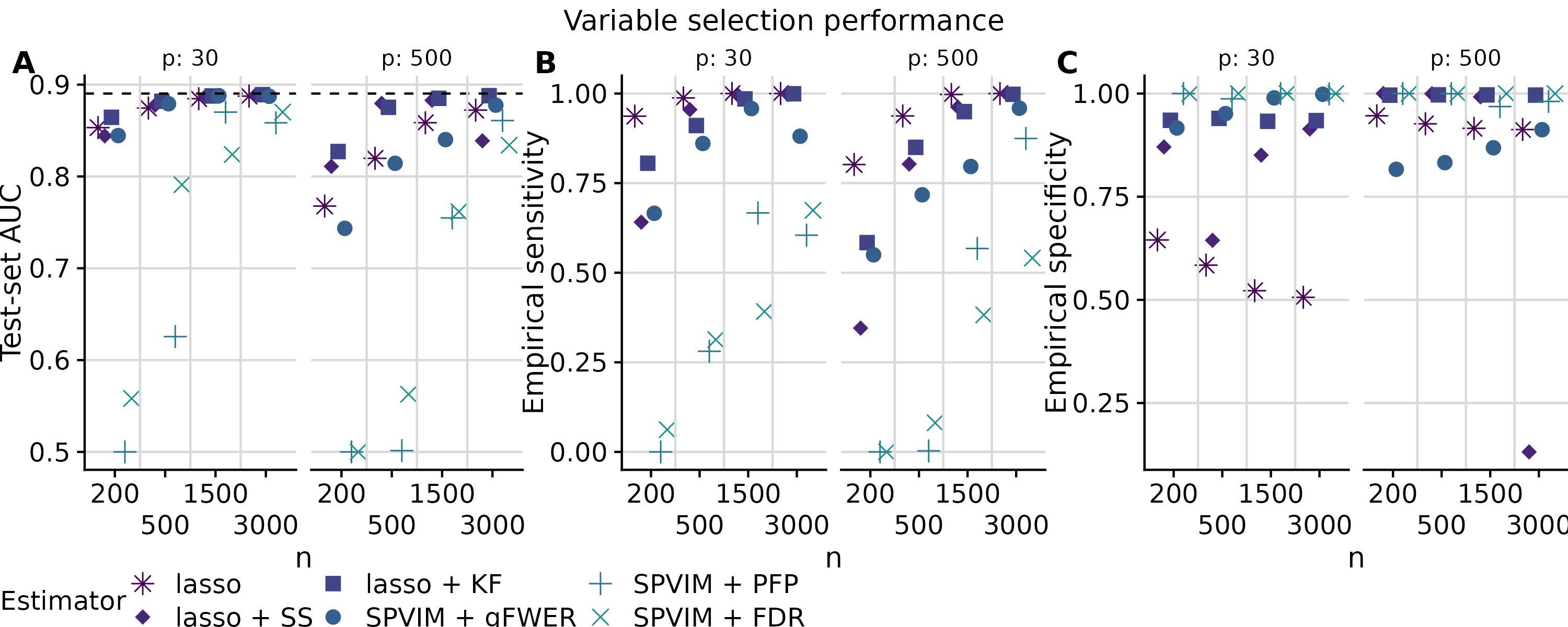}
  \caption{Test-set AUC (panel A) and empirical variable selection sensitivity (panel B) and specificity (panel C) vs $n$ for each estimator and missing data proportion, in Scenario 3 (a linear model for the outcome and nonnormal features), when the data are completely observed. The dotted line in panel A shows the true (optimal) test-set AUC.}
  \label{fig:scenario-3-select-supp-cc}
\end{figure}

\begin{figure}
  \centering
  \includegraphics[width=1\textwidth]{{plots/binomial-probit-nonlinear-normal-nested_select-perf}.png}
  \caption{Test-set AUC (panel A) and empirical variable selection sensitivity (panel B) and specificity (panel C) vs $n$ for each estimator and missing data proportion, in Scenario 4 (a nonlinear model for the outcome and normal features), when the data are completely observed. The dotted line in panel A shows the true (optimal) test-set AUC.}
  \label{fig:scenario-4-select-supp-cc}
\end{figure}

\begin{figure}
  \centering
  \includegraphics[width=1\textwidth]{{plots/binomial-probit-nonlinear-nonnormal-nested_select-perf}.png}
  \caption{Test-set AUC (panel A) and empirical variable selection sensitivity (panel B) and specificity (panel C) vs $n$ for each estimator and missing data proportion, in Scenario 5 (a nonlinear model for the outcome and nonnormal features), when the data are completely observed. The dotted line in panel A shows the true (optimal) test-set AUC.}
  \label{fig:scenario-5-select-supp-cc}
\end{figure}

\begin{figure}
  \centering
  \includegraphics[width=1\textwidth]{{plots/linear-normal-uncorrelated_select-perf}.png}
  \caption{Test-set AUC (panel A) and empirical variable selection sensitivity (panel B) and specificity (panel C) vs $n$ for each estimator and missing data proportion, in Scenario 6 (a weak linear model for the outcome and normal features), when the data are completely observed. The dotted line in panel A shows the true (optimal) test-set AUC.}
  \label{fig:scenario-6-select-supp-cc}
\end{figure}

\begin{figure}
  \centering
  \includegraphics[width=1\textwidth]{{plots/linear-normal-correlated_select-perf}.png}
  \caption{Test-set AUC (panel A) and empirical variable selection sensitivity (panel B) and specificity (panel C) vs $n$ for each estimator and missing data proportion, in Scenario 7 (a weak nonlinear model for the outcome and correlated normal features), when the data are completely observed. The dotted line in panel A shows the true (optimal) test-set AUC.}
  \label{fig:scenario-7-select-supp-cc}
\end{figure}

\begin{figure}
  \centering
  \includegraphics[width=1\textwidth]{{plots/nonlinear-normal-weak-uncorrelated_select-perf}.png}
  \caption{Test-set AUC (panel A) and empirical variable selection sensitivity (panel B) and specificity (panel C) vs $n$ for each estimator and missing data proportion, in Scenario 8 (a weak nonlinear model for the outcome and normal features), when the data are completely observed. The dotted line in panel A shows the true (optimal) test-set AUC.}
  \label{fig:scenario-8-select-supp-cc}
\end{figure}

\begin{figure}
  \centering
  \includegraphics[width=1\textwidth]{{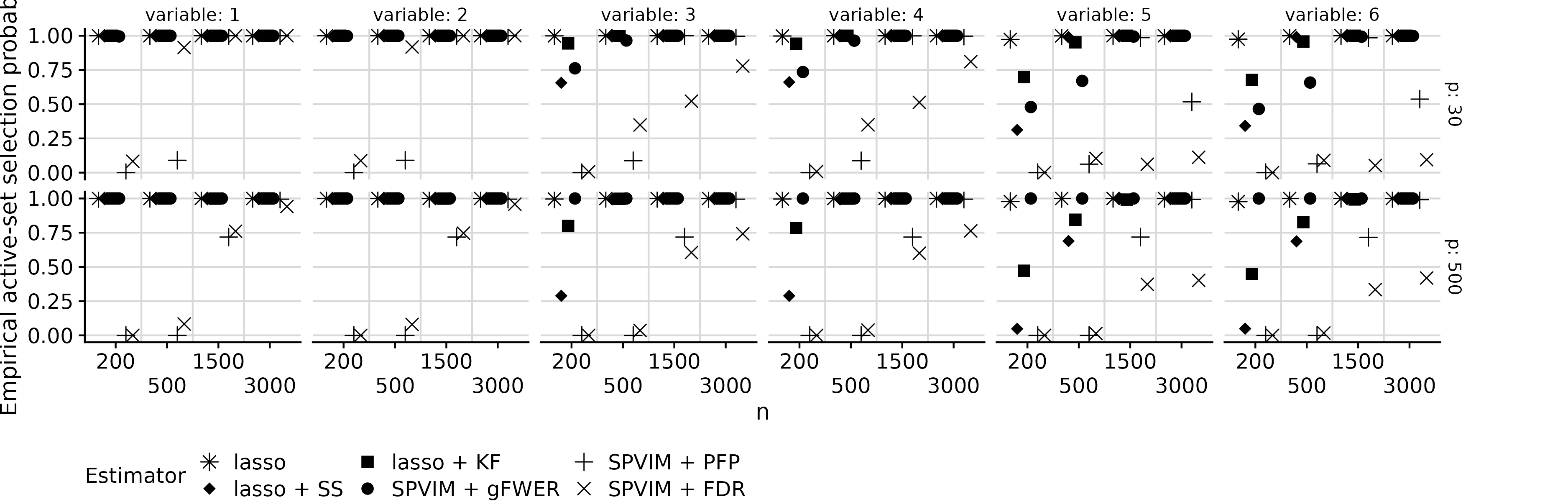}}
  \caption{Empirical selection probability for each active-set variable vs $n$ for each estimator and dimension with missing data proportion equal to 0, in Scenario 1 (a linear model for the outcome and multivariate normal features), when the data are completely observed.}
  \label{fig:scenario-1-probs-0}
\end{figure}

\begin{figure}
  \centering
  \includegraphics[width=1\textwidth]{{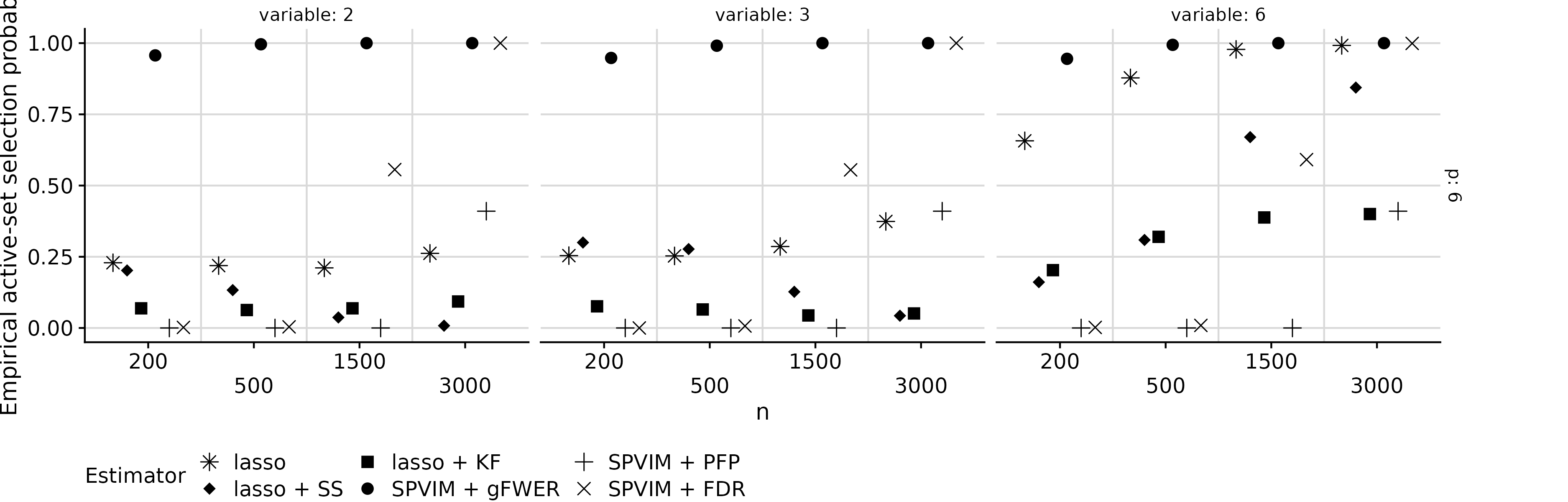}}
  \caption{Empirical selection probability for each active-set variable vs $n$ for each estimator, in Scenario 2 (a nonlinear model for the outcome and correlated multivariate normal features), when the data are completely observed.}
  \label{fig:scenario-2-probs-0}
\end{figure}

\begin{figure}
  \centering
  \includegraphics[width=1\textwidth]{{plots/binomial-probit-linear-normal-nested_select-props_0.png}}
  \caption{Empirical selection probability for each active-set variable vs $n$ for each estimator and dimension with missing data proportion equal to 0, in Scenario 3 (a linear model for the outcome and nonnormal features), when the data are completely observed.}
  \label{fig:scenario-3-probs-0}
\end{figure}

\begin{figure}
  \centering
  \includegraphics[width=1\textwidth]{{plots/binomial-probit-linear-normal-nested_select-props_0.png}}
  \caption{Empirical selection probability for each active-set variable vs $n$ for each estimator and dimension with missing data proportion equal to 0, in Scenario 4 (a nonlinear model for the outcome and multivariate normal features), when the data are completely observed.}
  \label{fig:scenario-4-probs-0}
\end{figure}

\begin{figure}
  \centering
  \includegraphics[width=1\textwidth]{{plots/binomial-probit-linear-normal-nested_select-props_0.png}}
  \caption{Empirical selection probability for each active-set variable vs $n$ for each estimator and dimension with missing data proportion equal to 0, in Scenario 5 (a nonlinear model for the outcome and nonnormal features), when the data are completely observed.}
  \label{fig:scenario-5-probs-0}
\end{figure}

\begin{figure}
  \centering
  \includegraphics[width=1\textwidth]{{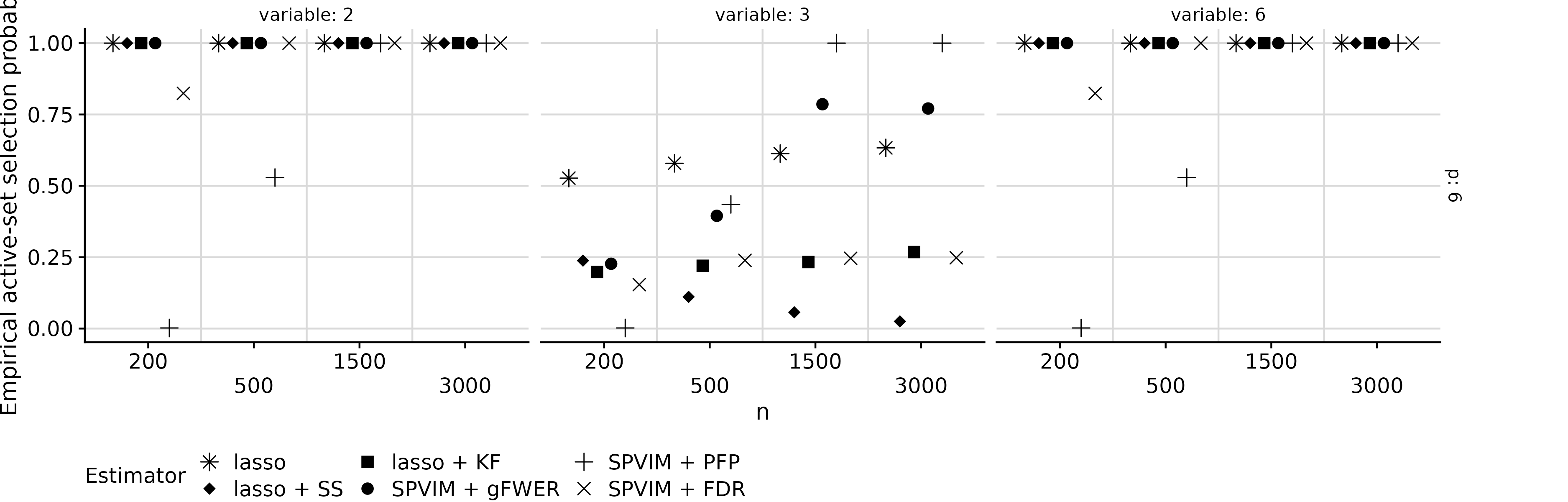}}
  \caption{Empirical selection probability for each active-set variable vs $n$ for each estimator, in Scenario 6 (a weak linear model for the outcome and normal features), when the data are completely observed.}
  \label{fig:scenario-6-probs-0}
\end{figure}

\begin{figure}
  \centering
  \includegraphics[width=1\textwidth]{{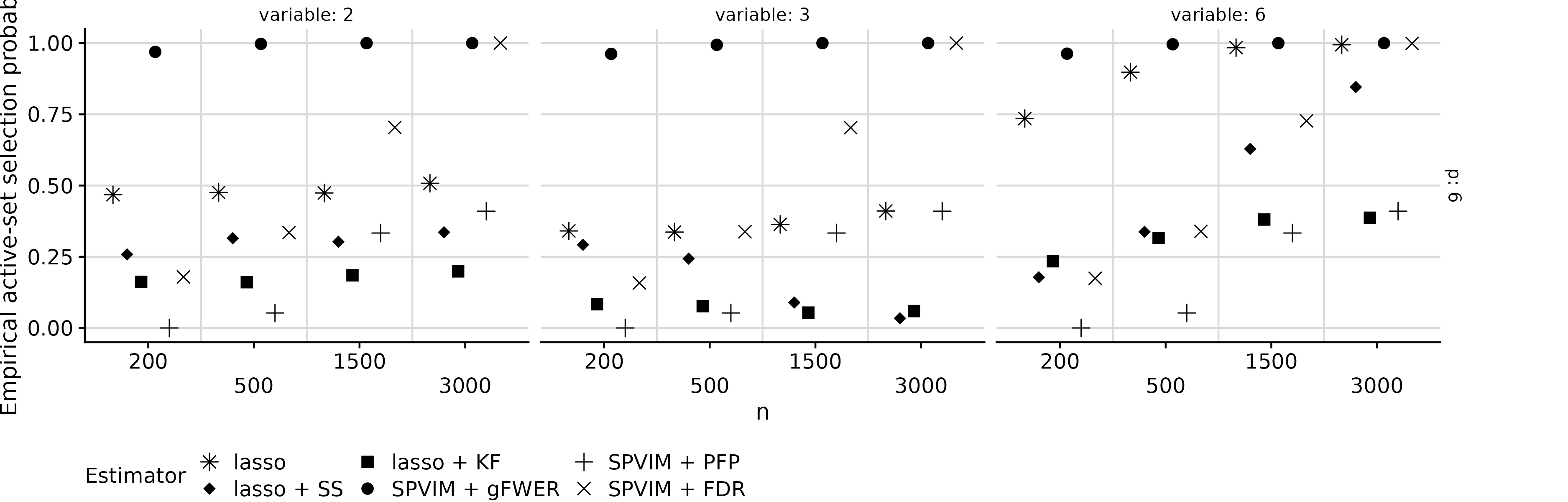}}
  \caption{Empirical selection probability for each active-set variable vs $n$ for each estimator, in Scenario 7 (a weak linear model for the outcome and correlated normal features), when the data are completely observed.}
  \label{fig:scenario-7-probs-0}
\end{figure}

\begin{figure}
  \centering
  \includegraphics[width=1\textwidth]{{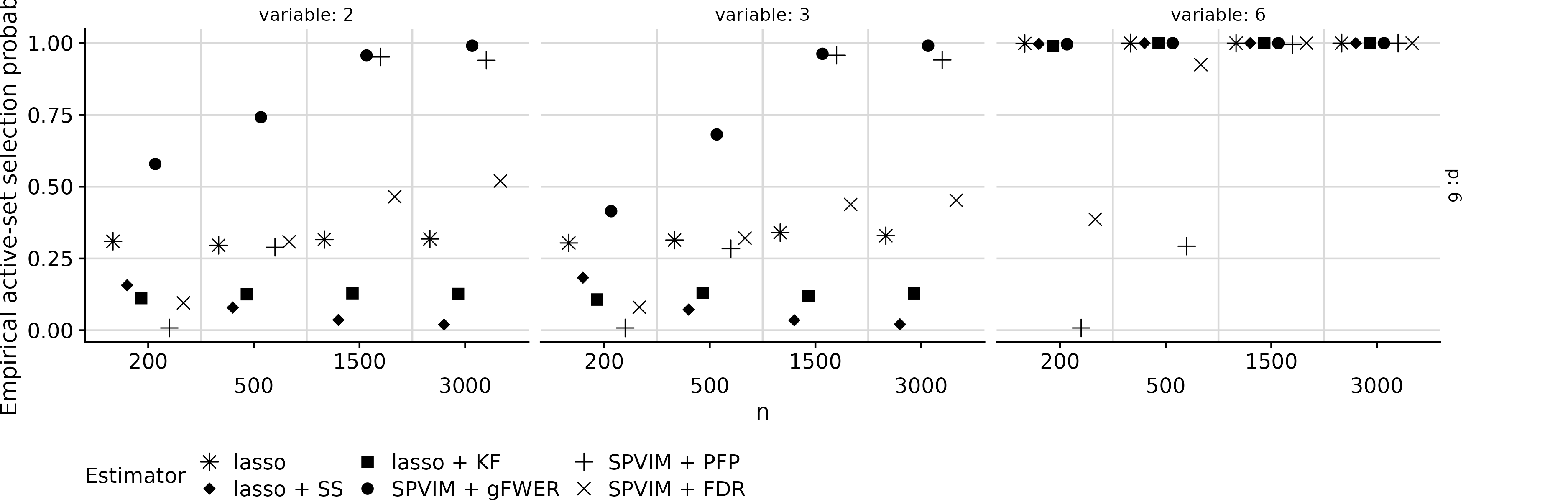}}
  \caption{Empirical selection probability for each active-set variable vs $n$ for each estimator, in Scenario 8 (a weak nonlinear model for the outcome and normal features), when the data are completely observed.}
  \label{fig:scenario-8-probs-0}
\end{figure}

\subsection{Summary of results from Scenarios 1--8}

Taken together, these results suggest that (a) as the missing data proportion increases, performance of all procedures tends to degrade; (b) the outcome distribution (linear vs nonlinear) appears to have a larger effect on test-set AUC than the covariate distribution (normal vs nonnormal); (c) weakly important variables are less likely to be selected by lasso-based procedures than strongly important variables; and (d) correlation causes further degradation in performance for lasso-based methods. Variable selection performance (sensitivity and specificity) is similar asymptotically across Scenarios 1 and 3--5. This last finding is surprising, since the variable selection performance of the lasso is not guaranteed in misspecified settings. However, as we saw in Scenarios 2, 7, and 8, in adversarial cases the lasso-based estimators can have poor variable selection performance, as suggested by theory. Additionally, in the plots describing empirical selection probability for lasso-based estimators, we saw that while lasso-based procedures may have good overall selection performance, some important variables may still be missed, even in the non-adversarial settings. In contrast, our intrinsic variable selection procedure is more robust to model misspecification. Finally, we saw that our proposal performs comparably to commonly used variable selection procedures in settings both with and without missing data when lasso-based estimators are correctly specified.

\section{Additional details for the pancreatic cancer analysis}

We had two overall objectives:
\begin{enumerate}
  \item separate mucinous cysts from non-mucinous cysts, where a mucinous cyst is thought to have some malignant potential; and
  \item separate cysts with high maglinant potential from cysts with low or no malignant potential.
\end{enumerate}
To meet these objectives, we want to assess both individual biomarkers and panels of biomarkers, both using continuous markers and binary calls.

\subsection{Data preprocessing}

To create analysis data from the raw data, we selected the following variables: participant ID, institution, the entire set of continuous biomarkers and binary calls (listed in Table~\ref{tab:biomarker_definitions}). The proportion of missing data in the biomarkers ranged from a minimum of 24.5\% to a maximum of 68.3\%; the median proportion of missing data was 31\%.

\begin{table}
    \caption{All biomarkers of interest for the pancreatic cancer analysis.}
    \label{tab:biomarker_definitions}
    \begin{tabular}{l|l}
        Biomarker & Description \\
        \hline
        CEA & Carcinoembryonic antigen. Serum levels may be elevated in some types \\
        & of cancer (e.g., colorectal cancer, pancreatic cancer). \\
        CEA mucinous call & Binary indicator of whether $\text{CEA} > 192$. \\
        ACTB & Actin Beta \citep{hata2017}\\
        Molecules score & Methylated DNA levels of selected genes \citep{hata2017}\\
        Molecules neoplasia call & Binary indicator of whether molecules score $> 25$\\
        Telomerase score & Telomerase activity measured using \\
        &  telomere repeat amplification protocol \citep{hata2016}\\
        Telomerase neoplasia call & Binary indicator of whether telomerase score $> 730$\\
        AREG score & Amphiregulin (AREG) overexpression \citep{tun2012}\\
        AREG mucinous call & Binary indicator of whether AREG score $> 112$\\
        Glucose score & Glucometer glucose level \citep{zikos2015}\\
        Glucose mucinous call & Binary indicator of whether glucose score $< 50$\\
        Combined mucinuous call & Binary indicator of whether AREG score $> 112$ and \\
        & glucose score $< 50$\\
        Fluorescence score & Fluorescent protease activity \citep{ivry2017}\\
        Fluorescence mucinuous call & Binary indicator of whether fluorescence score $> 1.23$\\
        DNA mucinous call & Presence of mutations in a DNA sequencing panel \citep{singhi2018}\\
        DNA neoplasia call (v1) & Binary indicator of methylated DNA levels of selected genes being \\
        & above a threshold \citep{majumder2019}\\
        DNA neoplasia call (v2) & Binary indicator of methylated DNA levels of selected genes being \\
        & above a threshold \citep{majumder2019}\\
        MUC3AC score & Expression of protein Mucin 3AC\\
        MUC5AC score & Expression of protein Mucin 5AC \citep{cao2013}\\
        Ab score & Monoclonal antibody reactivity \citep{das2014}\\
        Ab neoplasia call & Binary indicator of whether Ab score $> 0.104$\\
    \end{tabular}
\end{table}

\subsection{Imputing missing data}

Our analyses are all based on multiple imputation via chained equations \citep[MICE, implemented in the R package \texttt{mice};][]{vanbuuren2007,vanbuuren2010}. For $i = 1, \ldots, n$ and $j = 1, \ldots, r$ (where $n = 321$ is the sample size and $r = 21$ denotes the total number of biomarkers), we denote the $i$th measurement of biomarker $j$ by $X_{ij}$ and the outcome of interest by $Y_i$. We used the following model to impute missing biomarker values:
\begin{align*}
  X_{i, j, \text{mis}} &\sim Y_i + X_{i, j, \text{obs}} + \text{Institution}_i.
\end{align*}
These models allow us to relate observed biomarker values (and the institution at which each specimen was collected) to the unobserved biomarker values. All imputations were performed using a maximum of 20 iterations and predictive mean matching \citep[PMM;][]{vanbuuren2010} to create 10 fully-imputed datasets. In some cases, the PMM algorithm failed to converge; in these cases, we used tree-based imputation.

\subsection{Variable selection procedures}

We use the same variable selection procedures as in the main manuscript: stability selection within bootstrap imputation (denoted by lasso + SS (LJ)) or bootstrap imputation with bolasso for variable selection (denoted by lasso + SS (BI-BL)), with final predictions made using logistic regression; and intrinsic selection designed to control the gFWER, PFP, and FDR, both with and without using Rubin's Rules via Lemma~\ref{lem:mi-normality} (denoted SPVIM + \{gFWER, PFP, FDR\}, respectively), with final predictions made using the Super Learner, with library described in Table~\ref{tab:sl-algs-data-analysis}. We based tuning parameter selection on a similar setting from the simulations: in this case, the sample size is 321 and there are 21 biomarkers, so we set $k = 5$, $q = 0.8$, the number of variables selected in each bootstrap run of stability selection equal to 9 (based on a target per-family error rate of $p (0.04)$ and threshold of 0.9).

\subsection{Assessing prediction performance}

Assessing prediction performance is complicated by both the imputation step and the initial variable selection step. To address this, we performed imputation within cross-fitting within Monte-Carlo sampling; this provides an unbiased assessment of the entire procedure, from imputation to variable selection to prediction. More specifically, for each of 100 replicates and each outcome, we performed the procedure outlined in Algorithm~\ref{alg:cf-impute-within-mc-pool}.

\vspace{.1in}
\begin{algorithm}
\caption{Imputation and pooled variable selection within cross-fitting and Monte-Carlo sampling}
\label{alg:cf-impute-within-mc-pool}
\begin{algorithmic}[1]
\vspace{.1in}
  \For{$b = 1, \ldots, 50$}
    \State generate a random vector $B_n \in \{1, \ldots, 5\}^n$ by sampling uniformly from $\{1, \ldots, 5\}$ with replacement, and for each $v \in \{1, \ldots, 5\}$, denote by $D_v$ the data with index in $\{i: B_{n,i} = v\}$;
    \For{$v = 1, \ldots, 5$}
      \State if using a bootstrap imputation-based procedure, create 100 bootstrap datasets based on the data in $\cup_{j \neq v}D_j$ and a single imputed dataset for each;
      \State create 10 imputed datasets $\{Z_{k,-v}\}_{k=1}^{10}$ based on the data in $\cup_{j \neq v}D_j$ using MICE;
      \State create 10 imputed datasets $\{Z_{k,v}\}_{k=1}^{10}$ based on the data in $D_v$ using MICE;
      \State apply the chosen variable selection procedure on the training data, resulting in a final set of selected variables $S_v$;
      \For{$k = 1, \ldots, 10$}
        \State train the chosen prediction algorithm on the training data $Z_{k,-v}$ using only variables in $S_v$;
        \State obtain $\text{AUC}_{k,v}$ and its associated variance $\text{var(AUC)}_{k,v}$ by predicting on the withheld test data $Z_{k,v}$ and measure prediction performance using AUC;
      \EndFor
      \State combine the AUCs and associated variance estimators into $\text{AUC}_v$ and $\text{var(AUC)}_v$ using Rubin's rules;
    \EndFor
    \State compute $\text{CV-AUC}_b = \frac{1}{5}\sum_{v=1}^v \text{AUC}_v$ and $\text{var(CV-AUC)}_b = \frac{1}{5}\sum_{v=1}^v \text{var(AUC)}_v$;
  \EndFor
  \State compute overall performance by averaging over the Monte-Carlo iterations.
\vspace{.1in}
\end{algorithmic}
\end{algorithm}

\subsection{Obtaining a final set of selected biomarkers}

We obtain a final set of selected biomarkers by applying the variable selection procedure to the full set of observations for each imputed dataset.

\subsection{Super Learner specification}

As in the simulations, we used a different specification for the internal Super Learner in the intrinsic selection procedure (max. depth 4 boosted trees (all tuning parameter values in Table~\ref{tab:sl-algs-data-analysis}) with pre-screening via univariate rank correlation with the outcome) and all other Super Learners (Table~\ref{tab:sl-algs-data-analysis}). In all cases, the final Super Learner fit for prediction performance of the selected set of variables used the candidate learners in Table~\ref{tab:sl-algs-data-analysis}.

\begin{table}
    \centering
    \begin{tabular}{c|ccc}
       Candidate Learner & R & Tuning Parameter & Tuning parameter  \\
       & Implementation & and possible values & description\\ \hline
        Random forests & \texttt{ranger} & \texttt{max.depth}  & Maximum tree depth \\
        &  & $\in\{1, 10, 20, 30, 100, \infty\}$ & \\ \hline
        Gradient boosted & \texttt{xgboost} & \texttt{max.depth} $= \{4\}$ &  Maximum tree depth\\
        trees & & \texttt{nrounds} $\in\{100, 500, 2000\}$ & Number of boosting \\
        & & & iterations \\ \hline
        Elastic net & \texttt{glmnet} & mixing parameter $\alpha$ & Trade-off between  \\
        &  & $\in \{0, \frac{1}{4}, \frac{1}{2}, \frac{3}{4}, 1\}$ & $\ell_1$ and $\ell_2$ regularization${}^{\ddagger}$ \\ \hline
    \end{tabular}
    \caption{Candidate learners in the Super Learner ensemble for the pancreatic cyst data analysis along with their R implementation, tuning parameter values, and description of the tuning parameters. All tuning parameters besides those listed here are set to their default values. In particular, the random forests are grown with \texttt{mtry} $=\sqrt{p}^{\dagger}$, a minimum node size of 5 for continuous outcomes and 1 for binary outcomes, and a subsampling fraction of 1; the boosted trees are grown with shrinkage rate of 0.1 and a minimum of 10 observations per node; and the $\ell_1$ tuning parameter for the elastic net is determined via 10-fold cross-validation. \\
    ${}^{\dagger}$: $p$ denotes the total number of predictors. }
    \label{tab:sl-algs-data-analysis}
\end{table}

\section{Additional results from the pancreatic cyst analysis}

In the main manuscript, we performed an analysis with goal of predicting whether a cyst was mucinous, using Algorithm~\ref{alg:cf-impute-within-mc-pool} to assess prediction performance. In Table~\ref{tab:mucinous}, we present the biomarkers selected using each procedure. Here, we show results using this same algorithm for the outcome of whether a cyst has high malignancy potential.

We present the results of our analysis in Figure~\ref{fig:high_malignancy} and Table~\ref{tab:high_malignancy}. In Figure~\ref{fig:high_malignancy}, we see that the PFP- and FDR-controlling intrinsic selection procedures again select no variables, on average, as we saw in the analysis of the mucinous outcome in the main manuscript. Prediction performance is also poor for the lasso-based estimators. Compared to the mucinous outcome, we observe reduced prediction performance for the gFWER-controlling intrinsic selection procedure, with an estimated cross-validated AUC of 0.803 (95\% confidence interval [0.67, 0.936]). In Table~\ref{tab:high_malignancy}, we display the final set of biomarkers selected by each procedure. Several biomarkers are selected across all two or more procedures that selected any variables on the full dataset. An antibody score was selected across all three procedures. Variables appearing in two or more procedures included an ACTB score, four neoplasia calls (binary variables), a glucose score, a combined amphiregulin- and glucose-based mucinous call, a fluorescence score and its associated mucinous call, and an antibody-based neoplasia call. Selection across the majority of procedures suggests that these variables may useful for predicting whether a cyst has high malignancy potential. 

\begin{figure}
    \centering
    \includegraphics[width=1\textwidth]{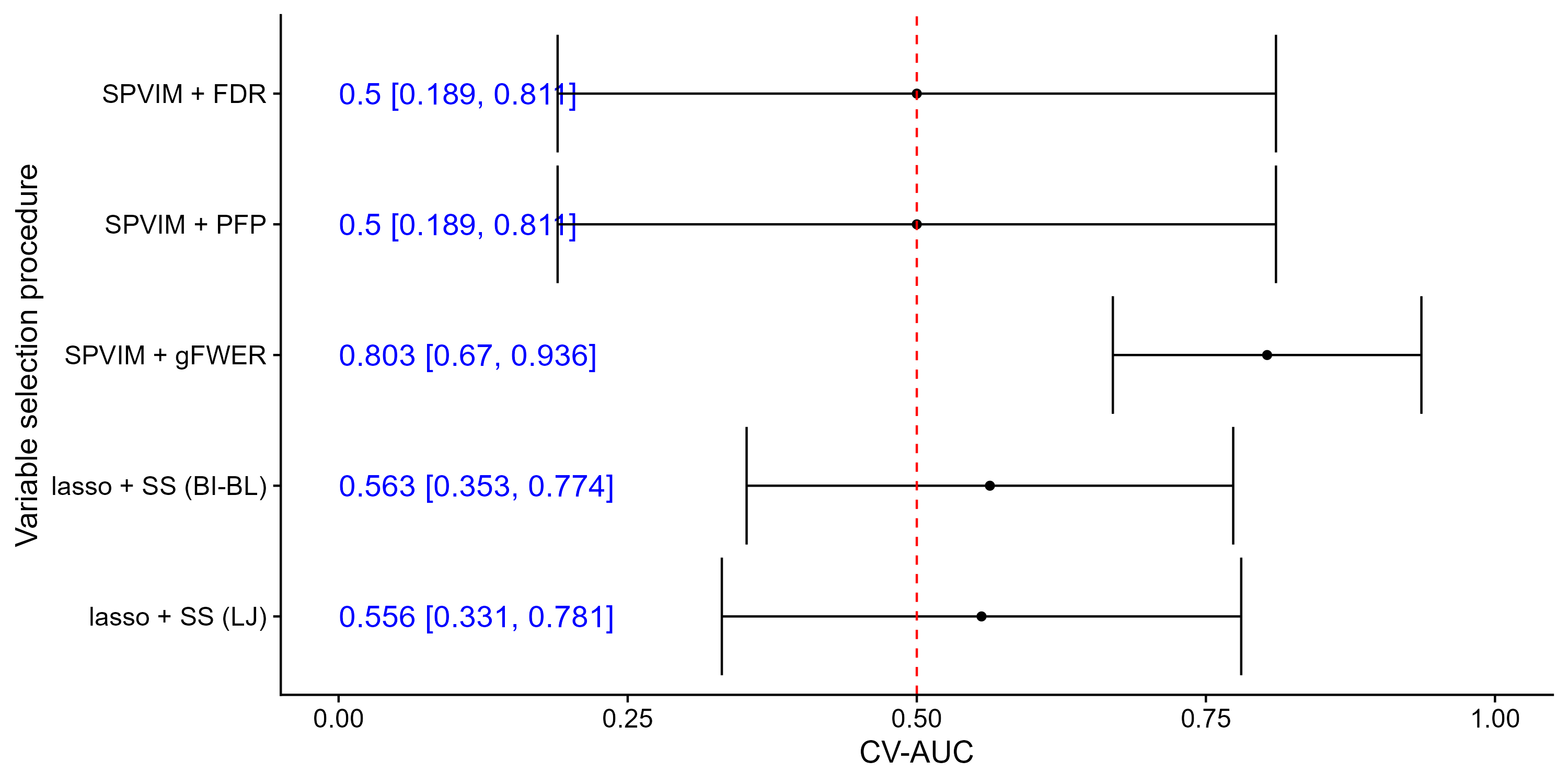}
    \caption{Cross-validated area under the receiver operating characteristic curve (CV-AUC) for predicting whether a cyst has high malignancy potential averaged over 100 replicates of the imputation-within-cross-validated procedure (Algorithm~\ref{alg:cf-impute-within-mc-pool}) for each variable selection algorithm. Prediction performance for lasso-based methods is based on logistic regression on the selected variables, while performance for Super Learner-based methods is based on a Super Learner. Error bars denote 95\% confidence intervals based on the average variance over the 100 replications.}
    \label{fig:high_malignancy}
\end{figure}

\begin{table}

\caption{\label{tab:mucinous}Biomarkers selected by each selection procedure for predicting whether a cyst is mucinous on the full imputed dataset. Full definitions of each variable are provided in the Supplementary Material.}
\centering
\begin{tabular}[t]{>{\raggedright\arraybackslash}p{10em}|l|>{\raggedright\arraybackslash}p{3em}|>{\raggedright\arraybackslash}p{3em}|r|>{}p{3em}|>{}p{3em}|>{}p{3em}|>{}p{5em}}
\hline
Biomarker & lasso + SS (LJ) & lasso + SS (BI-BL) & SPVIM + gFWER & Number of procedures\\
\hline
CEA & No & Yes & No & 1\\
\hline
CEA mucinous call & No & Yes & No & 1\\
\hline
ACTB & No & Yes & No & 1\\
\hline
Molecules (M) score & No & Yes & No & 1\\
\hline
M neoplasia call & No & Yes & Yes & 2\\
\hline
Telomerase (T) score & No & Yes & No & 1\\
\hline
T neoplasia call & No & Yes & No & 1\\
\hline
AREG (A) score & Yes & Yes & Yes & 3\\
\hline
A mucinous call & No & Yes & No & 1\\
\hline
Glucose (G) score & No & Yes & Yes & 2\\
\hline
G mucinous call & Yes & Yes & Yes & 3\\
\hline
A and G mucinous call & Yes & Yes & Yes & 3\\
\hline
Fluorescence (F) score & Yes & Yes & Yes & 3\\
\hline
F mucinous call & No & Yes & Yes & 2\\
\hline
DNA mucinous call & No & Yes & No & 1\\
\hline
DNA neoplasia call (v1) & No & Yes & No & 1\\
\hline
DNA neoplasia call (v2) & No & Yes & Yes & 2\\
\hline
MUC3AC score & Yes & Yes & Yes & 3\\
\hline
MUC5AC score & No & Yes & No & 1\\
\hline
Ab score & No & Yes & No & 1\\
\hline
Ab neoplasia call & No & Yes & Yes & 2\\
\hline
\end{tabular}
\end{table}
\begin{table}

\caption{\label{tab:high_malignancy}Biomarkers selected by each selection procedure for predicting whether a cyst has high malignancy potential on the full imputed dataset. Full definitions of each variable are provided in Table~\ref{tab:biomarker_definitions}.}
\centering
\begin{tabular}[t]{>{\raggedright\arraybackslash}p{10em}|l|>{\raggedright\arraybackslash}p{3em}|>{\raggedright\arraybackslash}p{3em}|r|>{}p{3em}|>{}p{3em}|>{}p{3em}|>{}p{5em}}
\hline
Biomarker & lasso + SS (LJ) & lasso + SS (BI-BL) & SPVIM + gFWER & Number of procedures\\
\hline
CEA & No & Yes & No & 1\\
\hline
CEA mucinous call & No & Yes & No & 1\\
\hline
ACTB & No & Yes & Yes & 2\\
\hline
Molecules (M) score & No & Yes & No & 1\\
\hline
M neoplasia call & No & Yes & Yes & 2\\
\hline
Telomerase (T) score & No & Yes & No & 1\\
\hline
T neoplasia call & Yes & Yes & No & 2\\
\hline
AREG (A) score & No & Yes & No & 1\\
\hline
A mucinous call & No & Yes & No & 1\\
\hline
Glucose (G) score & No & Yes & Yes & 2\\
\hline
G mucinous call & No & Yes & No & 1\\
\hline
A and G mucinous call & No & Yes & Yes & 2\\
\hline
Fluorescence (F) score & No & Yes & Yes & 2\\
\hline
F mucinous call & No & Yes & Yes & 2\\
\hline
DNA mucinous call & No & Yes & No & 1\\
\hline
DNA neoplasia call (v1) & No & Yes & Yes & 2\\
\hline
DNA neoplasia call (v2) & No & Yes & Yes & 2\\
\hline
MUC3AC score & No & Yes & No & 1\\
\hline
MUC5AC score & No & Yes & No & 1\\
\hline
Ab score & Yes & Yes & Yes & 3\\
\hline
Ab neoplasia call & No & Yes & Yes & 2\\
\hline
\end{tabular}
\end{table}

\end{document}